\def\showauthornotes{0}
\def\showkeys{0}
\def\confversion{0}
\def\widemargin{0}
\definecolor{darkred}{rgb}{0.5,0,0}
\definecolor{darkgreen}{rgb}{0,0.35,0}
\definecolor{darkblue}{rgb}{0,0,0.55}
\newcommand{\Authornote}[3]{{\sf\small\color{#3}{[#1: #2]}}}
\newcommand{\Authorcomment}[2]{{\sf \small\color{gray}{[#1: #2]}}}
\newcommand{\Authorfnote}[2]{\footnote{\color{red}{#1: #2}}}
\newcommand{\Authornote}[3]{}
\newcommand{\Authorcomment}[2]{}
\newcommand{\Authorfnote}[2]{}
\declaretheorem[numberwithin=section]{theorem}
\declaretheorem[sibling=theorem]{lemma}
\declaretheorem[sibling=theorem]{claim}
\declaretheorem[sibling=theorem]{corollary}
\theoremstyle{definition}
\declaretheorem[sibling=theorem]{definition}
\declaretheorem[sibling=theorem]{remark}
\newtheorem{algo}[theorem]{Algorithm}
\newenvironment{algorithm}[3]
        {\noindent\begin{boxedminipage}{\textwidth}\begin{algo}[#1]\ \par
        {\begin{tabular}{r l}
        \textbf{Input} & #2\\
        \textbf{Output} & #3
        \end{tabular}\par\enskip}}
        {\end{algo}\end{boxedminipage}}
\def\FullBox{\hbox{\vrule width 6pt height 6pt depth 0pt}}
\def\qed{\ifmmode\qquad\FullBox\else{\unskip\nobreak\hfil
\penalty50\hskip1em\null\nobreak\hfil\FullBox
\parfillskip=0pt\finalhyphendemerits=0\endgraf}\fi}
\def\qedsketch{\ifmmode\Box\else{\unskip\nobreak\hfil
\penalty50\hskip1em\null\nobreak\hfil$\Box$
\parfillskip=0pt\finalhyphendemerits=0\endgraf}\fi}
\def\to{\rightarrow}
\def\eps{\varepsilon}
\def\epsilon{\varepsilon}
\def\eps{\epsilon}
\def\phi{\varphi}
\def\cal{\mathcal}
\def\implies{\Rightarrow}
\newcommand{\sub}{\ensuremath{\subseteq}}
\newcommand{\defeq}{:=}
\renewcommand{\bar}{\overline} 
\newcommand{\given}{\;\ifnum\currentgrouptype=16 \middle\fi \vert\;}
\newcommand{\ie}{i.e.,\xspace}
\newcommand{\etal}{et al.\xspace}
\newcommand{\mper}{\,.}
\newcommand{\mcom}{\,,}
\let\nfrac=\nicefrac
\DeclarePairedDelimiter\parens{\lparen}{\rparen}
\DeclarePairedDelimiter\abs{\lvert}{\rvert}
\DeclarePairedDelimiter\norm{\lVert}{\rVert}
\DeclarePairedDelimiter\braces{\lbrace}{\rbrace}
\DeclarePairedDelimiter\brackets{\lbrack}{\rbrack}
\DeclarePairedDelimiterXPP\lnorm[1]{}\lVert\rVert{_2}{#1}
\newcommand{\inparen}[1]{\left(#1\right)}             
\newcommand{\inbraces}[1]{\left\{#1\right\}}           
\newcommand{\insquare}[1]{\left[#1\right]}             
\newcommand{\one}{{\mathbf{1}}}
\newcommand{\Esymb}{\mathbb{E}}
\newcommand{\Psymb}{\mathbb{P}}
\newcommand{\Varsymb}{\mathrm{Var}}
\DeclareMathOperator*{\ExpOp}{\Esymb}
\def\Pr#1{%
    \ProbabilityRender{\Psymb}{#1}%
}
\def\Ex#1{%
    \ProbabilityRender{\Esymb}{#1}%
}
\def\tildeEx#1{%
    \ProbabilityRender{\widetilde{\Esymb}}{#1}%
}
\def\condPE#1#2{%
	\@ifnextchar\bgroup
	{\ConditionalProbabilityRender{\widetilde{\Esymb}}{#1}{#2}}
	{\ProbabilityRender{\widetilde{\Esymb}}{#1 \given #2}}
}
\def\tildeVar#1{
	\ProbabilityRender{\widetilde{\Varsymb}}{#1}
}
\def\tildeCov#1#2{
	\ProbabilityRender{\tildecov}{#1,#2}
}
\def\ConditionalProbabilityRender#1#2#3#4{
	\renderwithdist{#1}{#2}{#3 \given #4}	
}
\def\ProbabilityRender#1#2{
  \@ifnextchar\bgroup%
  {\renderwithdist{#1}{#2}}
   {\singlervrender{#1}{#2}}
}
\def\singlervrender#1#2{%
   \ensuremath{\mathchoice
       {{#1}\insquare{ #2 }}
       {{#1}\insquare{ #2 }}
       {{#1}\insquare{ #2 }}
       {{#1}\insquare{ #2 }}
   }
}
\def\renderwithdist#1#2#3{%
   \@ifnextchar\bgroup
   {\superfancyrender{#1}{#2}{#3}}
   {\ensuremath{\mathchoice
      {\underset{#2}{#1}\insquare{ #3 }}
      {{#1}_{#2}\insquare{ #3 }}
      {{#1}_{#2}\insquare{ #3 }}
      {{#1}_{#2}\insquare{ #3 }}
     }
   }
}
\def\superfancyrender#1#2#3#4#5{
   \ensuremath{\mathchoice
      {\underset{#1}{{#1}}\left#4 #3 \right#5}
      {{#1}_{#2}#4 #3 #5}
      {{#1}_{#2}#4 #3 #5}
      {{#1}_{#2}#4 #3 #5}
   }
}
\newcommand\restrict[1]{\raisebox{-.5ex}{$|$}_{#1}}
\def\li{{ \ell }}
\def\ri{{ r }}
\def\zee{{ \mathbf{Z} }}
\def\hl{ h_{\ell} }
\newcommand{\hil}[1]{h_{#1,\ell}}
\def\gl{ \erase{g}_{\ell} }
\def\fjl{ f_{j,\ell} }
\def\sl{ s_{\ell} }
\def\Tau{{\mathrm{T}}}
\newcommand{\erase}[1]{\overline{#1}}
\newcommand{\ip}[2] {\ensuremath{\left\langle #1 , #2 \right\rangle}}
\def\fd{{ \mathsf{F} }}
\def\dd{{ \mathsf{D} }}
\def\AELC{{ \calC_{\mathrm{AEL}} }}
\def\AEL{\mathrm{AEL}}
\newcommand{\tildeE}{\widetilde{\mathbb{E}}}
\newcommand{\pcod}[2]{\widetilde{\ExpOp}^{(#1)}\left[ #2 \right]}
\newcommand\inv[1]{#1\raisebox{1.15ex}{$\scriptscriptstyle-1$}}
\newcommand\scale[2]{\vstretch{#1}{\hstretch{#1}{#2}}}
\newcommand\gj{\overline{g\raisebox{0.5ex}{\scale{.65}{(j)}}
}}
\newcommand{\T}{\intercal} 
\def\indi#1{{\one \inbraces{ #1 } }}
\DeclareMathOperator{\tildecov}{\widetilde{\operatorname {Cov}}}
\newcommand{\R}{{\mathbb R}}
\newcommand{\N}{{\mathbb{N}}}
\newcommand{\F}{{\mathbb F}}
\newcommand{\im}{\mathrm{im}}
\newcommand{\inn}{\mathrm{in}}
\newcommand{\out}{\mathrm{out}}
\newcommand{\dec}{\mathrm{dec}}
\newcommand{\dis}{{\Delta}}
\newcommand{\deffont}{\sf}
\newcommand{\calC}{{\cal C}}
\newcommand{\calD}{{\cal D}}
\newcommand{\calH}{{\cal H}}
\newcommand{\calL}{{\cal L}}
\newcommand{\calT}{{\cal T}}
\newcommand{\calS}{{\cal S}}
\newcommand{\cC}{{\cal C}}
\def\cG{\mathcal G}
\def\DECODE{Decode-from-distributions}
\title{Explicit Codes approaching Generalized Singleton Bound using Expanders}
\author{
Fernando Granha Jeronimo\thanks{{\tt University of Illinois Urbana-Champaign}. {\tt granha@illinois.edu}. }
\and
Tushant Mittal\thanks{{\tt Stanford University}. {\tt tushant@stanford.edu}. Partly supported by NSF grant CCF-2326685.}
\and
Shashank Srivastava\thanks{{\tt DIMACS (Rutgers) \& Institute for Advanced Study}. {\tt shashanks@ias.edu}. Partly supported by the NSF grant CCF-2326685.}
\and
Madhur Tulsiani\thanks{{\tt Toyota Technological Institute at Chicago}. {\tt madhurt@ttic.edu}. Supported by the NSF grant CCF-2326685.} 
}
\begin{document}

\date{}
\maketitle

\thispagestyle{empty}

\begin{abstract}
We construct a new family of explicit codes that are list decodable to capacity and achieve an optimal list size of $O(\nfrac{1}{\eps})$. In contrast to existing explicit constructions of codes achieving list decoding capacity, our arguments do not rely on algebraic structure but utilize simple combinatorial properties of expander graphs. 
\medskip

%
Our construction is based on a celebrated distance amplification procedure due to Alon, Edmonds, and Luby [FOCS'95], which transforms any high-rate code into one with near-optimal rate-distance tradeoff. We generalize it to show that the same procedure can be used to transform any high-rate code into one that achieves list decoding capacity. Our proof can be interpreted as a ``local-to-global’' phenomenon for (a slight strengthening of) the generalized Singleton bound.
Using this construction,  for every $R, \eps \in (0,1)$ and $k \in \N^+$, we obtain an \emph{explicit} family of codes $\mathcal{C} \subseteq \Sigma^n$, with rate $R$ such that,\begin{itemize}
	\item They achieve the $\eps$-relaxed generalized Singleton bound: for any $g \in  \Sigma^n$ and any list $\calH$ of at most $k$ codewords, we have, 
\[ \Ex{h \in \calH}{\Delta(g,h)} ~\geq~ \frac{\abs{\calH}-1}{\abs{\calH}} \cdot (1 - R - \eps) \mper
\]

	\item The alphabet size is a constant depending only on $\eps$ and $k$.
	\item They can be list decoded up to radius $\frac{k-1}{k}(1-R-\eps)$, in time $n^{O_{k,\eps}(1)}$.
\end{itemize}
\medskip

As a corollary of our result, we also obtain the first explicit construction of LDPC codes achieving list decoding capacity, and in fact arbitrarily close to the generalized Singleton bound.

\medskip


%
\end{abstract}



\newpage
\pagenumbering{arabic}
\setcounter{page}{1}

\section{Introduction}
An error-correcting code $\calC \subseteq \Sigma^n$ is a collection of strings (codewords) over a finite alphabet $\Sigma$, such that every pair of codewords is well separated. 
Two important parameters associated with a code $\calC$ are its distance $\delta(\calC)$ and rate $\rho(\calC)$, which measure respectively the worst-case error correction capability of a code, and the amount of redundancy present in the code. These are defined as
\[
\delta(\calC) 
~=~ \min_{\underset{f\neq g}{f,g \in \calC}} \frac{1}{n} \cdot \abs{\inbraces{i \in [n] ~|~ f_i \neq g_i}}
\qquad \text{and} \qquad
\rho(\calC) ~\defeq~ \frac{1}{n} \cdot \log_{\abs{\Sigma}}\abs{\calC} \mper
\]

As may be expected, the separation between codewords (distance) involves a tradeoff with the number of codewords (rate). This is captuted by the well-known Singleton bound, (the asymptotic version of) which implies that for any code $\calC$, we must have $\delta(\calC) \leq 1 - \rho(\calC)$. 
The study of the achievable tradeoffs between rate $\rho$ and distance $\delta$ is a fundamental problem in the theory of error-correcting codes, with the tradeoffs given by the Singleton bound being optimal in the regime of large alphabets $\Sigma$. 

Several code families are known to achieve the Singleton bound including the celebrated Reed-Solomon codes~\cite{RS60}, multiplicity codes~\cite{KSY14, Kop15}, and many others~\cite{GRS23}. 
While the above codes achieving the Singleton bound often require an alphabet size growing with $n$, a particularly relevant construction by Alon, Edmonds, and Luby~\cite{AEL95} also yields near-optimal tradeoffs with a \emph{constant} alphabet size. 
In particular, this construction gives $\delta \geq 1 - \rho - \eps$ using an alphabet of size $2^{1/\eps^{O(1)}}$. 

Moreover, the above code constructions with near-optimal distances also enable highly efficient algorithms for the task of unique decoding, which is the task of recovering a codeword uniquely from, say a $\beta$ fraction of adversarial errors. 
For a code with distance $\delta$, we must have $\beta < \delta/2$, and it is indeed possible to achieve this algorithmically using the structure of several code constructions above. 
However, unique recovery is no longer possible when $\beta \geq \delta/2$, since the corruptions may be consistent with more than one codewords.  

\vspace{-5 pt}
\paragraph{List decoding capacity.}
The notion of list decoding defined by Elias~\cite{E57} and Wozencraft~\cite{W58} allows the output to be a small \emph{list} of codewords, when the fraction of errors is higher than the unique-decoding radius of $\delta/2$.
In this setting, both the \emph{combinatorial} question of bounding the number of codewords in the list, and the \emph{algorithmic} one of finding the list, are quite non-trivial and have been the subject of extensive research. Algorithms for list decoding were considered in the foundational works of Guruswami and Sudan~\cite{Sudan97, GS98} who gave the first list decoding algorithms for decoding Reed-Solomon codes up to a threshold known as the Johnson bound, where the list size is known to be bounded for any code.
Since then, there has been tremendous progress on both the combinatorial and algorithmic aspects of list decoding, and we refer the reader to excellent references by Guruswami, Rudra, and Sudan~\cite{Guruswami06, GRS23} for more details.
The constructions and algorithms for list-decodable codes have also had numerous applications in the areas of complexity theory~\cite{Sudan00, Trevisan05} and pseudorandomness~\cite{GUV09, Vadhan12}, leading to several deep and beautiful connections between these areas.

In terms of combinatorial bounds for list decoding \emph{beyond} the Johnson bound, volume arguments can be used to show that for the lists to be subexponential in $n$, we must have $\rho \leq 1 - H_q(\beta)$, where $q = \abs{\Sigma}$ and $H_q(x)$ denotes the $q$-ary entropy function (which approximately equals $x$ for large $q$). This bound is called the list decoding capacity, and is known to be achievable using random codes~\cite{GRS23} and random linear codes~\cite{GHK11}.

A breakthrough result of Guruswami and Rudra~\cite{GR06, GR08}, building on the work of Parvaresh and Vardy~\cite{PV05}, gave the first explicit construction of codes achieving list decoding capacity. 
Their construction was based on folded Reed-Solomon codes, and yielded a list of size $n^{O(1/\eps)}$ up to an error radius of $1 - \rho - \eps$, using an alphabet of size $n^{O(1/\eps^2)}$. 
Since then, there has been a series of exciting works on reducing the list size and alphabet size, combining folded Reed-Solomon codes with pseudorandom objects like \textit{subspace evasive sets}~\cite{Gur11, DL12,GW13}, expanders~\cite{KRZSW18} and \textit{subspace designs}~\cite{GK16, GRZ21}. 
A series of recent works have also led to significantly better bounds on the list sizes for folded Reed-Solomon codes via a deeper understanding of their structure, starting with beautiful work of Kopparty \etal~\cite{KRZSW18,KRSW23} and Tamo~\cite{Tamo24} yielding list sizes $(1/\eps)^{O(1/\eps)}$, followed by a bound of $O(1/\eps^2)$ by Srivastava~\cite{Sri25}, and culminating in an asymptotically optimal bound of $O(1/\eps)$ by Chen and Zhang~\cite{CZ24}.

In addition to folded Reed-Solomon codes and their variants, there has also been several other explicit constructions of codes achieving list decoding capacity using algebraic-geometric codes~\cite{Gur09, GX13, GRZ21, GX22}, multiplicity codes~\cite{Kop15}, polynomial ideal codes~\cite{BHKS23} and subcodes of Reed-Solomon codes~\cite{GX13, BST24}. 
The algebraic constructions above also come with efficient list decoding algorithms using polynomial interpolation methods, which have been a cornerstone of techniques in the area of list decoding. 
However, the explicit construction known to achieve a list size of $O(1/\eps)$ at radii $\eps$-close to list decoding capacity, are folded Reed-solomon and multiplicity codes, via the very recent result of Chen and Zhang~\cite{CZ24}. In fact, they show that these codes which in fact achieve two additional strengthenings of the notion of list decoding, as discussed below.

\vspace{-5 pt}
\paragraph{Average-radius list decoding.}
A stronger form of combinatorial list decoding was considered by Guruswami and Narayanan~\cite{GN14}, which shows that the list size at radius $\beta$ is bounded by $k$ by considering a strengthening of the contrapositive. 
Note that the bound on the list size is true if for any $g \in \Sigma^n$ and $h_1, \ldots, h_{k} \in \calC$, we have $\max_{i \in [k]} \Delta(g,h_i) > \beta$, where $\Delta(g,h)$ denotes the fractional Hamming distance as before.
The notion of average-radius list decoding requires establishing that in fact $\Ex{i \in [k]}{\Delta(g,h_i)} > \beta$ which also bounds the list size for a stronger reason. 

Starting with the beautiful works of Rudra and Wootters~\cite{Wootters13, RW14} which introduced techniques from high-dimensional probability, there have been a sequence of results that random linear codes, and random puncturings of structured codes, can in fact achieve list decoding capacity~\cite{FKS22, GST23, GLSTW24}, with several of the results even applicable for the average-radius list decoding~\cite{BGM23, GZ23, AGL24}.  

\vspace{-5 pt}
\paragraph{Generalized Singleton bounds.}
Shangguan and Tamo~\cite{ST20} (and independently Roth~\cite{Rot24}) considered a generalization of the Singleton bound in the context of combinatorial list decoding for Reed-Solomon codes, showing (in the asymptotic version) that for any code $\calC$, if the list size at error-radius $\beta$ is always bounded by $k-1$, then one must have $\beta \leq \frac{k-1}{k} \cdot (1 - \rho(\calC))$. 
While $k=2$ recovers the Singleton bound on the unique-decoding radius and hence the distance, the above bound also shows that lists at radius $1 - \rho - \eps$ must have size $(1-\rho - \eps)/\eps$. which they conjectured to be optimal.
The generalized Singleton bound also gives a different strengthening of the notion of list decoding, this time with a more fine-grained tradeoff for every list size.

Shangguan and Tamo~\cite{ST20}  also give an explicit construction of a Reed-Solomon code achieving the generalized Singleton bound for $k=3$, with alphabet size $q = 2^{2^n}$. 
Recent works on random puncturings of Reed-Solomon and related codes show that these in fact achieve the \emph{average-distance version} of (a relaxed) generalized Singleton bound, satisfying for any $g \in \Sigma^n$ and $h_1, \ldots, h_{k} \in \calC$,
\[
\Ex{i \in [k]}{\Delta(g,h_i)} ~\geq~ \frac{k-1}{k} \cdot (1 - \rho - \eps) \mper
\]
In a sequence of works, the alphabet size requirement for the above random constructions has been brought down from exponential in $n$~\cite{BGM23} to polynomial~\cite{GZ23} to linear in $n$~\cite{AGL24}.
It was also shown by \cite{BDG24} that exponential alphabet size is \emph{necessary} for the exact version of the Singleton bound, and thus, the relaxation of the bound by $\eps$ is required for smaller alphabet sizes.
It was also proved by Alrabiah, Guruswami, and Li~\cite{AGL24} that random linear codes with alphabet size $2^{1/\eps^{O(1)}}$ can achieve the average-radius relaxed generalized Singleton bound for each fixed $k$, and that a dependence of $2^{O_k(1/\eps)}$~\cite{AGL24:alphabet} is necessary even for the relaxed version.

Very recently, Chen and Zhang~\cite{CZ24} proved that folded Reed-Solomon codes satisfy the average-radius (relaxed) generalized Singleton bound.
Their construction is explicit and algorithmic, and achieves the $\eps$-relaxed bound with alphabet size $n^{O(k/\eps)}$.

\subsection{The Alon-Edmonds-Luby (AEL) construction and our results}
%
%
We give an explicit construction of codes satisfying the average-radius ($\eps$-relaxed)
generalized Singleton bound, for every fixed list size $k-1$, with a constant alphabet size
depending only on $\eps$ and $k$ (we denote the list size by $k-1$ since it makes it easier to state the average-distance inequality for $k$ points).
%
%
Our codes also satisfy the low-density parity check (LDPC) property \ie they can described as kernels of sparse (parity-check) matrices with constant sparsity in each row. While random LDPC codes are known to achieve list decoding capacity~\cite{MosheiffRRSW19}, no explicit constructions of such codes were previously known.
\begin{theorem}[Informal version of  \cref{cor:ael_instantiation}]\label{thm:result-intro}
For every $\rho, \eps \in (0,1)$ and $k \in \N$, there exist an \emph{explicit} family of codes $\mathcal{C} \subseteq \Sigma^n$, such that $\rho(\calC) \geq \rho$ and
for any $g \in  \Sigma^n$ and any $\calH \subseteq \mathcal{C}$ with $\abs{\calH} \leq k$ 
\[
\Ex{h \in \calH}{\Delta(g,h)} ~\geq~ \frac{\abs{\calH}-1}{\abs{\calH}} \cdot (1 - \rho - \eps) \mper
\]
Moreover,  the code $\mathcal{C}$ has alphabet size $2^{(k^k/\eps)^{O(1)}}$ and  is characterized by parity checks of size $O((k^k/\eps)^{O(1)})$.
%
%
\end{theorem}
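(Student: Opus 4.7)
The plan is to apply the AEL (Alon--Edmonds--Luby) distance amplification procedure with three carefully chosen ingredients: (a) a high-rate explicit outer code $\calC_0 \subseteq \Sigma_0^{n_0}$ of rate $\rho_0 = 1 - \eps'$ and distance $\delta_0 = \Omega(\eps')$ for $\eps' = \Theta(\eps/k)$, taken to be an explicit LDPC code so that the LDPC property descends to $\calC$; (b) an inner code $\calC_1 \subseteq \Sigma_1^D$ of constant block length $D = (k^k/\eps)^{O(1)}$ and rate $\rho_1 \approx \rho/\rho_0$ satisfying the $(\eps/2)$-relaxed average-radius generalized Singleton bound for lists of size up to $k$; and (c) a bipartite biregular expander $G$ between $n_0$ left vertices and $n$ right vertices with second singular value $\lambda = O(\eps\, \delta_0/k^2)$. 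The inner code exists by probabilistic arguments and, since $D$ is a constant, can be found by brute-force search while preserving explicitness.

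The AEL code $\calC$ is defined in the standard way: encode a message by $\calC_0$, place the resulting $c \in \Sigma_0^{n_0}$ on the left vertices of $G$, and at each right vertex $v$ of degree $D'$ encode the gathered tuple $(c_u)_{u \in N(v)} \in \Sigma_0^{D'}$ via $\calC_1$ to obtain the symbol at $v$. The resulting code has rate $\rho_0 \rho_1 = \rho$, alphabet $\Sigma = \Sigma_1^D$ of size $2^{(k^k/\eps)^{O(1)}}$, and parity checks of size $(k^k/\eps)^{O(1)}$ formed by combining sparse inner and outer checks.

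To prove the generalized Singleton bound, fix $g \in \Sigma^n$ and distinct codewords $h_1, \ldots, h_k \in \calC$ (smaller lists are handled identically). At each right vertex $v$, the blocks $h_{i,v}$ are codewords in $\calC_1$; let $k^*(v)$ be the number of their distinct values and $b_1, \ldots, b_{k^*(v)}$ these distinct values with multiplicities $m_j \geq 1$ summing to $k$. Applying the inner-code bound to the $b_j$'s and using $m_j \geq 1$:
\[
\sum_{i=1}^k \Delta(g_v, h_{i,v}) \;=\; \sum_{j=1}^{k^*(v)} m_j\, \Delta(g_v, b_j) \;\geq\; \sum_{j=1}^{k^*(v)} \Delta(g_v, b_j) \;\geq\; (k^*(v) - 1)(1 - \rho_1 - \eps/2)\mper
\]
Averaging over $v$ and using $\Ex{v}{\Delta(g_v, h_{i,v})} = \Delta(g, h_i)$ reduces the task to proving $\Ex{v}{k^*(v) - 1} \geq (k-1)(1 - O(\eps))$.

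Finally, $k^*(v) < k$ forces two outer codewords $c^{(i)}, c^{(j)}$ to agree on all of $N(v)$; since $c^{(i)} \neq c^{(j)}$, their agreement set has density at most $1 - \delta_0$, and by the expander mixing inequality the fraction of right vertices with neighborhood entirely inside such a set is $O(\lambda^2/\delta_0^2) \leq O(\eps^2/k^4)$. A union bound over the $\binom{k}{2}$ pairs gives $\Pr{v}{k^*(v) < k} \leq O(\eps^2/k^2)$, so $\Ex{v}{k^*(v) - 1} \geq k - 1 - O(\eps^2/k)$; combined with $\rho_1 \leq \rho + O(\eps/k)$ this yields the claimed inequality. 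The main obstacle is the local-to-global transfer itself: the inner generalized Singleton bound is cleanly stated only for \emph{distinct} codewords, so one must argue, via expander mixing, that block coincidences among the $k$ outer codewords are rare on right vertices; the three sources of slack (inner $\eps/2$-relaxation, outer-rate slack $\eps'$, and expander error) must be balanced so that their cumulative loss stays within the allotted $\eps$ budget, and a finer handling of multiplicities would likely be required to move from the $\eps$-relaxed to the exact generalized Singleton bound.
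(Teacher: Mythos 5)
There is a genuine gap, and it is in the construction itself. In your variant of AEL the outer symbol $c_u$ at a left vertex $u$ is \emph{gathered by every one of its right neighbors}, and each right vertex then re-encodes its gathered tuple with $\calC_1$. This duplicates each outer symbol $D_L$ times, where $D_L$ is the left degree: the number of codewords is still $\abs{\Sigma_0}^{\rho_0 n_0}$, but the ambient space has $n\cdot D\log\abs{\Sigma_1}$ bits, and since $n_0 D_L = nD'$ the rate works out to $\rho_0\rho_1/D_L$, not $\rho_0\rho_1$. A second singular value as small as $\lambda = O(\eps\delta_0/k^2)$ forces $D_L = \Omega(1/\lambda^2)$, so the rate collapses and the code cannot have rate $\rho$. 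The rate-preserving AEL construction (the one the paper uses) instead applies the inner code at the \emph{left} vertices — each outer symbol is expanded into an inner codeword whose $d$ symbols are placed on the $d$ incident edges — and the right vertices merely collect and fold the edge symbols without re-encoding.

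Once you switch to that construction, the step your argument was designed to enable disappears: the blocks sitting at right vertices are arbitrary $d$-tuples, not inner codewords, while the local inner codewords live at the \emph{left} vertices. There, two distinct outer codewords may agree on up to a $1-\delta_{\out}$ fraction of positions (with $\delta_{\out}$ a small constant), so the probability that two local projections coincide is close to $1$ rather than $O(\lambda^2/\delta_0)$, and the union bound giving $\Ex{v}{k^*(v)-1}\geq (k-1)(1-O(\eps))$ fails completely — one only gets $\Ex{\ell}{k^*(\ell)-1} = \Omega(\delta_{\out})$ in general. This is precisely the ``non-distinct local projections'' obstruction the paper flags as the bug in the naive proof. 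The paper's resolution is to fix a single partition $\tau^*$ of the list that is induced on a $\delta_{\out}/k^k$ fraction of left vertices, apply the local inequality only to the distinct representatives of the parts, and then run an induction on $k$ in which the \emph{common error locations} of each part are converted into erasures of the received word; the gain from these erasures is exactly what compensates for the collapsed parts. This is also why the inner code must satisfy the average-radius generalized Singleton bound \emph{with erasures} — a hypothesis your proposal does not impose and which the plain inner bound does not supply. Additionally, relating left-local distances to the right (global) distance requires a separate expander-mixing sampling argument, which your right-vertex formulation sidestepped but which cannot be avoided in the actual construction.
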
 
By choosing $k = O(1/\eps)$ in  the above theorem, we also get codes which are average-radius list
decodable up to capacity, with the list size at error radius $1 - \rho - \eps$ bounded by $O(1/\eps)$.
\begin{corollary}
For every $\rho, \eps \in (0,1)$, there exist an \emph{explicit} family of codes $\mathcal{C} \subseteq \Sigma^n$, such that $\rho(\calC) \geq \rho$ and
for any $g \in  \Sigma^n$, the list $\calL(g,1-\rho-\eps)$ at radius $1-\rho - \eps$ satisfies
\[
\abs{\calL(g,1-\rho-\eps)} ~=~ \abs{\inbraces{h \in \cC ~\mid~ \Delta(g,h) \leq 1 - \rho - \eps}} ~=~
O(1/\eps) \mper
\]
Moreover,  the code $\mathcal{C}$ has alphabet size  $2^{2^{(1/\eps)^{O(1)}}}$ and  is characterized by parity checks of size $2^{(1/\eps)^{O(1)}}$.
%
%
\end{corollary}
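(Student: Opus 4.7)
The plan is to derive the corollary as a direct specialization of \cref{thm:result-intro} by choosing the list-size parameter $k$ as a function of $\eps$. First, I will set $k_0 := \lceil 2(1-\rho)/\eps \rceil + 1 = O(1/\eps)$ and apply \cref{thm:result-intro} with parameters $(\rho, \eps/2, k_0)$ (i.e., using $\eps/2$ in place of the theorem's relaxation parameter). This yields an explicit code $\calC \subseteq \Sigma^n$ of rate at least $\rho$ such that for every $g \in \Sigma^n$ and every $\calH \subseteq \calC$ with $\abs{\calH} \leq k_0$, we have $\Ex{h \in \calH}{\Delta(g,h)} \geq \frac{\abs{\calH}-1}{\abs{\calH}} \cdot (1-\rho-\eps/2)$.

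Next, I will convert this average-radius inequality into a worst-case list-size bound by the usual contrapositive. Suppose, toward contradiction, that there exist $k_0$ distinct codewords $h_1, \dots, h_{k_0} \in \calC$ each within fractional distance $1-\rho-\eps$ of some received word $g \in \Sigma^n$. Setting $\calH = \{h_1, \dots, h_{k_0}\}$, the empirical average distance is at most $1-\rho-\eps$, while the inequality above lower-bounds it by $\frac{k_0-1}{k_0}(1-\rho-\eps/2)$. A short calculation shows that my choice of $k_0$ satisfies $k_0 > 2(1-\rho-\eps/2)/\eps$, which rearranges precisely to $\frac{k_0-1}{k_0}(1-\rho-\eps/2) > 1-\rho-\eps$, yielding the desired contradiction. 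Hence the list at radius $1-\rho-\eps$ has size at most $k_0 - 1 = O(1/\eps)$.

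Finally, for the alphabet and parity-check size bounds, I will just track parameters through \cref{thm:result-intro}. With $k_0 = O(1/\eps)$, one has $k_0^{k_0} = 2^{O((1/\eps)\log(1/\eps))} = 2^{(1/\eps)^{O(1)}}$, and hence $(k_0^{k_0}/(\eps/2))^{O(1)} = 2^{(1/\eps)^{O(1)}}$. Substituting into the parameter dependencies of \cref{thm:result-intro}, this produces an alphabet size of $2^{2^{(1/\eps)^{O(1)}}}$ and parity checks of size $2^{(1/\eps)^{O(1)}}$, exactly matching the corollary.

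There is no substantive obstacle here: the corollary is an immediate consequence of the average-radius generalized Singleton bound already established in \cref{thm:result-intro}. The only point requiring care is the balancing act in the choice of $k_0$: it must be strictly greater than $2(1-\rho-\eps/2)/\eps$ in order to separate $\frac{k_0-1}{k_0}(1-\rho-\eps/2)$ from $1-\rho-\eps$, and fortunately a linear dependence $k_0 = \Theta(1/\eps)$ already suffices, keeping the $k^k$ blowup in the alphabet size within the claimed doubly-exponential bound.
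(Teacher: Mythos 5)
Your proposal is correct and follows essentially the same route the paper intends: the corollary is obtained by instantiating \cref{thm:result-intro} with $k = O(1/\eps)$ and a slightly smaller relaxation parameter, then passing to the contrapositive of the average-radius inequality. Your explicit choice $k_0 > 2(1-\rho-\eps/2)/\eps$ correctly supplies the slack (needed because applying the theorem with the same $\eps$ as the decoding radius would give no contradiction), and the parameter tracking for the alphabet and parity-check sizes matches the claimed bounds.
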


Our construction is elementary in nature, and relies on a celebrated distance amplification technique of Alon, Edmonds, and Luby~\cite{AEL95} (AEL) based on expander graphs.
We derive the above theorem as a consequence of a more general result about a ``local-to-global'' amplification for average-radius list decodability via the AEL construction.
We note that in contrast to previous algebraic constructions of codes achieving list decoding
capacity, our arguments do not need to rely on polynomial interpolation or other algebraic
phenomena, and can be proved using simple combinatorial and spectral properties of expander graphs.

Before describing our results in detail, we recall how the AEL construction can be used to produce an infinite family of codes arbitrarily close to the Singleton bound for distance.
\vspace{-5 pt}
\paragraph{The AEL construction.}
Given a balanced $d$-regular bipartitite graph $G=(L,R,E)$ with $\abs{L} = \abs{R} = n$, we call it an $(n,d,\lambda)$-expander if the second singular value of the (normalized) biadjacency matrix is at most $\lambda$. 
\begin{figure}[htb]
\begin{center}	
\begin{tikzpicture}[scale = 0.7]
\begin{scope}
\draw[] (0,0) ellipse (1.5cm and 3cm);
\node[below] at (0,-3.1) {$L$};
\draw[] (6,0) ellipse (1.5cm and 3cm);
\node[below] at (6,-3.1) {$R$};


\node[fill,circle,red] (a1) at (0,2) {};
\node[fill,circle,red] (a2) at (0,1) {};
\node[left,darkred] at (-0.2,1) {$\cC_\inn \ni \parens[\big]{f(e_1),f(e_1), f(e_3)} = f_{\ell}  $};
\node[fill,circle,red] (a3) at (0,0) {};
\node[fill,circle,red] (a4) at (0,-1) {};
\node[fill,circle,red] (a5) at (0,-2) {};
\node[fill,circle,blue] (b1) at (6,2) {};
\node[fill,circle,blue] (b2) at (6,1) {};
\node[fill,circle,blue] (b3) at (6,0) {};
\node[below,darkblue] at (10,0.45) {{$f_r = \parens[\Big]{f(e_2),f(e_4), f(e_5)} \in \Sigma_\inn^d$}};
\node[fill,circle,blue] (b4) at (6,-1) {};
\node[fill,circle,blue] (b5) at (6,-2) {};

\draw[](a2)--node[pos=0.45,sloped, above] {\small{$f(e_1)$}}(b1);
\draw[](a2)--node[pos=0.45,sloped, above] {\small $f(e_2)$}(b3);
\draw[](a2)--node[pos=0.45,sloped, above] {\small $f(e_3)$}(b5);
\draw[] (a4)--node[pos=0.17,sloped, above] {\small{$f(e_4)$}}(b3);
\draw[] (a5)--node[pos=0.17,sloped, above] {\small{$f(e_5)$}}(b3);

\end{scope}
\end{tikzpicture}
\vspace{-10 pt}
\end{center}
\caption{Illustration of the AEL procedure}
\label{fig:ael}
\end{figure}
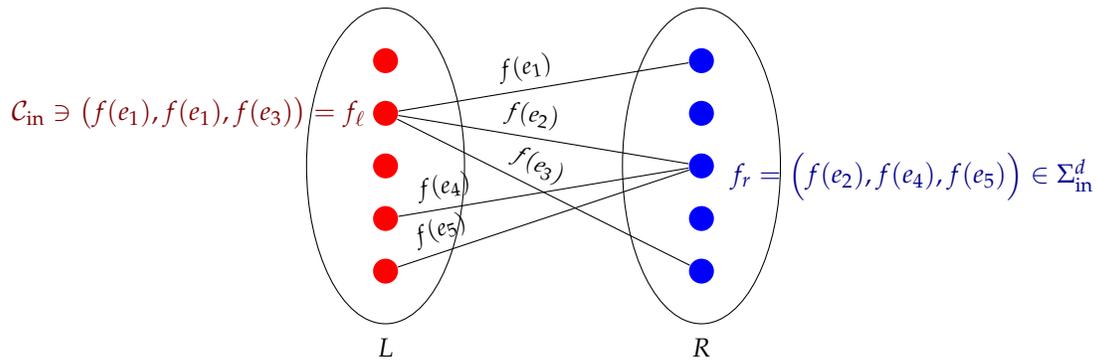

The AEL construction works with such an expanding graph, an ``inner code'' $\calC_{\inn} \subseteq \Sigma_{\inn}^d$ and an ``outer code'' $\calC_{\out} \subseteq \Sigma_{\out}^n$, with $\Sigma_{\out} \subseteq \Sigma_{\inn}^d$. 
We view a codeword of $\calC_{\out}$ as being written on the left vertices (formally, $\calC_{\out} \subseteq \Sigma_{\out}^L$. The symbol at each vertex $\ell \in L$ is then viewed as an element of $\Sigma_{\inn}^d$, which gives a labeling of the edges (with an ordering fixed in advance) and thus an element of $\Sigma_{\inn}^E$. 
Each right vertex $r \in R$ the
n collects the $d$ symbols from the incident edges, forming a symbol in $\Sigma_{\inn}^d$.
The procedure can be viewed as using $\calC_{\inn}$ to map $\calC_{\out}$ to a new code $\AELC \subseteq (\Sigma_{\inn}^d)^R \cong (\Sigma_{\inn}^d)^n$.
As illustrated in \cref{fig:ael}, we can think of symbols being on the edges of $G$, which form a codeword in $\AELC$ when grouped from the right. We will need to consider both the fraction of left and right vertices in which two codewords differ, and will denote these distances using $\Delta_L(\cdot, \cdot)$ and $\Delta_R(\cdot , \cdot)$ respectively.

An easy computation shows that $\rho(\AELC) \geq \rho(\calC_{\out}) \cdot \rho(\calC_{\inn})$. Moreover, an application of the expander mixing lemma can be used to show that 
\[
\delta(\AELC) ~\geq~ \delta(\calC_{\inn}) - \frac{\lambda}{\delta(\calC_{\out})} \mper
\]
Thus, when $\rho(\calC_{\out}) \geq 1 - \eps$ and $\lambda \leq \eps \cdot \delta(\calC_{\out})$, this gives a ``local-to-global'' amplification of the rate-distance tradeoffs obtained by the code $\calC_{\inn}$:  we get $\rho(\AELC) \geq \rho(\calC_{\inn}) - \eps$ and $\delta(\AELC) \geq \delta(\calC_{\inn}) - \eps$. 
In particular, instantiating the construction with a ``constant-sized'' code $\calC_{\inn}$ achieving the Singleton bound, yields an infinite family of codes (corresponding to a family of expander graphs) which achieve a tradeoff $\eps$-close to the Singleton bound. 

In addition to giving a local-to-global amplification, the AEL procedure is also highly versatile and has been adapted for a number of applications, particularly because it also preserves structural properties of the outer code $\calC_{\out}$. 
It is easy to see $\calC_{\out}$ and $\calC_{\inn}$ are both linear over $\F_q$, then so is the code $\AELC$. 
It also preserves the property of being LDPC, being locally testable or locally correctable~\cite{GKORZS17, KMRZS16}, linear time unique decodability \cite{GI05}, and also the duality properties needed for quantum codes~\cite{BGG24}. 

Hemenway and Wootters~\cite{HW15} also used the AEL procedure to construct linear-time decodable codes achieving list decoding capacity for \textit{erasures}. We refer the reader to the excellent discussion in \cite{KMRZS16} on the various applications of this procedure.
\vspace{-5 pt}
\paragraph{Local-to-global results for generalized Singleton bound.}
In the subsequent discussion, we will only consider relaxed versions of the generalized Singleton bound (and will not mention the qualifier explicitly), since it is necessary for having alphabet sizes to be subexponential in $n$.
Similar to the rate-distance tradeoffs discussed above, we show that the AEL procedure also gives a ``local-to-global'' amplification for (a slight strengthening of) the property of satisfying the average-radius generalized Singleton bound. 

For a code $\calC \subseteq \Sigma^n$ with distance $\delta$, and $S \subseteq n$ with $\abs{S}/n = s$, considered the ``erased'' code $\calC_S$, where all coordinates in $S$ are replaced by a special symbol $\bot$ and thus do not contribute to distances. The relative distance of the resulting code (which still has $n$ coordinates) is at least $\delta - s$. 
If the erased code $\calC_S$ is list decodable with list size $k$ at error radius $\beta = (k/(k+1)) \cdot (\delta - s - \eps)$, we would have for any $g$, and $h_1, \ldots, h_{k+1} \in \calC_S$
\[
\sum_{i \in [k+1]} \Delta(g,h_i) ~\geq~ k \cdot (\delta - s - \eps) \mper
\]
We say that the code $(\delta, k_0, \eps)$ average-radius list decodable \emph{with erasures} if the above inequality is true for \emph{all erased codes} $\calC_S$ and all $k < k_0$. 
Note that this property can equivalently be stated by simply erasing symbols in coordinates from $S$ in the center $g \in \Sigma^n$ to form the erased word $\erase{g} \in (\Sigma \cup \{\bot\})^n$ and not counting the $\bot$ symbols towards distances from codewords in $\calC$ (which is how we define it in \cref{def:gen_singleton}). 
We show that the AEL procedure gives a local-to-global amplification for the above property.
\begin{theorem}[Restatement of \cref{thm:main_technical_avg}]\label{thm:main-intro}
Let $k\geq 1$ be an integer and let $\eps > 0$. Let $\AELC$ be a
code obtained using the AEL construction using  $(G, \calC_{\out}, \calC_{\inn})$, where $\calC_{\inn}$ is $(\delta_0, k, \eps/2)$ average-radius list decodable with erasures, and $G$ is a $(n,d,\lambda)$-expander for $\lambda \leq \frac{\delta_{\out}}{6{k}^{k}} \cdot \eps$. 
Then, $\AELC$ is $(\delta_{0}, k,\eps)$ average-radius list decodable with erasures.
\end{theorem}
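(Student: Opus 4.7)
The plan is the standard AEL local-to-global template. Fix a list $\calH = \{h_1,\ldots,h_m\}$ of $m \leq k$ distinct codewords of $\AELC$, an erasure set $S \subseteq R$ with $|S|/n = s$, and an arbitrary center $g$ (partially erased on $S$); the goal is
\[
\sum_{i=1}^m \Delta_R(g, h_i) \;\geq\; (m-1)(\delta_0 - s - \eps).
\]
The starting point is to pull the problem back to the left side. For each $\ell \in L$, let $h_{i,\ell} \in \calC_{\inn}$ denote the inner codeword written at $\ell$ by $h_i^{\out}$, let $g_\ell \in (\Sigma_{\inn} \cup \{\bot\})^d$ denote the view of $g$ from $\ell$ with local erasure fraction $s_\ell := |E(\{\ell\}, S)|/d$, and let $m_\ell := |\{h_{i,\ell} : i \in [m]\}|$ be the number of distinct local inner codewords. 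Since $m_\ell \leq m \leq k$, applying the inner hypothesis to these $m_\ell$ distinct codewords (noting that passing from the distinct-sum to the full sum only increases the LHS since multiplicities are at least one) yields the local bound
\[
\sum_{i=1}^m \Delta_\ell(g_\ell, h_{i,\ell}) \;\geq\; (m_\ell - 1)(\delta_0 - s_\ell - \eps/2).
\]

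Next, aggregate these local bounds. Bi-regularity of $G$ gives $\sum_\ell s_\ell = sn$ and the identity $\sum_\ell \Delta_\ell(g_\ell, h_{i,\ell}) = n \Delta_E(g, h_i)$, where $\Delta_E$ is the edge-wise fractional Hamming distance; combined with $\Delta_E \leq \Delta_R$ (each right-vertex disagreement forces at least one disagreeing edge), summing the local inequality over $\ell$ gives $\sum_i \Delta_R(g, h_i) \geq \tfrac{1}{n}\sum_\ell (m_\ell - 1)(\delta_0 - s_\ell - \eps/2)$. Decomposing $m_\ell - 1 = (m-1) - (m - m_\ell)$ and evaluating the main term via $\sum_\ell s_\ell = sn$, this reorganizes as
\[
\sum_i \Delta_R(g, h_i) \;\geq\; (m-1)(\delta_0 - s - \eps/2) - E, \quad E := \frac{1}{n} \sum_\ell (m - m_\ell)(\delta_0 - s_\ell - \eps/2).
\]
The theorem thus reduces to showing that the coincidence error satisfies $E \leq (m-1)\eps/2$.

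The hard part, and the main obstacle, is this bound on $E$. A naive union bound from the outer distance gives only $\sum_\ell(m - m_\ell) \leq \binom{m}{2}(1 - \delta_{\out})n$, which is far too weak in the AEL regime, where $\delta_{\out}$ is small and so $(1 - \delta_{\out})$ is close to $1$. The plan is to invoke the expander mixing lemma to exploit the fact that outer-coincidence sets on $L$ are tightly coupled, via the edge structure of $G$, to agreement sets on $R$. Concretely, for each subset $T \subseteq [m]$ and each candidate coincidence partition of $T$ (i.e., a partition of the indices in $T$ into classes sharing a common outer value at a given $\ell$), one tracks the left set of $\ell$ realizing that partition and applies the expander mixing lemma, with the distance of $\calC_{\inn}$ ensuring that deviations from the expected right-side count contribute $O(\lambda/\delta_{\out})$ per partition. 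Summing these losses across all at most $k^k$ such partitions of $[k]$ (viewed as maps $[k] \to [k]$), the hypothesis $\lambda \leq \delta_{\out}\eps/(6k^k)$ is exactly calibrated to yield $E \leq (m-1)\eps/2$, which combines with the main term to give the required bound $\sum_i \Delta_R(g, h_i) \geq (m-1)(\delta_0 - s - \eps)$.
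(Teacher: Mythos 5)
There is a genuine gap, and it sits exactly where you locate the ``hard part.'' Your reduction to the claim $E \leq (m-1)\eps/2$ is a dead end because that claim is simply false, not merely hard to prove. Take $m=2$, let $h_1 \neq h_2$ be codewords of $\AELC$ whose outer codewords agree on a $(1-\delta_{\out})$ fraction of $L$ (the outer code only guarantees $\delta_{\out}$ disagreement, and $\delta_{\out}$ is tiny in the AEL regime), and take $g = h_1$ with no erasures. Then $m_\ell = 1$ on a $(1-\delta_{\out})$ fraction of left vertices, so
\[
E \;=\; \frac{1}{n}\sum_{\ell}(m-m_\ell)(\delta_0 - s_\ell - \eps/2) \;\approx\; (1-\delta_{\out})(\delta_0-\eps/2),
\]
which is close to $\delta_0$ rather than to $\eps/2$. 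Correspondingly, your aggregated lower bound $\frac{1}{n}\sum_\ell(m_\ell-1)(\delta_0-s_\ell-\eps/2) \approx \delta_{\out}\,\delta_0$ is far below the target $\delta_0-\eps$ (which does hold here, since $\Delta_R(h_1,h_2)\geq \delta_{\inn}-\lambda/\delta_{\out}$). The expander mixing lemma cannot rescue this: the coincidence loss is a first-order effect (the coincidence set genuinely occupies almost all of $L$), not a fluctuation around an expected count, so no $O(\lambda/\delta_{\out})$-per-partition accounting will shrink $E$ to $O(\eps)$.

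The missing idea is that local coincidences must be converted into a \emph{gain} rather than bounded as a loss. When $h_{i,\ell}=h_{j,\ell}$ at many $\ell$ and this common local codeword disagrees with $g_\ell$ on an edge $(\ell,r)$, the right vertex $r$ is a \emph{common} error location for both $h_i$ and $h_j$; erasing those common error locations and applying the average-radius-with-erasures property to the coinciding sub-list yields a strengthened bound $\sum_{h\in\calH_j}\Delta_R(\erase{g},h) \geq (|\calH_j|-1)(\delta_0-s-\eps) + s_j$, where $s_j$ is the fraction of common errors. This is why the theorem is proved by induction on $k$ and why the inner code must be list decodable \emph{with erasures}: the paper fixes one partition $\tau^*$ of $\calH$ induced on a set $L^*$ with $|L^*|\geq \delta_{\out}n/k^k$ (this is where the $k^k$ and the outer distance enter), applies the local inequality only at $\ell\in L^*$ to the $p$ distinct local projections to lower bound $\sum_j s_j$ by roughly $(p-1)(\delta_0-s)$, and adds the $p$ strengthened inductive bounds; the extra $+s_j$ terms exactly compensate for the deficit $(k-p)$ versus $(k-1)$. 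Your proposal contains the setup (local views, EML, the $k^k$ count of partitions) but not this compensation mechanism, and without it the argument cannot close.
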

A reader may notice that instead of the erased codes $\calC_S$, we could have equivalently considered the punctured codes obtained by \emph{restricting} to coordinates in $\bar{S}$, which would yield the same inequality up to normalization factors of $1-s$. 
This view can be used to show that several known results for random codes and random linear codes also yield average radius list-decodability with erasures (via a simple union bound over puncturings). 
This proves the existence of the inner codes $\calC_{\inn}$, satisfying the necessary conditions of \cref{thm:main-intro} with $\delta_0 = 1 - \rho$, which can then be used to construct codes $\AELC$ satisfying the generalized Singleton bound. 
Moreover, if one wants even the inner codes to be fully explicit, they can also be chosen to be folded Reed-Solomon codes, using the results of Chen and Zhang~\cite{CZ24}.
Instantiating \cref{thm:main-intro} with any of the above choices for $\calC_{\inn}$, and appropriate families of graphs $G$ and codes $\calC_{\out}$, immediately yields the family of codes promised in \cref{thm:result-intro}.
%
%

%
\vspace{-10 pt}
\paragraph{Remarks.}
Given the versatility of the AEL framework, we can also obtain additional structural properties for the codes resulting from our constructions.
\begin{itemize}
\item[-] As discussed earlier, if both $\calC_{\inn}$ and $\calC_{\out}$ are $\F_q$-linear, then so
  is the resulting code $\AELC$. Note that the alphabet size for $\AELC$ is $q^d$ and thus, we
  obtain linearity only over a \emph{subfield}, as is also the case for other capacity-achieving
  codes such as folded Reed-Solomon.
\item[-] Our construction only needs the outer code $\calC_{\out}$ to have high rate and (arbitrarily small) constant distance. In particular, we do not need the outer code to be list decodable for our structural or algorithmic results. Moreover, starting from outer codes with additional properties (such as being LDPC), the AEL construction can often inherit these properties.
\item[-] If the code $\calC_{\out}$ is LDPC with parity checks of size at most $w$, then $\AELC$ is LDPC with parity checks of size at most $w \cdot d$ \emph{over the subfield} $\F_q$. 
Instantiating with an LDPC code $\calC_{\out}$, we thus also get an explicit construction of LDPC codes achieving list decoding capacity, with parity checks of size $2^{(1/\eps)^{O(1)}}$, where $\eps$ is the (arbitrarily small) gap to capacity. Examples of such codes were only known via random constructions~\cite{MosheiffRRSW19}.
\item[-] We note that while we state our results in terms of the average-radius generalized Singleton bound, the AEL construction can also be used to obtain local-to-global results directly for list decoding capacity~\cite{Sri24:thesis}, relying only on a bound on the list size for the inner code.
However, the statement in terms of the average-radius bounds leads to a simpler proof and stronger guarantees, and also directly leads to list decoding algorithms via the ``proofs to algorithms''
paradigm~\cite{FKP19}.
\end{itemize}
\vspace{-10 pt}
\paragraph{Algorithms.}
The proofs for our theorems rely on simple spectral inequalities for expander graphs, which can then
be turned into decoding algorithms, using the Sum-of-Squares (SoS) hierarchy of semidefinite
programs.  
Our algorithms rely on the ``proofs to algorithms'' framework~\cite{FKP19} adapted to the setting of codes~\cite{JST23}.
This framework was used in \cite{JST23} to obtain list decoding algorithms for AEL codes, and in
this work we show how this can be extended to obtain list decoding algorithms matching the
generalized Singleton bound. 
%
%
\begin{theorem}[Informal version of  \cref{cor:algo-main}]\label{thm:algo-intro}
For every $\rho, \eps \in (0,1)$ and $k \in \N$, there exist an \emph{explicit} family of codes $\mathcal{C} \subseteq \Sigma^n$, such that $\rho(\calC) \geq \rho$ and
for any $g \in  \Sigma^n$ and any $\calH \subseteq \mathcal{C}$ with $\abs{\calH} \leq k$ 
\[
\Ex{h \in \calH}{\Delta(g,h)} ~\geq~ \frac{\abs{\calH}-1}{\abs{\calH}} \cdot (1 - \rho - \eps) \mper
\]
The code  $\mathcal{C}$ has alphabet size  $2^{(k^k/\eps)^{O(1)}}$, and can be list decoded from
radius $\frac{k-1}{k}(1-\rho-\eps)$ deterministically in time $n^{2^{(k^k/\eps)^{O(1)}}}$ with a
list of size at most $k-1$.
\end{theorem}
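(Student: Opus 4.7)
The plan is to take the explicit family from Corollary~\ref{cor:ael_instantiation} and pair it with a Sum-of-Squares based list decoding algorithm obtained by lifting the combinatorial argument of Theorem~\ref{thm:main-intro} into the SoS framework, following the ``proofs to algorithms'' paradigm of \cite{FKP19} as adapted to codes in \cite{JST23}. The construction is $\AELC$ for a suitable $(n,d,\lambda)$-expander $G$, an outer code $\calC_{\out}$ of rate close to $1$ with constant distance (which can be chosen LDPC), and a constant-sized inner code $\calC_{\inn}$ that is $(1-\rho, k, \eps/2)$ average-radius list decodable with erasures. Such an inner code exists via a random construction together with a union bound over puncturings, and can be made fully explicit using folded Reed-Solomon codes from \cite{CZ24}. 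Theorem~\ref{thm:main-intro} guarantees that $\AELC$ satisfies the required average-radius generalized Singleton inequality.

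For the algorithm, given a received word $g \in \Sigma^n$, I would set up a degree-$D$ SoS relaxation whose variables $\{x_{r,\alpha}\}_{r \in R,\, \alpha \in \Sigma_{\inn}^d}$ are indicators of symbol assignments at each right vertex of $G$, subject to booleanity $x_{r,\alpha}^2 = x_{r,\alpha}$, normalization $\sum_{\alpha} x_{r,\alpha} = 1$, the global constraint that the assignment lies in $\AELC$, and the distance constraint encoding that the assignment is within radius $\frac{k-1}{k}(1-\rho-\eps)$ of $g$. The key point is that the proof of Theorem~\ref{thm:main-intro} relies on only two ingredients, both of which are SoS-friendly at constant degree: (i) the local guarantee of $\calC_{\inn}$, which, because $\calC_{\inn}$ has constant size depending only on $k$ and $\eps$, admits a trivial SoS proof whose degree depends only on $\abs{\Sigma_{\inn}^d} = 2^{(k^k/\eps)^{O(1)}}$; and (ii) an application of the expander mixing lemma, which is a degree-two spectral inequality in the symbol-indicator variables. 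Lifting each step of the combinatorial proof into SoS yields a pseudo-expectation $\tildeE$ whose marginals behave like a distribution supported on the true list $\calL(g, \frac{k-1}{k}(1-\rho-\eps))$, which by the combinatorial theorem has size at most $k-1$.

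The list is then extracted by iterated conditioning and rounding in the style of \cite{JST23}: at each step, condition $\tildeE$ on a carefully chosen partial symbol assignment to bias the pseudo-distribution toward a single codeword, round to recover that codeword via majority decoding at each right vertex, and remove its mass from $\tildeE$. Because the combinatorial list has at most $k-1$ elements, at most $k-1$ rounds suffice, and the procedure is fully deterministic. The resulting SoS degree is $D = 2^{(k^k/\eps)^{O(1)}}$ (dictated by the local proof for $\calC_{\inn}$), which matches the claimed running time $n^{2^{(k^k/\eps)^{O(1)}}}$.

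The main obstacle is the rounding analysis: we must show that the SoS pseudo-expectation, obtained by lifting the average-radius inequality of Theorem~\ref{thm:main-intro}, has sufficient pseudo-covariance structure to guarantee that conditioning on a constant number of coordinates yields a near-deterministic rounding onto each codeword of the true list. The covariance-based rounding of \cite{JST23} handles this for the ordinary list decoding setting, but to preserve the finer average-radius guarantee across conditionings one needs to certify the generalized Singleton inequality itself as an SoS identity in the symbol variables, and then argue that each conditioning strictly decreases an SoS-certified potential. This is where the slight strengthening ``with erasures'' built into Theorem~\ref{thm:main-intro} pays off, since conditioning on symbols at a few coordinates is equivalent, from the SoS viewpoint, to passing to an erased restriction of $\AELC$ where the lifted inequality remains valid.
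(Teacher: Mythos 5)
Your code construction matches the paper's (it is exactly Corollary~\ref{cor:ael_instantiation}), but the algorithmic part contains a genuine gap. You set up the SoS relaxation over variables $\{x_{r,\alpha}\}$ describing a \emph{single} pseudocodeword and hope that the relaxed constraints make its marginals behave like a distribution over the true list. This is essentially the relaxation of \cite{JST23}, which the paper explicitly states decodes only to the Johnson bound $1-\sqrt{\rho}-\eps$; that framework alone cannot reach $1-\rho-\eps$. The reason is structural: the generalized Singleton bound is a \emph{joint} inequality over $k$ codewords simultaneously, so there is no meaningful sense in which you can ``certify the generalized Singleton inequality itself as an SoS identity in the symbol variables'' of a single pseudocodeword. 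The paper instead relaxes a $p$-\emph{tuple} of pseudocodewords $(\zee_1,\dots,\zee_p)$ with explicit pairwise left-distance constraints (Definition~\ref{def:k_tuple} and $\mathrm{SDP}(p,t)$ in Table~\ref{tab:SDP_for_feasibility}), proves the SoS analogue of the average-radius Singleton bound for such tuples after $\eta$-good conditioning (Theorem~\ref{thm:sos_main}), and uses this to conclude the SDP is infeasible at $p=k$ (Lemma~\ref{lem:pseudocodeword_list_size}). The $k$-tuple formulation is the key technical novelty; without it your plan would only re-derive Johnson-bound decoding.

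Your list-extraction strategy is also different from the paper's and not clearly sound. You propose to ``condition to bias toward a single codeword, round, and remove its mass from $\tildeE$,'' but ``removing mass'' of a particular codeword is not a local event, so it cannot be expressed as a valid SoS conditioning, and no precise mechanism for it is given. The paper avoids this entirely: it takes the \emph{maximum} $p^*$ for which $\mathrm{SDP}(p^*,t)$ is feasible, then proves a covering lemma (Lemma~\ref{lem:covering}) by \emph{augmentation}: for any true list codeword $h$, adding $h$ as a $(p^*+1)$-st coordinate produces a pseudocodeword tuple that must violate a pairwise-distance constraint (since $\mathrm{SDP}(p^*+1,t)$ is infeasible), which forces $\tildeEx{\Delta_L(h,\zee_i) \mid \zee_{[p^*],F}=\sigma}\leq\delta_{\out}^{\dec}$ for some $i$, $F$, $\sigma$, after which unique decoding of $\calC_{\out}$ from the induced local distributions (Lemma~\ref{lem:decode_from_distrib}) recovers $h$. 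Finally, your remark that the erasure strengthening of Theorem~\ref{thm:main-intro} ``pays off'' because SoS conditioning corresponds to erasing coordinates is not what happens in the paper: the paper deliberately avoids carrying erasures through the SoS argument (see the remark after Lemma~\ref{lemma:common-error-bound}) and instead proves an SoS analogue of the common-error-location version, while conditioning serves a different role entirely, namely producing $\eta$-good pseudocodewords via the variance-reduction argument of \cite{BRS11}.
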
 
Broadly speaking, the SoS framework requires one to prove properties of an object by writing the
object as a set of formal variables, and expressing any desired inequalities as sums-of-squares of
low-degree polynomials in these formal variables.
The properties of the SoS hierarchy then allow one to write a convex relaxation to search for such an object,
and to deduce these properties also hold for ``convex proxy'' for the object obtained as a solution
to the relaxation.

We obtain the above theorem, by reasoning about a convex proxy for (real embeddings of) the entire list $\calH =
\inbraces{h_1, \ldots, h_k}$, replacing codewords by vectors of formal variables $\inbraces{\zee_1,
  \ldots, \zee_k}$.
The spectral expansion of a graph $G$ can be expressed in the form $A_G - u_0
u_0^{\T} \preceq \lambda \cdot I$, where $A_G$ is the normalized adjacency matrix and $u_0$ is the
(known) top eigenvector.
A spectral inequality relying on expansion can often be reduced to understanding a quadratic form
$\ip{\zee}{(\lambda \cdot I - A_G + u_0 u_0^{\T}) \zee}$ which can be viewed as a sum-of-squares
expression in the variables $\zee$ since the quadratic form $\ip{\zee}{M\zee}$ of a positive
semidefinite matrix $M = \sum_{\sigma_i} v_i v_i^{\T}$ is $\sum_{i} \sigma_i \ip{v_i}{\zee}^2$.

The key contribution of our work is the ``proof'' part of this
proofs-to-algorithms implementation.
Adapting the framework from \cite{JST23} to our setting does require some nontrivial technical ideas (discussed in the proof overview
below) but a reader interested in understanding the code construction and properties can safely skip the algorithmic aspects of our result (everything after \cref{sec:inner-code}) on a first reading.

\subsection{An overview of proofs and techniques}
We now give a brief overview of the proof of \cref{thm:main-intro}. 
Consider $h \in \AELC$ and $g \in (\Sigma_{\inn}^d)^n$, where we measure the distance
$\Delta_R(g,h)$ as the fraction of right vertices $r \in R$ on which $g$ and $h$ differ.
Since the goal is to deduce inequalities about ``global'' distances of the form $\Delta_R(g,h)$
using ``local'' inequalities for the inner code $\calC_{\inn}$, we also consider the local views
$g$ and $h$ from each vertex $\ell \in L$.
Formally, we can also view $g, h \in \Sigma_{\inn}^E$ as labeling the edges, which defines for each
$\ell \in L$ local views $g_{\ell}, h_{\ell} \in \Sigma_{\inn}^d$.  For $g_{\ell}$ and $h_{\ell}$ we meaure local
distances $\Delta(g_{\ell}, h_{\ell})$ as a fraction of $d$ edges incident on $\ell$.

Recall that the local distance inequalities show that for any $\ell \in L$, $g_{\ell} \in
\Sigma_{\inn}^d$ and distinct $h_{1,\ell}, \ldots, h_{k,\ell}$
\[
\sum_{i \in [k]} \Delta(g_{\ell}, h_{i,\ell}) ~\geq~ (k-1) \cdot (\delta_0 - \eps) \mcom
\]
and our goal is to deduce a similar inequality for the global distances (possibly with $\eps'$ instead of
$\eps$).
Also, note that by taking
$k=2$ and $g_{\ell} = h_{1,\ell}$, we also get that $\Delta(h_{1,\ell},h_{2,\ell}) \geq \delta_0 -
\eps$, and we can treat $\delta_0$ essentially as the distance of $\calC_{\inn}$, and by the AEL
bound, also for the code $\AELC$.
In the argument below, we will treat $\eps$ as an arbitrarily small error.

\vspace{-10 pt}
\paragraph{Sampling bound.}
A first observation, which is basically the distance proof of AEL,  is that the local distances
\emph{for most} $\ell \in L$ provide a lower bound on the global distances. 
Indeed, let $\Delta(g_{\ell}, h_{\ell}) \geq \theta$ for a set $L' \subseteq L$ of size (say) $\eps
\cdot n$.
Each $\ell \in L'$ thus has $\theta \cdot d$ ``error edges'' incident to it, on which $g_{\ell}$ and
$h_{\ell}$ don't match.
Let $R' \subseteq R$ be the right endpoints of these error edges and note that each $r \in R'$ must be a vertex
where $g, h$ differ.
The expander mixing lemma (\cref{lem:eml}) then shows these that large (greater than $\theta$)
local distances even at $\eps$ fraction of $\ell \in L$ imply large global distance
\[
\abs{L'} \cdot (\theta \cdot d) ~\leq~ \abs{E(L',R')} ~\leq~ \frac{d}{n} \cdot \abs{L'} \cdot
\abs{R'} + \lambda \cdot n \cdot d
\quad \implies \quad
\Delta_R(g,h) ~\geq~ \frac{\abs{R'}}{n} ~\geq~ \theta - \lambda \cdot \frac{n}{\abs{L'}} \mper
\]
By contrapositive, for (say) $\lambda \leq \eps^2$, we have $\Pr{\ell \in L}{\Delta(g_{\ell}, h_{\ell}) \leq
  \Delta_R(g,h) + \eps} \geq 1 - \eps$.
For a given $g \in (\Sigma_{\inn}^d)^R$ and codewords $h_1, \ldots, h_k \in \AELC$ and, we can say
using a union bound that
\[
\Pr{\ell \in L}{\forall i \in [k]~~ \Delta(g_{\ell}, h_{i,\ell}) \leq \Delta_R(g,h_i) + \eps} ~\geq~ 1 -
k\eps \mper
\]
\vspace{-20 pt}
\paragraph{A wrong proof.}
Applying the sampling bound for any ``good'' $\ell$ where all local distances $\Delta(g_{\ell},h_{i,\ell})$
lower bound the global distances $\Delta_R(g,h_i)$, one could guess that we are
already done, since $g_{\ell} \in \Sigma_{\inn}^d$ and the local codewords $h_{1,\ell}, \ldots,
h_{k,\ell}$ satisfy
\[
\sum_{i \in [k]} \Delta(g_{\ell}, h_{i,\ell}) ~\geq~ (k - 1) \cdot (\delta_0 - \eps) 
~\implies~
\sum_{i \in [k]} \Delta_R(g,h_i) ~\geq~ (k-1) \cdot (\delta_0 - \eps) - k \cdot \eps
\]
Of course the ``bug'' in the above proof is that the local projections $h_{1, \ell}, \ldots,
h_{k,\ell}$ are not necessarily distinct even if the codewords $h_1, \ldots, h_{\ell}$ are, and thus we
cannot apply a local distance inequality for $k$ \emph{distinct} codewords.
However, we can get around this issue by carefully examining the consequences of the local
projections being equal for two codewords.

\vspace{-5 pt}
\paragraph{Partitions and erasures.} 
We can think of each left vertex as inducing a partition $\tau_{\ell} =
(\calH_{1}, \ldots, \calH_{p_{\ell}})$ of the set $\calH$ of codewords, where $h, h'$ are in the
same part if and only if $h_{\ell} = h_{\ell}'$.
Since the total number of partitions is at most $k^k$, we fix a
partition $\tau^* = (\calH_{1}, \ldots, \calH_{p})$ which occurs for at least $k^{-k}$ fraction of
$\ell \in L$, forming (say) the set $L^*$, which we still think of as somewhat large if
$\lambda \ll k^{-k}$.

Consider the special case when $\tau^* = (\{h_1\}, \ldots, \{h_{k-2}\}, \{h_{k-1}, h_k\})$ \ie
for all $\ell \in L^*$, it is only the last two codewords that have the same local projection (note
that the projections still depend on $\ell$, just the partition is fixed).
Let $\{f_{1,\ell}, \ldots, f_{k-1, \ell}\}$ denote the set of local projections (for $\ell \in L^*$)
where we have $f_{k-1, \ell} = h_{k-1,\ell} = h_{k,\ell}$ for all $\ell \in L^*$.
The local distance inequality then gives 
\[
\sum_{i \in [k-1]} \Delta(g_{\ell}, f_{i,\ell}) ~\geq~ (k-2) \cdot (\delta_0 - \eps)
~\implies~
\sum_{i \in [k-2]} \Delta_R(g_{\ell},h_{i}) + \Delta(g_{\ell}, f_{k-1, \ell}) ~\geq~ (k-2) \cdot
(\delta_0 - 2\eps)
\]
They key difference is now we no longer think of the term $\Delta(g_{\ell},f_{k-1,\ell})$ as a lower bound
on $\Delta(g,h_{k-1, \ell})$ and $\Delta(g,h_{k,\ell})$, but on the fraction of positions where both
$h_{k-1}$ and $h_k$ \emph{simultaneuosly} differ from $g$.
This can again be seen by applying the expander mixing lemma argument above to the set 
$L' = \{\ell \in L ~|~ \Delta(g_{\ell}, f_{k-1, \ell}) \geq \theta\}$. 
Each $\ell \in L'$ then has $\theta \cdot d$
``simultaneous error edges'',  with the right endpoints being error locations for both $h_{k-1}$ and $h_{k}$.

Now consider the ``erased code'' $\calC_S^{\AEL}$ where we erase the set $S \subseteq R$ of size (say) $s
\cdot n$ of locations where $g_{r} \neq h_{k-1, r}$ and $g_r \neq h_{k,r}$, and don't count the
erased locations towards the distance. 
Since the distance of $\AELC$ is at least $\delta_0 - \eps$, the distance of $\calC_S^{\AEL}$ is at least
$\delta_0 - s - \eps$. Also, $\Delta_S(g,h_{k-1}) = \Delta_R(g,h_{k-1}) - s$ and similarly for
$h_k$, since they both do differ from $g$ in the erased locations. Triangle inequality then gives
\[
\Delta_S(g,h_{k-1}) + \Delta_S(g,h_k) ~\geq~ \delta_0 - s ~\implies~ \Delta_R(g,h_{k-1}) +
\Delta_R(g,h_{k}) ~\geq~ \delta_0 -\eps + s
\]
Adding the two inequalities above, and using the sampling lower bound $\Delta(g_{\ell},f_{k-1,\ell}) \leq s +
\eps$, then proves the average distance inequality for $h_1, \ldots, h_k$.
\begin{figure}[h]
\centering
\includegraphics[width=0.8\textwidth]{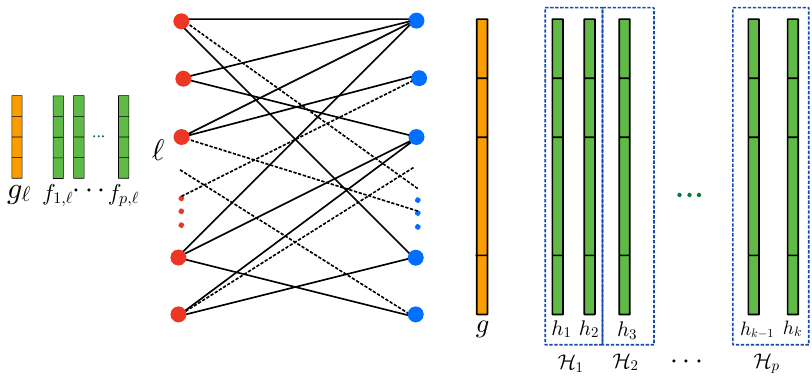}
\vspace{-10 pt}
\caption{Local projections and partitions}
\vspace{-10 pt}
\label{fig:partition}
\end{figure}
\vspace{-5 pt}
\paragraph{Induction.}
The above idea of using erasures for common error locations, can also be extended to arbitrary
partitions $\tau^* = (\calH_1, \ldots, \calH_p)$. 
We now apply the local distance inequalities for codewords $f_{1,\ell}, \ldots, f_{p,\ell}$ corresponding to the different parts, at a ``good'' vertex $\ell \in L^*$ where the sampling bound holds. This gives
\[
\sum_{j \in [p]} \Delta(g_{\ell}, f_{j, \ell}) ~\geq~ (p-1) \cdot (\delta_0 - \eps)
\quad \implies \quad 
\sum_{j \in [p]} s_j ~\geq~ (p-1) \cdot (\delta_0 - \eps) - p \eps \mcom
\]
where $s_j$ now denotes the fraction of simultaneous errors for all $h \in \calH_j$.

For each part $\calH_j$ we also consider an ``erased'' code with $s_j$ fraction of erasures, corresponding to common error locations of all codewords in $\calH_j$. When $\abs{\calH_j} > 2$, the triangle inequality needs to be replaced by an average distance inequality (for global codewords in) the erased code. 
To arrange this, we prove our results by induction on $k$, which gives that (assuming $\abs{\calH_j} \leq k-1$ for all parts), 
\[
\forall j \in [p] \quad \sum_{h \in \calH_j} \Delta_R(g,h) ~\geq~ \inparen{\abs{\calH_j} - 1} \cdot (\delta_0 - \eps) + s_j \mper
\]
Adding the above inequalities and the lower bound on $\sum_j s_j$ then yields the result, since $\sum_j \abs{\calH_j} = k$. (Actually, we need a slight strenghtening of the inequalities above to correctly carry out the induction step, for which the details can be found in \cref{sec:avg-singleton}.)


Finally, we also need to rule out the trivial partition with only one part of size $k$, since we needed $\abs{\calH_j} \leq k-1$ for all $j \in [p]$ to use induction. 
However, this cannot happen as this would mean that $h_{1,\ell} = \cdots = h_{k,\ell}$ for (essentially) all $\ell$. But $h_1, \ldots, h_k$ are distinct codewords in $\AELC$ and thus their
projections must differ in at least $\delta_{\out}$ fraction of left vertices.
\vspace{-5 pt}
\paragraph{Discussion.}
As mentioned earlier, our proof techniques rely on elementary combinatorial and spectral arguments,
instead of polynomial interpolation, which has been the workhorse for list size bounds in all
previous explicit constructions achieving list decoding capacity. 
As compared to previous applications of AEL for list decoding, the key difference here is a
fine-grained analysis of the structure of the local lists. 
While many previous analyses treated the local lists as a black box and relied on stronger (list
recovery) properties of the outer code for algorithmic applications, the idea of keeping track of
their structure, such as in terms of common error locations, may also be useful for other applications.

\vspace{-5 pt}
\paragraph{Connections to interleaved codes.}
The idea of using error locations of some parts of a code as erasures for the remaining code was already used in the work of Gopalan, Guruswami, and Raghavendra \cite{GGR09}. However, there the focus was to show that when starting with an already good list decodable code, its list size does not degrade too much upon interleaving. The interleaving operation can be seen as the dense analog of AEL, but since it does not change the blocklength, the results in \cite{GGR09} did not give an infinite family of codes starting from a single (inner) code as we do.

In more detail, an order-$t$ interleaving of a code $\calC$ of blocklength $n$ can be seen as applying the AEL procedure with $\calC$ as the inner code and $\calC^t$ as the trivial outer code, on the complete bipartite graph $K(t,n)$. By replacing all uses of the expander mixing lemma in the proof by $E(S,T) = |S|\cdot |T|$ (or equivalently, $\lambda=0$), one can verify that if the code $\calC$ achieves generalized Singleton bound with erasures, so does its order-$t$ interleaving. This also improves upon the list size that one could obtain via a black-box application of \cite{GGR09} on a code that lies close to the generalized Singleton bound.

The idea of using error locations as erasures can also be used to prove weak non-trivial bounds on list size of interleaved codes in general, and follows the same structure as the proof of Schwartz-Zippel lemma. In particular, both the list codewords for the interleaving of \emph{arbitrary} codes and the roots of a multivariate polynomial are contained in sets with a common structure, which immediately gives list size upper bounds. Such results were known before for the interleaving of Reed-Solomon codes, which can be derived using multivariate interpolation combined with the Schwartz-Zippel lemma \cite{CS03, PV05, GX13}. Our argument using erasures is however applicable to general codes and uses only on their distance, which is surprising because for non-algebraic codes the analog of multivariate interpolation is not clear at all. We believe this combinatorial handle on interleaved codes sheds light on interpolation-based decoders, and may find further applications. The interested reader is referred to \cite{Sri24:thesis} for details.

\vspace{-5 pt}
\paragraph{Outline of our algorithm.}
As discussed earlier, our algorithm for list decoding a given $g \in \Sigma^n$ searches for ``convex
proxies'' given by the SoS hierarchy, for the entire list of codewords in a ball around $g$.
We consider vectors $\zee_1, \ldots, \zee_k$ corresponding to (real embeddings of) the codewords
and the solutions to the SoS relaxation can be described as objects $\tildeEx{\zee_1}, \ldots,
\tildeEx{\zee_k}$ containing values corresponding to all low-degree monomials in the variables,
which we refer to as ``pseudocodewords''.
One can also extend $\tildeEx{\cdot}$ by linearity to form a linear operator which yields values for
all polynomials. The constraints of the SoS hierarchy imply that $\tildeEx{P(\zee)^2} \geq 0$ for all
low-degree polynomials $P$, and thus all inequalities that can be expressed as sums-of-squares of
low-degree polynomials are satisfied by the solution.

Our algorithm simply searches for the largest $k'$ for which there exists a tuple $(\tildeEx{\zee_1}, \ldots,
\tildeEx{\zee_{k'}})$ of pseudocodewords satisfying two conditions:
\begin{enumerate}
\item For all $i \neq j$, $\zee_i$ and $\zee_j$ are far (just as codewords are):
  $\tildeEx{\Delta_L(\zee_i, \zee_j)} \geq \delta'$.
\item For all $i$, $\zee_i$ is within the required error radius of $g$ (just as list elements are):
  $\tildeEx{\Delta_R(g,\zee_i)} \leq \beta$.
\end{enumerate}
Note that we use above that the distance $\Delta_L$ and $\Delta_R$ can be expressed as low-degree
polynomials in the variables $\inbraces{\zee_i}_{i \in [k']}$ (which is easy diffcult to show).
The analysis for the algorithm now consists of two parts. 

We first show that for a maximum $k'$ where the convex relaxation for $(\tildeEx{\zee_1}, \ldots,
\tildeEx{\zee_{k'}})$ is feasible, but the one for $k'+1$ codewords is not, any \emph{true} codeword
$h \in \AELC$ with $\Delta_R(g,h) \leq \beta$ must (essentially) satisfy $\tildeEx{\Delta_L(\zee_i, h)}
\leq \delta'$ for some $i \in [k]$. 
This is because if $h$ is far from all $\zee_i$s, we can augment the solution to a $k'+1$ tuple by
adding (low-degree monomials in) $h$ as $\tildeEx{\zee_{k'+1}}$. 
Since SoS is a relaxation, true codewords are also valid pseudocodewords, and thus $k'$ cannot
be maximum.
Given this ``covering'' property for the maximum $k'$ and by choosing $\delta'$ small enough, we can
obtain any $h$ in the list by ``unique-decoding'' the corresponding codeword of $\calC_{\out}$ from
the vector $\tildeEx{\zee_i}$ for some $i \in [k]$.
This argument is presented in \cref{sec:sos_algo}.

The second part of the proof consists in showing that \emph{there exists a maximum} $k'$. 
This follows from extending the average distance inequality to pseudocodewords, using the fact that
the proof can be expressed as sums-of-squares, and is proved in \cref{sec:sos_proof}. 
Formally, we get that for any $k'$-tuple of pseudocodewords as above, we must have
\[
\sum_{i \in [k']} \tildeEx{\Delta_R(g, \zee_i)} ~\geq~ (k' - 1) \cdot (\delta_0 - \eps) \mper
\]
However, for $\beta \leq ((k-1)/k) \cdot (\delta_0 - \eps)$, this contradicts condition (2) above
unless $k' \leq k$. 
Using codes with $\delta_0 \geq 1 - \rho - \eps$ and taking $k = O(1/\eps)$, we can use the above
algorithm to decode the code $\AELC$ arbitrarily close to list decoding capacity.

\section{Preliminaries}\label{sec:prelims}

\subsection{Expanders and Codes}

%
%

For a bipartite graph $G=(L,R,E)$, let $L$ be the set of left vertices, and $R$ be the set of right vertices. Let $A_G$ denote the $L\times R$ biadjacency matrix, and $\sigma_2(A_G)$ be its second largest singular value.

\begin{definition}[$(n,d,\lambda)$-expander]
A $d$-regular bipartite graph $G(L,R,E)$ with $|L|=|R|=n$ is said to be an $(n,d,\lambda)$--expander if $\sigma_2(A_G) ~\leq~ \lambda \cdot d$.
\end{definition}

Infinite families of $(n,d,\lambda)$--expanders, with growing $n$ as $d$ and $\lambda$ are constant, can be derived based on double covers of Ramanujan graphs of~\cite{LPS88} as long as $\lambda \geq \frac{2\sqrt{d-1}}{d}$.

\begin{lemma}[Expander Mixing Lemma]\label{lem:eml}
	Given an $(n,d,\lambda)$-expander $G=(L,R,E)$ and functions $f: L \rightarrow \R$ and $g: R \rightarrow \R$, the following well--known property is a simple consequence of definition of $(n,d,\lambda)$--expanders:
	\[ \Big\vert{\Ex{(\li,\ri) \sim E}{ f(\li) \cdot g(\ri)} - \Ex{\li\sim L}{f(\li)} \cdot \Ex{\ri\sim R}{g(\ri)}}
                \Big\vert ~\leq~ \lambda \cdot \norm{f}_2\norm{g}_2 \mper
	\]
	When $f = \mathbb{1}_S,g = \mathbb{1}_T$ are indicators of sets, denote by $E(S,T)$ the number of edges between $S$ and $T$.  Then,
	\[ \Big\vert{ {E(S,T)} -\frac{d\cdot |S||T|}{n}}
                \Big\vert ~\leq~ \lambda \cdot d\cdot  \sqrt{|S||T|} ~\leq~ \lambda \cdot d\cdot n \mper
	\]	
\end{lemma}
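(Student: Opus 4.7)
The plan is to reduce the claim to the standard SVD of the biadjacency matrix $A_G$. First, I would rewrite both sides in matrix form: viewing $f \in \R^L$ and $g \in \R^R$ as column vectors, the edge expectation equals $\frac{1}{nd}\, f^\T A_G g$, while the product of means equals $\frac{1}{n^2}(\one_L^\T f)(\one_R^\T g)$, so the quantity to bound becomes $\frac{1}{nd}\bigl(f^\T A_G g - \tfrac{d}{n}(\one_L^\T f)(\one_R^\T g)\bigr)$.

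Next, I would exploit the $d$-regularity of $G$ on both sides, which gives $A_G \one_R = d\, \one_L$ and $A_G^\T \one_L = d\, \one_R$. Hence the top singular triple of $A_G$ is $(d,\ \one_L/\sqrt{n},\ \one_R/\sqrt{n})$, and setting $M := A_G - (d/n)\, \one_L \one_R^\T$, the definition of an $(n,d,\lambda)$-expander yields $\norm{M}_{\mathrm{op}} = \sigma_2(A_G) \leq \lambda d$. The rank-one subtraction cancels the product-of-means contribution exactly, so the discrepancy on the left-hand side collapses to $\frac{1}{nd}\, f^\T M g$.

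To finish, I would apply the operator-norm bound $|f^\T M g| \leq \norm{M}_{\mathrm{op}} \cdot \norm{f}_{\ell_2}\norm{g}_{\ell_2} \leq \lambda d\,\norm{f}_{\ell_2}\norm{g}_{\ell_2}$, with $\norm{\cdot}_{\ell_2}$ the unnormalized Euclidean norm. Dividing by $nd$ recovers the claimed $\lambda\norm{f}_2\norm{g}_2$ bound once $\norm{\cdot}_2$ is read as the normalized norm $\sqrt{\Ex{\li}{f(\li)^2}}$, which is the convention forced by consistency with the indicator form. For the corollary, substituting $f = \mathbb{1}_S$ and $g = \mathbb{1}_T$ and multiplying through by $nd$ gives $\bigl|E(S,T) - d\abs{S}\abs{T}/n\bigr| \leq \lambda d\, \sqrt{\abs{S}\abs{T}}$, and the crude estimate $\sqrt{\abs{S}\abs{T}} \leq n$ yields the second inequality.

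There is no substantive obstacle here; this is a textbook spectral computation. The only delicate point is juggling the $n$- and $d$-dependent normalizations as one moves between the edge expectation, the matrix $A_G$, and the two norm conventions --- auditing the indicator form first is the quickest way to pin these down.
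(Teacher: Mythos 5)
The paper states this lemma without proof, describing it as a ``simple consequence of definition of $(n,d,\lambda)$-expanders,'' so there is no in-paper argument to compare against; your SVD argument is precisely the standard proof the authors have in mind, and it is correct. You also correctly navigate the one real subtlety: the paper's $A_G$ is the \emph{unnormalized} $0/1$ biadjacency matrix with $\sigma_2(A_G) \le \lambda d$, while $\norm{f}_2$ in the statement must be read as the normalized norm $\sqrt{\Ex{\li}{f(\li)^2}}$, which your back-substitution from the indicator corollary pins down consistently.
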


\paragraph{Notation and Edge orderings.} 
For a bipartite graph $G=(L,R,E)$, we index the left set by $\li$ and the right set by $\ri$. For a vertex $\li \in L$, 
we denote the set of edges incident to it by $N(\li)$ (left neighborhood), and the set of edges
incident to  $\ri\in R$ is denoted by $N(\ri)$(right neighborhood). 
We use $\li \sim \ri$ to denote that the vertex $\li \in L$ is adjacent to the vertex $\ri \in R$, that is, $(\li,\ri)\in E$. 

Fix an arbitrary ordering of the edges. Then there are bijections between the sets $E$, $L \times
[d]$, and $R \times [d]$, 
given by taking $(\li,i)$ to be the $i^{th}$ edge incident on $\li$, and similarly for $R \times [d]$.
Henceforth, we will implicitly assume such an ordering of the edges is fixed, and use the resulting bijections.

\paragraph{Codes.} Now we define the most basic notions associated with a code. 
\begin{definition}[Fractional Hamming Distance]
	Let $\Sigma$ be a finite alphabet and let $f,g\in \Sigma^n$. Then the (fractional) distance
        between $f,g$ is defined as \[ \dis(f,g) = \Ex{i\in [n]}{ \indi{f_i \neq g_i}} \mper \]
\end{definition}

\begin{definition}[Code, distance and rate]
	A code $\calC$ of block length $n$, distance $\delta$ and rate $\rho$ over the alphabet size $\Sigma$ is a set $\calC \subseteq \Sigma^n$ such that,
	\[
	\rho = \frac{\log_{|\Sigma|} |\calC|}{n} \;\;\text{and }\; \delta = \min_{h_1,h_2\in \calC \colon h_1 \ne h_2} \dis(h_1,h_2).
	\]	
	Such codes are succinctly represented as $[n,\rho,\delta]_\Sigma$.
	We say $\calC$ is a \emph{linear code} if $\Sigma$ can be identified with a finite field $\F_q$, and $\calC$ is a linear subspace of $\F_q^n$.
\end{definition}



\subsection{Alon-Edmonds-Luby distance amplification}\label{sec:AEL_prelims}

Alon, Bruck, Naor, Naor and Roth \cite{ABNNR92} introduced a graph-based distance amplification scheme which was generalized by Alon, Edmonds and Luby \cite{AEL95}, and used by Guruswami and Indyk \cite{GI05} to design linear-time unique decodable near-MDS codes.

The scheme is a three-step process involving an outer code  ($\calC_\out$), an inner code ($\calC_\inn$), and a bipartite expander $G$: (i) concatenate the outer code $\cC_\out$ with inner code $\cC_\inn$ (ii) shuffle the symbols of concatenated code via edges on a $d$-regular bipartite expander graph $G$, and (iii) collect $d$-symbols on the right vertices and fold them back to produce the final code, $\AELC$. We now formally define this procedure.

Fix an $(n,d,\lambda)$-expander $G = (L,R,E)$. Let $\calC_{\out}$ be an $[n,r_{\out},\delta_{\out}]_{\Sigma_\out}$ code and let $\calC_{\inn}$ be a $[d,r_{\inn},\delta_{\inn}]_{\Sigma_\inn}$ code with $\abs{\Sigma_\out} = |\calC_\inn|$, and a bijection $\phi: \Sigma_\out\rightarrow \calC_\inn\subseteq \Sigma_\inn^d$,  realizing this.  

A codeword $f$ of $\AELC$ technically belongs to the space $(\Sigma_\inn^d)^R$ but by our choice of parameters, $(\Sigma_\inn^d)^R$ is in bijection with the set $\Sigma_\inn^E$ and one can just \emph{fold} or \emph{unfold} the symbols to move from one space to the other. Choosing $f$ to be in $\Sigma_\inn^E$ allows us to talk about $f$ viewed from left vertex set $L$ or right vertex set $R$. Formally, if $f_\ell \defeq f\restrict{N(\ell)} \in \cC_\inn$, we have $ \inv{\varphi}(f_\ell) \in \Sigma_\out$. 


%
%
%
%
%

\begin{definition}[AEL Codes]

Given inner code $\cC_\inn$ an outer code $\cC_\out$, a map $\varphi$, and a graph $G$ as above, the AEL code is defined as,
\[
\AELC ~=~ \braces[\big]{f: R\to\Sigma_\inn^d \mid \forall \ell \in L, f_\ell \in \cC_\inn,\;\text{and } \parens[\big]{\inv{\varphi}(f_\ell)}_\ell \in \cC_\out }.
\]
\end{definition}

One can also think of the code procedurally as, starting with $f^* \in \cC_\out$, and defining $f : E\to \Sigma_\inn$ as the function such that $f_\ell := \varphi(f^*(\ell))$. This function $f$ can then be folded on the right vertices to obtain an AEL codeword.

\paragraph{Distance metrics for AEL Codes.}
Let $f,g \in \AELC$. As we have seen, one can view these as functions on $L$, or on $R$. Associated with each of these viewpoints, we can define the following two distance functions:
\begin{align*}
	\dis_L(f,g) &\defeq \Ex{\li\in L}{ \indi{ f_{\ell} \neq g_{\ri}}} \\
	\dis_R(f,g) &\defeq \Ex{\ri\in R}{ \indi{ f_{\ri} \neq g_{\ri}}} .
\end{align*}

Note that for the AEL code, the distance metric is the right distance $\Delta_R(\cdot, \cdot)$. Alon, Edmonds, and Luby, proved the following result, which shows that the construction can be used to amplify the distance to $\delta_0$ by choosing $\lambda$ sufficiently small.
\begin{theorem}[\cite{AEL95}]\label{thm:ael_distance}
	Let $f,g \in \AELC$ be distinct. Then, $\dis_R(f,g)~\geq~ \delta_{\inn}-
        \frac{\lambda}{\dis_L(f,g)} \geq~ \delta_{\inn} - \frac{\lambda}{\delta_{\out}} $.
\end{theorem}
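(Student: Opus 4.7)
The plan is to apply the expander mixing lemma to the pair of ``disagreement sets'' on the two sides, using an edge-level lower bound that comes from the distance of the inner code. Concretely, viewing $f,g \in \AELC$ as elements of $\Sigma_\inn^E$, I would set $L' = \{\ell \in L : f_\ell \neq g_\ell\}$ and $R' = \{r \in R : f_r \neq g_r\}$, so that $|L'|/n = \dis_L(f,g)$ and $|R'|/n = \dis_R(f,g)$.

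The key observation I would rely on is that every edge on which $f$ and $g$ disagree must lie in $E(L', R')$. On one hand, for each $\ell \in L'$ the local views $f_\ell, g_\ell$ are two \emph{distinct} codewords of $\cC_\inn$, so by the inner distance they disagree on at least $\delta_\inn \cdot d$ of the $d$ edges incident to $\ell$. On the other hand, if such a disagreement edge $(\ell, r)$ had $r \notin R'$, then $f_r = g_r$ as elements of $\Sigma_\inn^d$ would force $f$ and $g$ to agree on \emph{every} edge incident to $r$, a contradiction. Summing over $\ell \in L'$ thus yields $E(L', R') \geq |L'| \cdot \delta_\inn \cdot d$.

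I would then combine this with the upper bound $E(L', R') \leq d|L'||R'|/n + \lambda \cdot d \cdot n$ supplied by the expander mixing lemma (\cref{lem:eml}), and divide through by $|L'| \cdot d$ to obtain $\delta_\inn \leq \dis_R(f,g) + \lambda / \dis_L(f,g)$, which rearranges to the first claimed inequality. The second inequality then follows because distinct codewords $f, g \in \AELC$ correspond to distinct codewords of $\cC_\out$ under the coordinate-wise bijection $\varphi$, so $\dis_L(f,g) \geq \delta_\out$ by the distance of the outer code. There is no substantial obstacle here; the one step that requires a moment of thought is the ``error edges must have right endpoints in $R'$'' observation, which crucially uses that a mismatch at block $\ell$ reflects two genuine inner codewords differing (rather than arbitrary $d$-tuples), and similarly on the right.
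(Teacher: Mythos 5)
Your proof is correct and is exactly the standard expander-mixing-lemma argument that the paper sketches in its ``Sampling bound'' paragraph of the proof overview: disagreement at $\ell\in L'$ forces at least $\delta_{\inn}\cdot d$ error edges whose right endpoints lie in $R'$, and \cref{lem:eml} then gives $|L'|\,\delta_{\inn} d \le d|L'||R'|/n + \lambda d n$, which rearranges to the claim, with $\dis_L(f,g)\ge\delta_{\out}$ supplying the second inequality. The paper itself never writes out a standalone proof of \cref{thm:ael_distance} (it defers to the far more general \cref{thm:main_technical_avg}, of which this is the $k=2$, $g=h_1$ special case), but your argument is the intended direct one and all steps, including the justification that error edges land in $R'$ and that $\dis_L(f,g)>0$ so the division is legitimate, are sound.
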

We wait to give a proof as we will prove a much more general version of this claim in the next section. For the rest of the paper, the alphabet of the AEL code, $\Sigma_\inn^d$ will be denoted simply as $\Sigma$.

\section{Average radius generalized Singleton bound}\label{sec:avg-singleton}
We now prove that AEL codes satisfy the average-radius generalized Singleton bound. 
We will actually prove a more general statement involving a received word where some of the coordinates are erasures, marked with a special symbol $\bot$. We will denote such partially erased received words as $\erase{g}$ instead of $g$ to mark the distinction.
For $\erase{g} \in (\Sigma \cup \{\bot\})^n$ and $f \in \Sigma^n$, we define the distance using only the non-erasure coordinates as
\[
\Delta(\erase{g},h) ~\defeq~ 
\frac{1}{n} \cdot \abs{\inbraces{i \in [n] ~\mid~ \erase{g}_i \in \Sigma ~~\text{and}~~ h_i \neq \erase{g}_i}} \mper
\]
Note that if $s$ denotes the fraction of erasures in $\erase{g} \in (\Sigma \cup \{\bot\})^n$ and $h_1, h_2$ are two codewords from a code with distance $\delta$, then the triangle inequality implies $\Delta(\erase{g},h_1)+\Delta(\erase{g},h_2) \geq \delta - s$. 
The following definition generalizes this to any set of (at most) $k$ distinct codewords.

\begin{definition}[Average-radius list decodable with erasures]\label{def:gen_singleton}
A code $\calC \sub \Sigma^n$ is $(\delta, k,\eps)$ {\deffont average-radius
  list decodable with erasures} if for any $\erase{g} \in (\Sigma \cup \{\bot\})^n$ with (say) $s$ fraction of erasures, and for any set of codewords $\calH \sub \calC$ with $|\calH| \leq k$, it holds that 
\[
\sum_{h \in \calH} \Delta(\erase{g},h) ~\geq~ (\abs{\calH} - 1) \cdot (\delta - s - \eps) \mper
\]
\end{definition}
Note that the above definition also implies a lower bound on the distance of the code $\calC$, since for
any two distinct codewords $h_1, h_2$, we can take $g = h_1$ to get $\Delta(h_1, h_2) \geq \delta - \eps$.
Moreover, a code $\calC$ satisfying the above definition (even with $s=0$) must have the property that an open ball around any $g \in \Sigma^n$ of radius $(\frac{k-1}{k}) \cdot (\delta - \eps)$ contains at most $k-1$ codewords. For $k = 1/\eps$, this yields a list size of $1/\eps$ at radius $\delta - 2\eps$. 

We show that the stronger property above of being average-radius list decodable \emph{with erasures}
interacts nicely with the AEL construction, which yields a local-to-global result for this property
\ie if the (constant-sized) inner code $\calC_{\inn} \subseteq \Sigma_{\inn}^d$ used in the AEL
construction is $(\delta, k,\eps)$ average-radius list decodable with erasures, then so is the resulting (global) code $\AELC \subseteq (\Sigma_{\inn}^d)^n$.

\begin{theorem}\label{thm:main_technical_avg}
Let $k\geq 1$ be an integer and let $\eps > 0$. Let $\AELC$ be a
code obtained using the AEL construction using  $(G, \calC_{\out}, \calC_{\inn})$, where $\calC_{\inn}$ is $(\delta_0, k_0, \eps/2)$ average-radius list decodable with erasures, and $G$ is a $(n,d,\lambda)$-expander for $\lambda \leq \frac{\delta_{\out}}{6{k_0}^{k_0}} \cdot \eps$. 
Then, $\AELC$ is $(\delta_{0}, k_0,\eps)$ average-radius list decodable with erasures.
\end{theorem}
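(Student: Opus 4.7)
The plan is to prove the statement by strong induction on $\abs{\calH}$; the case $\abs{\calH} = 1$ is vacuous. For $\abs{\calH} = k \geq 2$, fix an erased word $\erase{g} \in (\Sigma \cup \{\bot\})^n$ with erasure fraction $s$ and $\calH = \{h_1, \ldots, h_k\} \subseteq \AELC$, and aim to show $\sum_{i \in [k]} \Delta(\erase{g}, h_i) \geq (k-1)(\delta_0 - s - \eps)$.

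The first key step is to exploit the local structure. For each $\ell \in L$, the projections $h_{1, \ell}, \ldots, h_{k, \ell} \in \calC_\inn$ induce a partition $\tau_\ell$ of $[k]$, where $i, j$ share a part iff $h_{i, \ell} = h_{j, \ell}$. Since any two distinct codewords of $\AELC$ correspond to distinct outer codewords that differ on at least $\delta_\out$ fraction of $L$, the trivial partition $\{[k]\}$ occurs on at most a $1 - \delta_\out$ fraction of left vertices. With at most $k^k$ partitions in total, pigeonhole yields a \emph{non-trivial} partition $\tau^* = (\calH_1, \ldots, \calH_p)$ (so $p \geq 2$, and thus each $\abs{\calH_j} \leq k - 1$) that occurs on a set $L^* \subseteq L$ of density at least $\delta_\out / k^k$.

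Next, for each part $\calH_j$ I define $s_j$ to be the fraction of right vertices $r$ at which $\erase{g}_r \in \Sigma$ but every $h \in \calH_j$ \emph{simultaneously} disagrees with $\erase{g}_r$ --- the common error locations for $\calH_j$. Applying the expander mixing lemma in the spirit of the proof of \cref{thm:ael_distance}, one shows that, for all but a small subset of $L^*$, the local joint-error fraction at $\ell$ for the part $\calH_j$ is at most $s_j$ plus a quantity of order $\lambda k^k / \delta_\out$; a matching sampling bound controls the local erasure fraction near $s$. A union bound over the $p \leq k$ parts then produces a ``good'' vertex $\ell^* \in L^*$ at which all these estimates hold simultaneously. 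The local codewords $\{f_{j, \ell^*}\}_{j \in [p]}$, distinct precisely because the partition at $\ell^*$ equals $\tau^*$, together with the view $\erase{g}_{\ell^*}$ now fall under the $(\delta_0, k_0, \eps/2)$-average-radius-with-erasures property of $\calC_\inn$, yielding
\[
\sum_{j \in [p]} s_j ~\geq~ (p - 1)(\delta_0 - s - \eps/2) - O\parens{\lambda k^k / \delta_\out}.
\]

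For each part $\calH_j$, let $\erase{g}^{(j)}$ be obtained from $\erase{g}$ by additionally erasing its $s_j n$ common error coordinates for $\calH_j$. The word $\erase{g}^{(j)}$ has erasure fraction $s + s_j$, and by construction $\Delta(\erase{g}, h) = \Delta(\erase{g}^{(j)}, h) + s_j$ for every $h \in \calH_j$. Since $\abs{\calH_j} \leq k - 1$, the induction hypothesis applied to $(\erase{g}^{(j)}, \calH_j)$ gives
\[
\sum_{h \in \calH_j} \Delta(\erase{g}, h) ~\geq~ (\abs{\calH_j} - 1)(\delta_0 - s - s_j - \eps) + \abs{\calH_j} \cdot s_j.
\]
Summing over $j$, using $\sum_j \abs{\calH_j} = k$ and $\sum_j (\abs{\calH_j} - 1) = k - p$, the target bound reduces to $\sum_j s_j \geq (p - 1)(\delta_0 - s - \eps)$, which follows from the local inequality above once $\lambda \leq \delta_\out \eps / (6 k_0^{k_0})$ is used to absorb the sampling slack into the $(p-1)\eps/2$ margin. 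The main technical obstacle I anticipate is precisely this bookkeeping: ensuring that every slack term (from pigeonhole on partitions, from sampling via the expander mixing lemma, and from the induction carryover on the erased code) combines into the single $\eps$ budget rather than compounding across recursive calls. This may require stating a mildly strengthened inductive hypothesis that explicitly tracks the sampling slack, and also careful verification that distinctness of the local codewords $f_{j, \ell^*}$ (needed to invoke the inner code's erasure-list-decoding property with $p$ codewords) holds at the chosen $\ell^*$.
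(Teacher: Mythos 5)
Your overall plan follows the paper's proof closely: the pigeonhole argument producing a non-trivial partition $\tau^*$ supported on a dense set $L^*$, the device of erasing the common error locations of each part $\calH_j$ to form $\erase{g}^{(j)}$, the induction on $\abs{\calH_j}\leq k-1$, and the algebraic reduction of the target to $\sum_j s_j \geq (p-1)(\delta_0-s-\eps)$ all match the paper exactly. The gap is in how you establish that last estimate.

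You propose to find a \emph{single} good vertex $\ell^* \in L^*$, via a union bound over the $p+1$ sampling events, at which the local erasure fraction is near $s$ and the local joint-error fraction is near $s_j$ for every part, and then to apply the inner code's inequality only at $\ell^*$. This cannot close with the stated choice of $\lambda$. For each event and each threshold $\theta$, the expander mixing lemma bounds the corresponding bad subset of $L^*$ by $\lambda n/\theta$; for a vertex to survive all $p+1$ events inside $L^*$ (of density $\geq \delta_\out/k^k$) you must take $\theta > (p+1)k^k\lambda/\delta_\out$, which with $\lambda \leq \delta_\out\eps/(6k_0^{k_0})$ gives $\theta$ of order $(p+1)\eps/6$ in the worst case $k = k_0$. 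Feeding this $\theta$ into the local inner-code inequality at $\ell^*$ then costs about $(2p-1)\theta \approx p^2\eps/3$, which exceeds the available margin $(p-1)\eps/2$ for \emph{every} $p \geq 2$; so the bound $\sum_j s_j \geq (p-1)(\delta_0-s-\eps/2)-O(\lambda k^k/\delta_\out)$ you wrote hides a $p^2$ in the slack. A strengthened inductive hypothesis does not repair this, since the overshoot is incurred entirely at the current level, not across recursive calls.

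The fix---and what the paper does---is to never select a single $\ell^*$. Apply the inner code's average-radius inequality (\cref{claim:local-bound}) pointwise at \emph{every} $\ell\in L^*$ and then average over $L^*$. The sampling estimates are then only needed in expectation,
\[
\Ex{\ell \in L^*}{\Delta(\gl,\fjl)} ~\leq~ s_j + \frac{\eps}{6}
\qquad\text{and}\qquad
\Ex{\ell \in L^*}{\sl} ~\leq~ s + \frac{\eps}{6}\mcom
\]
each following from one application of the expander mixing lemma to $(L^*,S_j)$ or $(L^*,S)$, with no union bound and no $p$-dependent loss. The per-part slack is then a clean $\eps/6$, and the final accounting closes with margin $(p-2)\eps/6 \geq 0$ coming from $p\geq 2$. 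Your final worry about distinctness of the local codewords $\fjl$ is also moot: by definition of $L^*$ they are distinct at \emph{every} $\ell\in L^*$, not only at a specially chosen vertex.
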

%
%
We need to prove that for any collection $\calH = \inbraces{ h_1,\cdots , h_k} \subseteq \AELC$, and any $\erase{g} \in (\Sigma_{\inn}^d \cup \{\bot\})^R$ with fraction of erasures (say) $s$, we must have 
\[
\sum_{h \in \calH} \Delta_R(\erase{g}, h) ~\geq~ (k-1) \cdot (\delta_0 - s - \eps) \mper
\]
We will prove this by induction on the size $k$ of the collection $\calH$. 
Note that the case $k=1$ is trivial since distances are non-negative. 
Before proceeding to the induction step, we first need to understand the ``local views" of codewords $h \in \calH$ from each vertex $\ell \in L$. We refer to them as local projections and develop some inequalities for them below.

%

\vspace{-5 pt}
\paragraph{Local projections and induced partitions.}
To use the inequalities for the local codes $\calC_{\inn}$, we will need to consider the ``local projections" of codewords $h \in \calH$ for each vertex $\ell \in L$, which are codewords in $\calC_{\inn} \subseteq \Sigma_{\inn}^d$. 
For a vertex $\ell \in L$, and $h \in \calH$, let $\hl$ be the local codeword in $\calC_{\inn}$ given by the values in $h$ for the edges incident on $\ell$.
We know that the codewords in $\calH$ are pairwise distinct, however, this need not be true for their local projections $\hil{1},\cdots \hil{k}$. 
We say that a left vertex $\ell$ \textit{induces a partition} $\tau_{\ell} = (\calH_{1,\ell}, \ldots, \calH_{p_{\ell},\ell})$ of $\calH$, wherein $h_i, h_j$ are in the same part if and only if $\hil{i} = \hil{j}$. 
Since the number of partitions if bounded by $k^k$, many left vertices must induce the same partition. We will need the additional fact that this must be a non-trivial partition with the number of parts $p \geq 2$.

\begin{claim}\label{lem:type_arg}
For any set of codewords $\calH = \inbraces{ h_1,\cdots , h_k}$, there exists a partition $\tau^*$
of it with at least 2 non-empty parts, and a set $L^*\subseteq L$ such that $\tau_{\ell} = \tau^*$
for all $\ell \in L^*$, and $|L^*| \geq \delta_\out \cdot n / {k^k}$.
\end{claim}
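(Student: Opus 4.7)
The plan is to combine a pigeonhole argument over partitions of $[k]$ with the distance guarantee of the outer code $\calC_{\out}$. There are only finitely many partitions of an $k$-element set (at most $k^k$, since each element is assigned to one of at most $k$ parts), so if we can show that a large set of left vertices induce \emph{non-trivial} partitions, pigeonhole gives us a single non-trivial partition $\tau^*$ realized on many of them.

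First, I would relate the local projections to the outer code. Recall that under the bijection $\phi : \Sigma_{\out} \to \calC_{\inn}$, the local projection $\hl$ at a left vertex $\ell$ corresponds to the outer symbol $\phi^{-1}(\hl) \in \Sigma_{\out}$, which is precisely the $\ell$-th coordinate of the outer codeword associated with $h$. Hence, for two distinct codewords $h_i \neq h_j \in \AELC$, the left vertices on which $h_{i,\ell} = h_{j,\ell}$ are exactly the coordinates on which the associated outer codewords agree. Since distinct codewords of $\calC_{\out}$ differ in at least $\delta_{\out} \cdot n$ coordinates, we get
\[
\abs{\inbraces{\ell \in L \,:\, h_{i,\ell} \neq h_{j,\ell}}} ~\geq~ \delta_{\out} \cdot n \mper
\]

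Next, fix any two distinct codewords in $\calH$, say $h_1$ and $h_2$, and let
$L' \defeq \inbraces{\ell \in L \,:\, h_{1,\ell} \neq h_{2,\ell}}$, so that $\abs{L'} \geq \delta_{\out} \cdot n$ by the above. For every $\ell \in L'$, the induced partition $\tau_{\ell}$ must separate $h_1$ and $h_2$ into different parts, and hence has at least $2$ non-empty parts. Now apply pigeonhole: since there are at most $k^k$ partitions of $\calH$, some partition $\tau^*$ is induced by at least $\abs{L'}/k^k \geq \delta_{\out} \cdot n / k^k$ vertices of $L'$. Setting $L^* \defeq \inbraces{\ell \in L' \,:\, \tau_{\ell} = \tau^*}$ gives the required set, and $\tau^*$ inherits the property of having at least two non-empty parts from any $\tau_{\ell}$ with $\ell \in L^*$, completing the proof.

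I do not anticipate any real obstacle here: the argument is a routine two-step pigeonhole, where the only non-cosmetic ingredient is translating the distance of the outer code into a lower bound on how often pairs of local projections differ. The $k^k$ upper bound on the number of partitions is crude but entirely sufficient for the statement.
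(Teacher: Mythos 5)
Your proof is correct and follows essentially the same route as the paper's: lower-bound the set of left vertices inducing a non-trivial partition via the outer code's distance, then pigeonhole over the at most $k^k$ possible partitions. The only (cosmetic) difference is that you define $L'$ using a fixed pair $h_1, h_2$, whereas the paper defines $L'$ as the set where \emph{some} pair of local projections differ; both give the same $\delta_{\out} \cdot n$ lower bound and the same conclusion.
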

\begin{proof}
Let $L'$ be the set of vertices in $L$ which induces a non-trivial partition, \ie for which not all local projections of the codewords in $\calH$ are identical:
\[
L' ~=~ \{\ell \in L \mid \exists\, i,j \in [k], \; \hil{i} \neq \hil{j} \}.
\]
Since the codewords in $\calH$ are distinct, we know that $|L'| \geq \delta_{\out} \cdot n$. 
The total number of partitions of $\calH$ is at most $k^k$, and thus among the vertices in $L'$, there must be at least $\nfrac{|L'|}{k^k}$ many vertices that induce a common partition of $\calH$, which by definition of $L'$ will be non-trivial.	
\end{proof}

Throughout the proof, we will work with one such fixed partition $\tau^* = (\calH_1, \ldots, \calH_p)$, and the corresponding set $L^*$. Fix an $\ell \in L^*$ and $j\in [p]$. 
By definition, the local codeword $\hl$ is the same for all $h\in \calH_j$. We denote this common codeword by $\fjl$ and the local projection of $\erase{g}$ by $\gl \subseteq (\Sigma_{\inn} \cup \{\bot\})^d$, where if $\erase{g}_r = \bot$ for a vertex $r \in R$, we take projection to be $\bot$ on all edges incident on $r$. 
Let $\sl$ denote the fraction of erasures ($\bot$ symbols) in $\gl$. We need the following inequality for local projections.
\begin{claim}\label{claim:local-bound}
Let $\tau^*$ and $L^*$ be as above, and  $\gl, \sl$ and $\fjl$ be defined as above for each $\ell \in L^*$. Then,
\[
\sum_{j\in [p]} \Delta(\gl, \fjl) ~\geq~ (p-1) \cdot (\delta_{0} - \sl - \eps/2) \mper
\]
\end{claim}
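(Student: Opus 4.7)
The plan is to prove this claim as a direct application of the hypothesis that $\calC_{\inn}$ itself satisfies the average-radius generalized Singleton bound with erasures, after verifying the necessary structural conditions on the local projections.

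First, I would check that $f_{1,\ell},\dots,f_{p,\ell}$ form a set of $p$ \emph{distinct} codewords in $\calC_{\inn}$. By the very definition of the induced partition $\tau_\ell$, two codewords $h, h' \in \calH$ lie in the same part if and only if their local projections $h_\ell$ and $h'_\ell$ agree, so codewords chosen from distinct parts of $\tau_\ell = \tau^*$ yield distinct local codewords $f_{j,\ell}$. Next, I would observe that $p \leq |\calH| = k \leq k_0$, so the collection $\{f_{j,\ell}\}_{j \in [p]}$ is a set of at most $k_0$ distinct codewords of $\calC_{\inn}$, which is exactly the size regime in which the hypothesis on $\calC_{\inn}$ applies.

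Next, I would interpret $\gl \in (\Sigma_{\inn} \cup \{\bot\})^d$ as a partially erased received word for $\calC_{\inn}$ with erasure fraction exactly $\sl$: by construction, whenever an edge $e$ incident to $\ell$ has right endpoint $r$ with $\erase{g}_r = \bot$, we declared $\gl$ to have $\bot$ on coordinate $e$, and no other coordinates are erased. Therefore the setup matches \cref{def:gen_singleton} precisely, with inner-code parameters $(\delta_0, k_0, \eps/2)$ and erasure fraction $\sl$.

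Finally, I would invoke the assumption that $\calC_{\inn}$ is $(\delta_0, k_0, \eps/2)$ average-radius list decodable with erasures on the received word $\gl$ and the codeword set $\{f_{1,\ell},\dots,f_{p,\ell}\}$ to conclude
\[
\sum_{j \in [p]} \Delta(\gl, f_{j,\ell}) \;\geq\; (p-1)\cdot(\delta_0 - \sl - \eps/2),
\]
as required. There is no real obstacle here: the claim is essentially the bookkeeping step that translates the inner code's assumed property into the language of local projections induced by $\tau^*$, and all the work lies in the preceding definitions. The harder inequalities — controlling how the $\sl$'s and the simultaneous-error fractions aggregate into global distances via the expander mixing lemma and the inductive hypothesis — will come later in the proof of \cref{thm:main_technical_avg}.
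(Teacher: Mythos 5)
Your proof is correct and follows exactly the same route as the paper, which dispatches this claim in one line by citing the $(\delta_0,k_0,\eps/2)$ average-radius list decodability with erasures of $\calC_{\inn}$ together with $p \leq k \leq k_0$; you have simply made explicit the bookkeeping (distinctness of the $f_{j,\ell}$ across parts of $\tau^*$, and that $\gl$ is an erased word with erasure fraction $\sl$) that the paper leaves implicit.
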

\begin{proof}
The claim follows from the fact that the inner code is $(\delta_0, k_0,\eps/2)$ average-radius list decodable with erasures and $p < k \leq k_0$.
\end{proof}

We will also need the following claim regarding the local erasure fractions $\sl$.
\begin{claim}[Sampling bound for erasures]\label{claim:sampling_erasure}
For the set $L^*$ defined as above with $\abs{L^*} \geq \delta_{\out} \cdot n / k^k$, and local erasure fractions $\sl$,
\[
		\Ex{\ell\in L^*}{\sl} ~\leq~ s + \frac{\eps}{6} \mper
\]
\end{claim}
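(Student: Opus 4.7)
The plan is to prove this as a direct application of the expander mixing lemma, exploiting the fact that being an erasure at a right vertex $r$ propagates uniformly to all $d$ edges incident to $r$, so counting local erasures on the left reduces to counting edges between $L^*$ and the erased set on the right.

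First, I would fix notation. Let $R_\bot \defeq \{r \in R \mid \erase{g}_r = \bot\}$, so $|R_\bot| = s \cdot n$ by hypothesis. By the definition of $\gl$ (every edge incident to an erased right vertex is marked $\bot$), the number of $\bot$-edges incident to a left vertex $\ell$ is exactly $|N(\ell) \cap R_\bot| \cdot$ (with multiplicity of incidence), and hence
\[
s_\ell \cdot d ~=~ \abs{\{e = (\ell, r) \in E : r \in R_\bot\}}.
\]
Summing over $\ell \in L^*$ gives $\sum_{\ell \in L^*} s_\ell \cdot d = E(L^*, R_\bot)$.

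Next, I would apply the expander mixing lemma (\cref{lem:eml}) to the pair $(L^*, R_\bot)$:
\[
\abs{E(L^*, R_\bot) - \tfrac{d}{n} \cdot |L^*| \cdot |R_\bot|} ~\leq~ \lambda \cdot d \cdot \sqrt{|L^*| \cdot |R_\bot|} ~\leq~ \lambda \cdot d \cdot n.
\]
Dividing by $d \cdot |L^*|$ and using $|R_\bot| = sn$ yields
\[
\Ex{\ell \in L^*}{s_\ell} ~\leq~ s + \lambda \cdot \frac{n}{|L^*|}.
\]

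Finally, I would plug in the lower bound $|L^*| \geq \delta_{\out} \cdot n / k^k$ from \cref{lem:type_arg} (with $k \leq k_0$), so that $n/|L^*| \leq k_0^{k_0}/\delta_{\out}$, and the hypothesis $\lambda \leq \frac{\delta_{\out}}{6 k_0^{k_0}} \cdot \eps$ from \cref{thm:main_technical_avg}, giving
\[
\Ex{\ell \in L^*}{s_\ell} ~\leq~ s + \frac{\lambda \cdot k_0^{k_0}}{\delta_{\out}} ~\leq~ s + \frac{\eps}{6}.
\]
There is no real obstacle here; the only subtlety is noticing that the erasure indicator pulls back from $R$ to $E$ cleanly (since erasures are declared per right vertex, not per edge), which is exactly what makes the expander mixing lemma directly applicable.
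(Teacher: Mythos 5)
Your proof is correct and follows essentially the same argument as the paper: identify the erased set on the right, observe that $\sum_{\ell \in L^*} s_\ell \cdot d$ counts exactly the edges into that set, apply the expander mixing lemma, and conclude using the lower bound on $|L^*|$ and the hypothesis on $\lambda$. No issues.
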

\begin{proof}
Let $S = \inbraces{r \in R ~|~ \erase{g}_r = \bot}$ be the set of erasure vertices on the right with $\abs{S} = s \cdot n$. 
Then, applying the expander mixing lemma yields
\[
\sum_{\ell \in L^*} \sl \cdot d ~=~ \abs{E(L^*,S)} ~\leq~ \frac{d}{n} \cdot \abs{L^*} \cdot \abs{S} + \lambda \cdot d \cdot n 
\quad\implies\quad
\Ex{\ell\in L^*}{\sl} ~\leq~ s + \lambda \cdot \frac{n}{\abs{L^*}} \mper
\]
Using $\abs{L^*} \geq \delta_{\out} \cdot n / k^k$ and $\lambda \leq (\delta_{\out}/k^k) \cdot (\eps/6)$ then proves the claim. 
\end{proof}

\vspace{-5 pt}
\paragraph{Completing the induction step.} 
Recall that using the induction hypothesis, we can say that for any $\erase{g'}$ (with say $s'$ fraction of erasures) and any $\calH' \subseteq \AELC$ with $\abs{\calH'} \leq k-1$, we must have $\sum_{h \in \calH'} \Delta_R(\erase{g'},h) \geq (\abs{\calH'}-1) \cdot (\Delta - s' - \eps)$.
Since the partition $\tau^* = (\calH_1, \ldots, \calH_p)$ is nontrivial, the cardinality $\abs{\calH_j}$ of each part is at most $k-1$ and we can claim by induction with the given $\erase{g}$ that
\[
\forall j \in [p] \qquad \sum_{h \in \calH_j} \Delta_R(\erase{g}, h) \geq (\abs{\calH_j}-1) \cdot (\Delta - s - \eps) \mper
\]
The following key lemma yields a strengthening of this bound by applying the induction hypothesis with a \emph{different} center $\gj$ for each $\calH_j$.

\begin{lemma}[Inductive bound on distances]\label{lemma:inductive}
Let the partition $\tau^* = (\calH_1, \ldots, \calH_p)$ and the set $L^*$ be as above, and let the local projections $\gl$ and $\fjl$ be also as defined above. If the code $\AELC$ is $(\delta_0,k-1,\eps)$ average-radius list decodable with erasures, then for every $j\in [p]$,
\[			
\sum_{h\in \calH_j} \Delta_R(\erase{g},h) ~\geq~  (|\calH_j|-1) \cdot \inparen{\Delta - s - \eps} + \Ex{\ell \in L^*}{\Delta(\gl, \fjl)} - \frac{\eps}{6} \mper
\]
\end{lemma}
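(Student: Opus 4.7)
\textbf{Proof plan for \cref{lemma:inductive}.} The strategy is the one previewed in the overview: treat the \emph{simultaneous} error locations of the codewords in $\calH_j$ as additional erasures, apply the inductive hypothesis to the resulting center with the smaller collection $\calH_j$, and then use the expander mixing lemma to relate the extra erasure fraction to $\Ex{\ell \in L^*}{\Delta(\gl,\fjl)}$.

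First I fix $j \in [p]$ and define
\[
R_j ~\defeq~ \inbraces[\big]{r \in R \mid \erase{g}_r \neq \bot \text{ and } h_r \neq \erase{g}_r \text{ for every } h \in \calH_j} \mcom
\qquad s_j \defeq \abs{R_j}/n \mper
\]
Let $\erase{g}^{(j)}$ be obtained from $\erase{g}$ by replacing the symbols at coordinates in $R_j$ with $\bot$, so $\erase{g}^{(j)}$ has erasure fraction $s + s_j$. By construction, every $r \in R_j$ is an error position for every $h \in \calH_j$, so passing from $\erase{g}$ to $\erase{g}^{(j)}$ removes exactly $s_j \cdot n$ error coordinates from each $h \in \calH_j$, giving
\[
\Delta_R(\erase{g}^{(j)},h) ~=~ \Delta_R(\erase{g},h) - s_j \qquad \forall h \in \calH_j \mper
\]
Since $\tau^*$ is nontrivial, $\abs{\calH_j} \leq k-1$, so the inductive hypothesis applied to $\erase{g}^{(j)}$ and $\calH_j$ yields
\[
\sum_{h \in \calH_j} \Delta_R(\erase{g},h) ~-~ \abs{\calH_j} \cdot s_j
~\geq~ (\abs{\calH_j}-1)\cdot(\delta_0 - s - s_j - \eps) \mcom
\]
which after rearranging gives
\[
\sum_{h \in \calH_j} \Delta_R(\erase{g},h) ~\geq~ (\abs{\calH_j}-1)\cdot(\delta_0 - s - \eps) + s_j \mper
\]

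The second step is to lower bound $s_j$ in terms of $\Ex{\ell \in L^*}{\Delta(\gl,\fjl)}$. For any $\ell \in L^*$ and any edge $e=(\ell,r)$ counted by $\Delta(\gl,\fjl)$, the vertex $r$ satisfies $\erase{g}_r \neq \bot$ (since $\gl(e) \neq \bot$) and the coordinate of $g_r$ on edge $e$ differs from the coordinate of $h_r$ on edge $e$ for every $h \in \calH_j$ (since they all share the local projection $\fjl$); hence $h_r \neq g_r$ for all $h \in \calH_j$, i.e.\ $r \in R_j$. Therefore the edges counted by $\sum_{\ell \in L^*} \Delta(\gl,\fjl) \cdot d$ are all contained in $E(L^*, R_j)$. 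By the expander mixing lemma,
\[
d \cdot \abs{L^*} \cdot \Ex{\ell \in L^*}{\Delta(\gl,\fjl)}
~\leq~ \abs{E(L^*, R_j)}
~\leq~ \frac{d \cdot \abs{L^*} \cdot \abs{R_j}}{n} + \lambda \cdot d \cdot n \mcom
\]
so $\Ex{\ell\in L^*}{\Delta(\gl,\fjl)} \leq s_j + \lambda \cdot n/\abs{L^*}$. Using $\abs{L^*} \geq \delta_{\out} \cdot n / k^k$ (\cref{lem:type_arg}) and $\lambda \leq (\delta_{\out}/k^k) \cdot (\eps/6)$, we obtain $s_j \geq \Ex{\ell \in L^*}{\Delta(\gl,\fjl)} - \eps/6$.

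Substituting this into the inductive bound finishes the proof. The only part that requires care is the bookkeeping in the induction step: verifying that augmenting the erasure set by $R_j$ decreases each distance $\Delta_R(\erase{g},h)$ for $h \in \calH_j$ by exactly $s_j$ (which uses crucially that $R_j$ consists of \emph{simultaneous} error positions disjoint from the already-erased coordinates). Once that identification is clean, the expander mixing step and the inductive invocation are essentially forced by the setup, and no further combinatorial input is needed.
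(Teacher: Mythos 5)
Your proof is correct and follows essentially the same route as the paper: erase the simultaneous error locations of $\calH_j$, apply the inductive hypothesis to the smaller collection with the augmented erasure set, and bound the extra erasure fraction via the expander mixing lemma. The only cosmetic difference is that you erase \emph{all} common error locations $R_j$, whereas the paper erases only those witnessed by an error edge from $L^*$ (a subset of your $R_j$); both choices satisfy the two properties the argument needs, so the proofs are interchangeable.
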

\begin{proof}
By definition of $\fjl$, we have that for all $h \in \calH_j$, $\hl = \fjl$ for all $\ell \in L^*$. Thus, if $\gl$ and $\fjl$ differ on edge $(\ell,r)$ with $\erase{g} \neq \bot$, then $r$ is a \emph{common error location} for all $h \in \calH_j$. We define the set 
\[
S_j ~\defeq~ \inbraces{r \in R ~\mid~ \erase{g}_r \neq \bot ~\text{and}~ \exists \ell \in L^*, e = (\ell,r) ~\text{such that}~ (\gl)_e \neq (\fjl)_e } \mper
\]
Let $s_j = \abs{S_j}/n$ and let $\gj$ be obtained from $\erase{g}$ by replacing symbols in $S_j$ by $\bot$. 
The total fraction of erasures in $\gj$ is $(\abs{S}+\abs{S_j})/n = s+s_j$. Also, $\Delta_R\parens[\big]{\gj,h} = \Delta_R(\erase{g},h) - s_j$ for all $h \in \calH_j$, since all vertices in $S_j$ are known to be error locations which are erased in $\gj$. Applying the inductive hypothesis with $\gj$ now gives
\begin{align*}
&\sum_{h \in \calH_j} (\Delta_R(\erase{g},h) - s_j)
~=~ \sum_{h \in \calH_j} \Delta_R\parens[\big]{\gj,h} 
~\geq~ (\abs{\calH_j}-1) \cdot (\Delta - s - s_j - \eps) \\
\implies~~
&\sum_{h \in \calH_j} \Delta_R(\erase{g},h) ~\geq~ (\abs{\calH_j}-1) \cdot (\Delta - s - \eps) + s_j \mper
\end{align*}
To obtain a bound on $s_j$, we again use expander mixing lemma to deduce
\[
\sum_{\ell \in L^*} \Delta(\gl,\fjl) \cdot d ~=~ \abs{E(L^*, S_j)} ~\leq~ \frac{d}{n} \cdot \abs{L^*} \cdot \abs{S_j} + \lambda \cdot d \cdot n 
\quad \implies \quad
\Ex{\ell \in L^*}{\Delta(\gl, \fjl)} ~\leq~ s_j + \lambda \cdot \frac {n}{\abs{L^*}} 
\mper 
\]
Using $\abs{L^*} \geq \delta_{\out} \cdot n / k^k$ and $\lambda \leq (\delta_{\out}/k^k) \cdot (\eps/6)$ gives the required bound.
\end{proof}

We can now prove the induction step for the set $\calH = \inbraces{h_1, \ldots, h_k}$.
\begin{proof}[Proof of \cref{thm:main_technical_avg}]
The proof, as mentioned earlier, is by induction on $k$.

Let $L^*$ and $\tau^* = (\calH_1, \ldots, \calH_p)$ be as above. We use the induction hypothesis to apply the bound from \cref{lemma:inductive} to each part $\calH_j$ which has size at most $k-1$ as $\tau^*$ is non-trivial. This gives,
\begin{align*}
\sum_{h\in \calH} \Delta_R(\erase{g},h) 
~=~ \sum_{j\in [p]} \sum_{h\in \calH_j} \Delta_R(\erase{g},h) 
&~\geq~ \sum_{j\in [p]} \inparen{(|\calH_j|-1) \cdot \inparen{\Delta - s - \eps} + \Ex{\ell \in L^*}{\Delta(\gl, \fjl)} - \frac{\eps}{6}}\\
&~=~  (k-p) \cdot \inparen{\Delta - s - \eps} + \sum_{j\in [p]} \Ex{\ell \in L^*}{\Delta(\gl, \fjl)} - \frac{p\eps}{6}.
\end{align*}
Using local distance inequality from \cref{claim:local-bound} and the sampling bound from \cref{claim:sampling_erasure}, we can bound the second term as
\[
\sum_{j\in [p]} \Ex{\ell \in L^*}{\Delta(\gl, \fjl)} 
~\geq~ (p-1) \cdot \parens[\Big]{\delta_{\inn} - \Ex{\ell \in L^*}{s_{\ell}} - \frac{\eps}{2}}
~\geq~ (p-1) \cdot  \parens[\Big]{\delta_{\inn} - s - \frac{2\eps}{3}} \mper
\]
Combining the above bounds and using $\delta_{\inn} \geq \Delta$ gives,
\begin{align*}
\sum_{h\in \calH} \Delta_R(\erase{g},h) 
&~\geq~ (k-p) \cdot (\Delta - s - \eps) + (p-1) \cdot  \parens[\Big]{\Delta - s - \frac{2\eps}{3}} - \frac{p\eps}{6} \\
&~=~ (k-1) \cdot (\Delta - s - \eps) + \frac{(p-1)\eps}{3} - \frac{p\eps}{6} \mcom
\end{align*}
which completes the proof since $p \geq 2$.
\end{proof}
As we will prove in \cref{sec:inner-code}, it is easy to observe using known results by Alrabiah, Guruswami and Li~\cite{AGL24} that a random
linear code satisfies \cref{def:gen_singleton} with high probability, and can thus be used as the
inner code $\calC_{\inn}$. 
Since the inner code is a constant-sized object, we can search over all linear codes in $(\F_q)^d$
of dimension $\rho \cdot d$ for a given rate $\rho$, and the code $\AELC$ then yields an explicit
construction of codes achieving the generalized Singleton bound.
Moreover, if the inner code is required to be fully explicit, it can also be obtained from folded Reed-Solomon codes, using the results by Chen and Zhang~\cite{CZ24}.
\begin{corollary}\label{cor:ael_instantiation}
For every $\rho, \eps \in (0,1)$ and $k \in \N$, there exist explicit inner codes $\calC_{\inn}$ and an infinite family of explicit codes $\AELC \subseteq (\F_q^d)^n$ obtained via the AEL construction that satisfy: 
\begin{enumerate}
\item $\rho(\AELC) \geq \rho$.
\item For any $g \in  (\F_q^d)^n$ and any $\calH \subseteq \AELC$ with $\abs{\calH} \leq k$ that
\[
\sum_{h \in \calH} \Delta(g,h) ~\geq~ (\abs{\calH}-1) \cdot (1 - \rho - \eps) \mper
\]
\item The alphabet size $q^d$ of the code $\AELC$ can be taken to be $2^{O(k^{3k}/\eps^9)}$.
\item $\AELC$ is characterized by parity checks of size $O(k^{2k}/\eps^{11})$ over the field $\F_q$.
\end{enumerate}
\end{corollary}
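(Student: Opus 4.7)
The plan is to prove this as a direct instantiation of \cref{thm:main_technical_avg}. Set $\delta_0 = 1-\rho-\eps/2$ and pick an inner rate $\rho_\inn = \rho + \eps/2$. For the inner code $\calC_\inn \subseteq \F_q^d$, I would invoke the result of Alrabiah, Guruswami, and Li, which guarantees the existence of an $\F_q$-linear code of rate $\rho_\inn$ and alphabet $q = 2^{(1/\eps)^{O(1)}}$ that is $(\delta_0, k, \eps/2)$-average-radius list decodable with erasures (the erasure variant follows from the non-erasure version via a union bound over puncturings, as noted in the remark following the main theorem). Explicitness of $\calC_\inn$ comes from brute-force search over the constant-sized space of all such linear codes; alternatively, one could use the fully explicit folded Reed-Solomon codes of Chen-Zhang at the cost of a worse alphabet size.

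For the outer code $\calC_\out$ I would use an explicit LDPC code over $\F_q$ with rate $\rho_\out \geq \rho/\rho_\inn$, some constant distance $\delta_\out$, and constant-sized parity checks of weight $w$ (for example, a Sipser--Spielman style expander code). For the expander $G$, I would take the double cover of a degree-$d$ Ramanujan graph, with $d = \Theta((k^k/(\delta_\out \eps))^2)$ so that $2\sqrt{d-1}/d \leq \delta_\out \eps/(6k^k)$, as needed in the hypothesis of \cref{thm:main_technical_avg}. Applying the theorem then gives that $\AELC$ is $(\delta_0, k, \eps)$-average-radius list decodable with erasures; setting $s=0$ in \cref{def:gen_singleton} yields the inequality $\sum_{h\in\calH} \Delta(g,h) \geq (|\calH|-1)(1-\rho-\eps)$, and the rate bound $\rho(\AELC) \geq \rho_\inn \cdot \rho_\out \geq \rho$ is immediate from the multiplicative rate property of AEL. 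The alphabet size $q^d$ and the parity-check size $w \cdot d$ over $\F_q$ then yield the stated bounds after substitution.

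The main obstacle is purely parameter bookkeeping: one must balance the alphabet contribution from the inner code (polynomial in $k$ and $1/\eps$) with the expander-degree requirement, which blows up as $(k^k/\eps)^{O(1)}$ due to the $k^k$ factor arising from the number of partitions in \cref{lem:type_arg}, while also ensuring the various $\eps$-slacks across the inner, outer, and global stages add up to at most $\eps$ in the final bound. A minor but important point is that the inner code must satisfy the erasure variant of average-radius list decoding rather than the plain one, but as indicated in the remark this follows by union-bounding the standard non-erasure existence argument over all puncturings, so it poses no real difficulty.
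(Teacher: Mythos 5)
Your proposal follows the paper's proof in its essential structure: instantiate \cref{thm:main_technical_avg} with an AGL-style random linear inner code (made explicit by brute-force search), a Tanner-style LDPC outer code, and a Ramanujan expander, then carry out the parameter bookkeeping. However, your $\eps$-accounting has a slip. You set $\delta_0 = 1-\rho-\eps/2$, take the inner code to be $(\delta_0, k, \eps/2)$-ARLD with erasures, and then invoke \cref{thm:main_technical_avg} to conclude $\AELC$ is $(\delta_0, k, \eps)$-ARLD with erasures. Plugging $s=0$ into \cref{def:gen_singleton} then gives $\sum_{h\in\calH}\Delta(g,h) \geq (\abs{\calH}-1)(\delta_0 - \eps) = (\abs{\calH}-1)(1-\rho-3\eps/2)$, \emph{not} $(\abs{\calH}-1)(1-\rho-\eps)$ as you claim. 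The paper avoids this by taking $\delta_0 = 1-\rho$ and $\rho_\inn = \rho + \eps/4$, so that the inner code (which is naturally $(1-\rho_\inn,k,\eps/4)$-ARLD) is already $(1-\rho, k, \eps/2)$-ARLD and the full slack budget of $\eps$ is consumed exactly by the theorem. Your argument is trivially repaired by replacing $\eps$ with $2\eps/3$ throughout (or by using the paper's split), but as written the final inequality does not follow from your parameter choices. Two additional minor points: the inner alphabet size must be $2^{O(k/\eps)}$, not $2^{(1/\eps)^{O(1)}}$ as you state, since the AGL bound requires $q \geq 2^{\Omega(L/\eps)}$ (though this does not change the final $2^{O(k^{3k}/\eps^9)}$, which is dominated by the expander degree); and you must commit to a concrete $\delta_\out = \Theta(\eps^3)$ (as the paper does) before the degree $d$ and hence the stated bounds on $q^d$ and the parity-check size can actually be derived.
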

\begin{proof}
Let $d = O(k^{2k}/\eps^8)$ be such that there exist explicit families of $(n,d,\lambda)$-expander graphs (for arbitrarily large $n$) with $\lambda \leq \eps^4/(2^{18} k^k)$. 
Let $\calC_{\inn} \subseteq \F_q^d$ be a code given by \cref{cor:random-code}, with rate $\rho_{\inn} = \rho + \eps/4$, which is $(1 - \rho, k, \eps/2)$ average-radius list decodable with erasures. Note that the alphabet size $q$ for $\calC_{\inn}$ can be taken to be $2^{O(k + 1/\eps)}$.
Finally, let $\calC_{\out} \subseteq (\F_q^{\rho_{\inn} \cdot d})^n$ be an outer (linear) code with rate $\rho_{\out} = 1 - \eps/4$ and distance (say) $\delta_{\out} = \eps^3/2^{15}$. Explicit families of such codes can be also be obtained (for example) via expander-based Tanner code constructions (see Theorem 11.4.6 and Corollary 11.4.8 in \cite{GRS23}). Using Tanner codes also gives that $\calC_{\out}$ has parity checks of size at most $O(1/\eps^3)$.

Given the above parameters, we have 
$\rho(\AELC) ~\geq~ \rho_{\out} \cdot \rho_{\inn} ~=~ (1-\eps/4) \cdot (\rho + \eps/4) ~\geq~ \rho$.
Since $\lambda \leq \eps \cdot \delta_{\out}/(6k^k)$ and $\calC_{\inn}$ is $(1-\rho, k, \eps/2)$ average-radius list decodable with erasures, we can use \cref{thm:main_technical_avg} to conclude that $\AELC$ is $(1-\rho, k, \eps)$ average-radius list decodable (with erasures) which yields the second condition. 
Since $\calC_{\out}$ has parity checks of $O(1/\eps^3)$, $\calC_{\inn} \subseteq \F_q^d$, and each symbol of $\AELC$ is a function of at most $d$ symbols from $\calC_{\out}$ (encoded via $\calC_{\inn}$), $\AELC$ can be taken to have parity checks of size at most $O(d \cdot (1/\eps^3)) = O(k^{2k}/\eps^{11})$.
Finally, we note that the alphabet size of the code $\AELC$ is $q^d = \exp\inparen{O((k + 1/\eps) \cdot (k^{2k}/\eps^8))} = \exp\inparen{k^{3k}/\eps^9}$, which proves the claim. 
\end{proof}

\vspace{-5 pt}
\paragraph{A weaker consequence of \cref{def:gen_singleton}.}
We also state the following consequence of \cref{def:gen_singleton}, which still yields a
strengthening of the average distance inequality and the generalized Singleton bound (when
instantiated with the appropriate code), and may be of independent interest. 
Note that this statement also yields a corollary of \cref{thm:main_technical_avg} which is simply in
terms of center $g \in \Sigma^n$ with no erasure symbols, and yields an advantage over the distance
inequality, in terms of the error locations which are \emph{common} to all the codewords $h_1, \ldots, h_k$.
\begin{lemma}\label{lemma:common-error-bound}
Let $\calC \subseteq \Sigma^n$ be $(\delta_0, k_0, \eps)$ average radius list-decodable with
erasures. Then, for any $g \in \Sigma^n$, any $k \leq k_0$ and $h_1, \ldots, h_k \in \calC$, we have
that
\[ \sum_{i \in [k]} \Delta(g,h_i) ~\geq~ (k-1) \cdot (\delta_0 - \eps) + \Ex{r \in [n]}{\prod_{i \in
    [k]}\indi{h_{i,r} \neq g_r}} \mper
\]
\end{lemma}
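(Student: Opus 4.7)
The plan is to reduce to the erasures version of the definition by explicitly erasing the coordinates where $g$ differs from \emph{all} of $h_1,\ldots,h_k$ simultaneously. Concretely, let
\[
S ~\defeq~ \inbraces{r \in [n] ~\mid~ h_{i,r} \neq g_r ~\text{for all}~ i \in [k]} \mcom
\]
and write $s = \abs{S}/n$, which is exactly the expected product of indicators appearing on the right-hand side of the desired inequality. Form $\erase{g} \in (\Sigma \cup \{\bot\})^n$ by replacing the symbol of $g$ at each $r \in S$ with $\bot$, so that $\erase{g}$ has an $s$ fraction of erasures.

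The key observation is that for every $i \in [k]$, the erasures added only occur at coordinates where $g$ already disagreed with $h_i$. Hence these coordinates are ``lost'' from the distance count:
\[
\Delta(\erase{g}, h_i) ~=~ \Delta(g,h_i) - s \mper
\]
Now apply the hypothesis that $\calC$ is $(\delta_0, k_0, \eps)$ average-radius list decodable with erasures to the center $\erase{g}$ and the list $\{h_1,\ldots,h_k\}$ (which is allowed since $k \leq k_0$), giving
\[
\sum_{i \in [k]} \Delta(\erase{g}, h_i) ~\geq~ (k-1) \cdot (\delta_0 - s - \eps) \mper
\]
Substituting $\Delta(\erase{g}, h_i) = \Delta(g, h_i) - s$, we get
\[
\sum_{i \in [k]} \Delta(g,h_i) - k s ~\geq~ (k-1) \cdot (\delta_0 - \eps) - (k-1) s \mcom
\]
which rearranges to the desired inequality with an extra $s$ on the right, and $s$ is precisely $\Ex{r \in [n]}{\prod_{i \in [k]} \indi{h_{i,r} \neq g_r}}$.

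There is no real obstacle here: the whole point is that the erasures-with-distance definition was designed to have exactly this kind of ``discounting'' behavior, so choosing the erasure set to be the common error locations converts the parameter $s$ of the erasures definition into the common-error term on the right-hand side, and the arithmetic of $-(k-1)s + ks = s$ produces the claimed bonus. The only thing to be careful about is the direction of the inequality $\Delta(\erase{g},h_i) = \Delta(g,h_i) - s$ (rather than $\leq$), which relies on the fact that $S$ is contained in the disagreement set of every $h_i$ with $g$ by construction.
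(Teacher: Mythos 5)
Your proof is correct and is essentially identical to the paper's own: you define the erasure set $S$ to be the common error locations, observe that $\Delta(\erase{g},h_i) = \Delta(g,h_i) - s$ for every $i$, invoke the erasures definition, and rearrange so that the arithmetic $ks - (k-1)s = s$ produces the bonus term.
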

\begin{proof}
Define $\erase{g} \in \Sigma^n$ as 
\[
\erase{g}_r ~=~ 
\begin{cases}
\bot & \text{if} ~g_r \neq h_{i,r} ~\forall i \in [k] \\
g_r &\text{otherwise} \mper
\end{cases}
\]
Note that the fraction of erasure symbols is $s = \Ex{r \in [n]}{\prod_{i \in [k]}\indi{h_{i,r} \neq
    g_r}}$ and $\Delta(\erase{g},h_i) = \Delta(g,h) - s$ for all $i \in [k]$. Applying
\cref{def:gen_singleton} with $\erase{g}$ gives
\[
\sum_{i \in [k]} (\Delta(g,h_i) - s) 
~=~ \sum_{i \in [k]} \Delta(\erase{g},h_i) 
~\geq~ (k-1) \cdot (\delta_0 - s - \eps) \mcom
\]
and rearranging proves the claim.
\end{proof}
\begin{remark}
A reader might notice that the definition of the $\erase{g}$ is the same as used in the proof of
\cref{lemma:inductive}. 
In fact, it is easy to see that the consequence \cref{lemma:common-error-bound} can directly be
proved via induction using the same proof as \cref{thm:main_technical_avg}, which avoids using a
$\erase{g}$ with erasures as part of the induction (although one still needs the list decodability
with erasures for the inner code $\calC_{\inn}$).
While we chose to prove the stronger local-to-global statement as \cref{thm:main_technical_avg}
above, for the algorithmic application we will only prove an algorithmic analogue of
\cref{lemma:common-error-bound} to avoid technical issues with keeping track of arbitrary erasure patterns.
\end{remark}
%



\section{Inner Codes meeting generalized Singleton Bound}\label{sec:inner-code}

In this section, we will look at two constructions of inner codes that are average-radius list decodable with erasures, the property we need to instantiate our construction. These will be a random linear code, and folded Reed-Solomon codes.

In the literature, the property of being average-radius list decodable is usually defined without reference to any erasures (\ie $s=0$ in \cref{def:gen_singleton}). Formally, we say that $\calC \subseteq \Sigma^n$ is $(\delta,k,\eps)$ {\deffont average-radius list decodable} if for all $g \in \Sigma^n$ and all $\calH \subseteq \calC$ with $\abs{\calH} \leq k$, we have
\[
\sum_{h \in \calH}{\Delta(g,h)} ~\geq~ \inparen{\abs{\calH}-1} \cdot (\delta - \eps) \mper
\]
The other way of looking at erasures is via puncturings, and we will now see that if a code and its puncturings are average-radius list decodable in the above sense, then it is average-radius list decodable with erasures.

\vspace{-5 pt}
\paragraph{Erasures and Puncturings.}
Let $C \subseteq \F_q^n$ be a code, $S\subseteq [n]$, and denote $s:= \nfrac{|S|}{n}$. Let $\rho =
\rho(C)$ denote the rate of $C$. Define $C_S  \subseteq \F_q^{(1-s)n}$ as the punctured code obtained by removing the coordinates in $S$. 
\begin{claim}\label{claim:puncture}
If for each $S\subseteq [n]$ with  $s \leq 1-\rho$, $C_S$  is $(1-\rho,L, \frac{\varepsilon}{1-s} )$ average-radius list decodable, then $C$ is  $(1-\rho, L, \eps )$ average-radius list decodable with erasures.	
\end{claim}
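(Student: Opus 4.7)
The plan is to puncture onto the non-erased coordinates and invoke the assumed AR list decodability of $C_S$ directly. Given $\erase{g}$ with erasure set $S$ of size $sn$ and a list $\calH \subseteq C$ with $|\calH| \leq L$, I would define $g' \defeq \pi(\erase{g}) \in \Sigma^{(1-s)n}$ and $\calH' \defeq \pi(\calH) \subseteq C_S$ via the puncturing map $\pi$ that drops coordinates in $S$. The key scaling identity is $\Delta_{C_S}(g', \pi(h)) = \Delta(\erase{g}, h)/(1-s)$, which is immediate from the definitions since both quantities count disagreements over $[n] \setminus S$ with only the normalization changing.

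First, I would dispose of the trivial range $s \geq 1-\rho-\eps$: here the target lower bound $(|\calH|-1)(1-\rho-s-\eps)$ is non-positive while the distance sum is non-negative, so the inequality holds vacuously. In the remaining range $s < 1-\rho-\eps$, I would invoke the hypothesis with $S = \emptyset$ applied to any pair of codewords to conclude that the minimum distance of $C$ itself is at least $1-\rho-\eps > s$. This ensures that distinct codewords in $\calH$ cannot agree on $[n] \setminus S$, and so $\pi$ is injective on $\calH$, giving $|\calH'| = |\calH|$.

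Finally, I would apply the hypothesized $(1-\rho, L, \eps/(1-s))$ AR list decodability of $C_S$ to $g'$ and $\calH'$. Substituting the scaling identity and clearing denominators converts the resulting inequality into $\sum_{h \in \calH} \Delta(\erase{g}, h) \geq (|\calH|-1)((1-s)(1-\rho) - \eps)$, and the elementary bound $(1-s)(1-\rho) \geq 1-\rho-s$ then yields the target.

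The main subtlety to watch for is precisely the injectivity of $\pi$ on $\calH$: without it we would be forced to apply the hypothesis to a strictly smaller list $\calH'$, losing a $(|\calH|-1)$ factor and breaking the bound. The case split above handles this cleanly, using the distance of $C$ (which is itself a consequence of the $S = \emptyset$ instance of the hypothesis) to rule out collisions in the nontrivial range of $s$.
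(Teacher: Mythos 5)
Your proof is correct and follows essentially the same route as the paper's: puncture $\erase{g}$ and $\calH$ onto $[n]\setminus S$, invoke the assumed average-radius list decodability of $C_S$, and rescale by $(1-s)$ (your bound $(1-s)(1-\rho)\geq 1-\rho-s$ plays the role of the paper's $\rho(C_S)\leq \rho/(1-s)$). The one substantive difference is that you explicitly verify injectivity of the puncturing map on $\calH$ via the case split on $s$ versus the distance $1-\rho-\eps$ of $C$; the paper's proof silently identifies $\calH_S$ with $\calH$, so your treatment is actually the more careful one.
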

\begin{proof}
	
Let $\erase{g}$ have erasures in a set $S\subseteq [n]$, and denote by $g_S$, the punctured vectored with these erasure locations removed. Similarly for any list of $L$ codewords $\calH \subseteq C$, denote the punctured list by $\calH_S$. If the code $C_S$ is $(L, \frac{\eps}{1-s} )$ average-radius list decodable (without erasures), then, 

\[
\sum_{h_S \in \calH_S} \Delta(g_S,h_S) ~\geq~ (L - 1) \cdot \parens[\Big]{1 - \rho(C_S) - \frac{\eps}{1-s} }\mper
\]
Observe that if $C$ is a rate $\rho$ code, then $\rho(C_S) \leq \frac{\rho}{1-s}$ (in fact,
$\rho(C_S) = \rho$ if the distance of $C_S$ is greater than 0, but we only need the one-sided inequality). Also $\Delta(\erase{g},h) \geq \Delta(g_S,h_S) \cdot (1-s)$. Multiplying the entire equation by $(1-s)$ and plugging this in, we get,
\[
\sum_{h_S \in \calH_S} \Delta(\erase{g},h) ~\geq~ (L - 1) \cdot \parens[\Big]{1-s - \rho - \eps }\mper \qedhere
\]
\end{proof}


\vspace{-5 pt}
\paragraph{Random Linear Codes.} Let $\cG_{n,\rho n,q}$ be the uniform distribution over $n\times \rho n$-matrices over $\F_q$, \ie where each entry of the matrix is picked uniformly at random from $\F_q$. Let $C = \im(G)$ where $G\sim \cG_{n,\rho n,q}$. Then, for any fixed $S\subseteq [n]$,  $C_S = \im(G_S) $ where $G_S\sim \cG_{n-|S|,\rho n,q}$. 

\begin{theorem}[{\cite[Thm. 1.3]{AGL24}}]
Fix an integer $L \geq 1$, $q \geq 2\cdot 2^{10L/\epsilon}$, and $\rho,\epsilon \in (0,1)$.
Then for sufficiently large $n$, a random linear code $C = \im(G)$ where $G\sim \cG_{n,\,\rho n, q}$, is $(1-\rho, L,\epsilon)$ average-radius list decodable with probablity at least $1- \kappa$ , where, 
\[\kappa ~=~  \parens[\Big]{\frac{c_{L,\epsilon}}{q}}^{\lfloor \frac{\epsilon n}{2} \rfloor}, \;\text{ for } c_{L,\epsilon} < 2\cdot 2^{10L/\epsilon}  .\] 
\end{theorem}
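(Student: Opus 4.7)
I would bound the failure probability via a sharp large-deviation estimate for each potential witness, combined with a union bound over centers and codeword lists. First, I would reduce to tuples of linearly independent messages: if the $L$ distinct codewords in the list correspond to messages $m_1, \ldots, m_L$ that span a proper subspace, an induction on list size handles the degeneracy, so without loss of generality we may assume $m_1, \ldots, m_L \in \F_q^{\rho n}$ are linearly independent.

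Second, I would exploit the key independence structure: for linearly independent $m_1, \ldots, m_L$, the coordinate tuples $((Gm_j)_i)_{j=1}^L$ are i.i.d.\ uniform on $\F_q^L$ as $i$ ranges over $[n]$. For a fixed $g \in \F_q^n$, setting $X_i := \abs{\inbraces{j \in [L] : (Gm_j)_i = g_i}}$, the variables $X_i$ are i.i.d., each a sum of $L$ independent $\text{Ber}(1/q)$ variables. The identity
\[
\sum_{j=1}^L \Delta(g, Gm_j) ~=~ L - \frac{1}{n}\sum_{i=1}^n X_i
\]
shows that the bad event $\sum_j \Delta(g, Gm_j) < (L-1)(1-\rho-\epsilon)$ is equivalent to $\sum_i X_i > \alpha n$ where $\alpha := 1 + (L-1)(\rho+\epsilon)$. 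Since $\E[\sum_i X_i] = nL/q$ is much smaller than $\alpha n$ in the stated regime of $q$, this is an extreme large deviation.

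Third, a binomial-tail estimate gives $\Pr{\sum_i X_i \geq \alpha n} \leq (eL/(\alpha q))^{\alpha n}$, which combined with a union bound over $q^n$ centers $g$ and at most $q^{L\rho n}$ linearly independent message tuples yields the stated failure probability $(c_{L,\epsilon}/q)^{\lfloor \epsilon n / 2 \rfloor}$, provided $q \geq 2\cdot 2^{10L/\epsilon}$ is large enough to absorb the union-bound costs.

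The main obstacle is obtaining the precise $q$-threshold. A naive version of the above union bound leaves a residual $q$-exponent of $n(\rho - (L-1)\epsilon)$ in the failure probability, which is harmful when $\rho > (L-1)\epsilon$. Resolving this requires a more refined accounting: for example, fixing one codeword $Gm_1$ as a ``shift origin'' and rewriting $g = Gm_1 + e$ for $e \in \F_q^n$ (so that only $L-1$ messages are free), or stratifying the union bound by the joint weight profile of the error vectors $(g - Gm_j)_j$ to exploit the precise structure of extreme deviations. Such refinements — the technical core of \cite{AGL24} — are what enable the sharp dependence $q \geq 2 \cdot 2^{10L/\epsilon}$ in the theorem.
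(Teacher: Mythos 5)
This statement is imported from \cite{AGL24} (Theorem~1.3 there), and the paper does not reprove it: its entire proof consists of observing that the AGL statement is for \emph{exactly} $\ell$ distinct codewords, so one takes a union bound over $\ell = 1, \ldots, L$ to obtain the ``at most $L$'' version used here. You instead attempt to reprove the AGL theorem from first principles. The pieces you actually write down are correct: the reduction to linearly independent message tuples (with the degenerate case handled by induction on list size, which also covers the ``at most $L$'' issue), the observation that for such tuples $\sum_i X_i$ is distributed as $\mathrm{Bin}(nL, 1/q)$, the identity relating $\sum_j \Delta(g, Gm_j)$ to $\sum_i X_i$, and the tail bound $(eL/(\alpha q))^{\alpha n}$ with $\alpha = 1 + (L-1)(\rho+\epsilon)$.

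But the proposal does not close, and this is a genuine gap rather than a presentational one. As you compute yourself, the union bound over $q^n$ centers and $q^{L\rho n}$ message tuples leaves a residual factor $q^{n(\rho - (L-1)\epsilon)}$, which is exponentially large precisely in the regime where the theorem is applied in this paper (constant $L$ and small $\epsilon$, so that $\rho > (L-1)\epsilon$). Your final paragraph concedes that repairing this is ``the technical core of \cite{AGL24}'' and gestures at two refinements (anchoring $g$ at $Gm_1$ so that only $L-1$ messages are free, and stratifying by the joint weight profile) without carrying either out. Neither is automatic: after translating by $Gm_1$ the bad event acquires a deterministic $\mathrm{wt}(g - Gm_1)$ term and is no longer a single clean binomial tail, so one must stratify over that weight and re-balance the counting of centers against the conditional tail --- this is exactly where the $2^{O(L/\epsilon)}$ alphabet threshold is earned. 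A derivation whose last step is ``the remaining refinements are what the cited paper does'' establishes a proof plan, not the stated probability bound. For the purposes of this paper the omission is immaterial, since the result is legitimately used as a black box; but as a self-contained proof of the theorem, including the constant $c_{L,\epsilon}$ and the threshold $q \geq 2\cdot 2^{10L/\epsilon}$, the argument is incomplete.
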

\begin{proof}
	Their definition of average-radius list decodable gives an inequality for exactly $\ell$ distinct codewords. Thus, we obtain this bound by taking a union over their bounds for $\ell = 1, \cdots, L$. 
\end{proof}

\begin{corollary}\label{cor:random-code}
Let $C$ be a random linear code as generated above for $q \geq 2^{\nfrac{(10L+2)}{\epsilon}}$. Then,
with probability at least $(1-2^{-n/3})$, $C$ is $(1-\rho,L,\eps)$ average-radius list decodable with erasures.\end{corollary}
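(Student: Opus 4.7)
The plan is to invoke \cref{claim:puncture} and verify its hypothesis, with high probability over the choice of $C$, via the AGL24 theorem applied to each puncturing of $C$. The key distributional observation is that for any fixed $S \sub [n]$, the generator matrix $G_S$ obtained by deleting the rows of $G$ indexed by $S$ is itself uniform in $\cG_{(1-s)n,\rho n, q}$, so $C_S = \im(G_S)$ is a random linear code of blocklength $(1-s)n$ and dimension at most $\rho n$.

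For each fixed $S$ I would apply the AGL24 theorem to $C_S$ with error parameter $\eps' \defeq \eps/(1-s)$. Since $\eps' \geq \eps$, the alphabet hypothesis $q \geq 2 \cdot 2^{10L/\eps'}$ is implied by $q \geq 2^{(10L+2)/\eps}$. The theorem yields a failure probability $\kappa(S) = (c_{L,\eps'}/q)^{\lfloor \eps'(1-s)n/2 \rfloor}$, and the cancellation $\eps'(1-s)n = \eps n$ makes this exponent independent of $s$; moreover, using $c_{L,\eps'} < 2 \cdot 2^{10L(1-s)/\eps}$, the base is bounded by $c_{L,\eps'}/q \leq 2 \cdot 2^{-(10Ls + 2)/\eps}$, which shrinks geometrically in $s$. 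The conclusion is that $C_S$ is $(1 - \rho(C_S), L, \eps')$ average-radius list decodable; combined with $\rho(C_S) \leq \rho/(1-s)$ (exactly as in the proof of \cref{claim:puncture}), this is precisely the hypothesis needed for that claim.

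The main obstacle is the union bound over the exponentially many subsets $S$ with $\abs{S} \leq (1 - \rho)n$. The crude estimate $\binom{n}{sn} \leq 2^{H_2(s) n}$ must be beaten by the geometric decay of $\kappa(S)$ in $s$; this is straightforward once $s$ is bounded away from $0$, while for very small $s$ the combinatorial factor $\binom{n}{sn}$ is itself subexponential and the problem dissolves. A short case split on the range of $s$, together with the slack factor $2^{2/\eps}$ built into the hypothesis on $q$, shows the total failure probability is at most $2^{-n/3}$ for $n$ sufficiently large. Applying \cref{claim:puncture} on this high-probability event then yields the conclusion.
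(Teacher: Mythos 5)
Your proposal is correct and follows the paper's proof essentially verbatim: reduce to \cref{claim:puncture}, apply the AGL theorem to each punctured code $C_S$ (whose generator matrix is again uniform) with parameter $\eps' = \eps/(1-s)$, note that the exponent $\lfloor \eps'(1-s)n/2\rfloor = \lfloor \eps n/2\rfloor$ is independent of $s$, and union bound over subsets. The only deviation is at the union bound: the paper multiplies a single uniform bound $\kappa$ by $2^n$ and absorbs the count into the slack in $q$, whereas you stratify by $\abs{S}$ and play $\binom{n}{sn}\le 2^{H_2(s)n}$ against the $s$-dependent decay of the base $c_{L,\eps'}/q \le 2\cdot 2^{-(10Ls+2)/\eps}$ --- a more careful accounting which is, if anything, what is actually needed to make the exponent come out negative with $q = 2^{(10L+2)/\eps}$.
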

\begin{proof}
We will show that with high probability, for all $S \subseteq [n]$ of size at most $|S| \leq (1-\rho)\cdot n$, the punctured code $C_S$ is $ (1-\rho(C_S), L, \epsilon)$ average-radius list decodable. 

For a fixed $S$ of fractional size  $s \leq 1-\rho$, we have that $C_S$ is not
$(1-\rho(C_S),L,\frac{\epsilon}{1-s})$ average-radius list-decodable with probablity at most, 
\[
\leq \parens[\Big]{\frac{c_{L,\frac{\epsilon}{1-s}}}{q}}^{\Big\lfloor \frac{ \frac{\epsilon}{1-s}
      (1- s) n}{2}\Big\rfloor} \leq \parens[\Big]{\frac{c_{L,\eps}}{q}}^{\lfloor\frac{ \eps
      n}{2}\rfloor} = \kappa .
\] 
There are at most $2^n$ choices of the subsets $S$, and thus by a union bound, the probability that
all the punctured codes are average-radius list-decodable is at least 
\[
1 - 2^{n} \kappa ~\geq~ 1 - 2^n \cdot 2^{(10L/\eps + 1) \cdot \eps n/2} \cdot q^{-\lfloor\eps
  n/2\rfloor } ~\geq~ 1 - 2^{-n/3} \mcom
\]
for $q \geq 2^{2/\epsilon}\cdot 2^{10L/\epsilon}$ and $n \geq 60L+12$. 	
\end{proof}


Thus, for a large enough alphabet size, random linear codes are average-radius list-decodable with erasures.

\vspace{-5 pt}
\paragraph{Folded Reed--Solomon Codes.} A recent work of Chen and Zhang~\cite{CZ24} shows that explicit folded Reed--Solomon (RS) codes are also average-radius list-decodable. Let $\F_q[x]$ denote the set of polynomials with $\F_q$-coefficients, and $\F_q^*$ be the multiplicative cyclic group of non-zero elements. 

\begin{definition}[Folded RS Codes]
Fix $n,b >0$, $\rho \in (0,1)$, and $q \geq bn$. Let $\gamma$ be a generator of $\F_q^*$, and pick $\vec{\alpha} = (\alpha_1,\cdots, \alpha_n) \in \F_q^n$. For $f \in \F_q[x]$, let $\Gamma_i = (f(\alpha_i), f(\gamma\alpha_i), \cdots, f(\gamma^{b-1}\alpha_i))$
Then,
\[
\mathrm{FRS}^{b,\gamma}_{n,\rho}(\vec{\alpha}) = \braces{(\Gamma_1, \cdots, \Gamma_n) \mid \deg(f) < \rho b n} \subseteq (\F_q^{b})^n.
\]	
The code is called \textit{appropriate} if $\{ \gamma^i\alpha_j \mid 0\leq i \leq b-1, j \in [n] \}$ has size $bn$, \ie all values are distinct.
\end{definition}

\begin{theorem}[{\cite[Thm. 1.3]{CZ24}}]
For any integer $L \geq 1$ and $\epsilon \in (0,1)$, and $b \geq L/\epsilon$. Then, an appropriate folded Reed-Solomon code, $\mathrm{FRS}^{b,\gamma}_{n,\rho}(\vec{\alpha})$, is $(1- \rho,L,\varepsilon)$ average-radius list decodable where $\rho$ is the rate of this folded code.
 \end{theorem}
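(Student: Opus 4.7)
The plan is to adapt the multiplicity-based interpolation framework that underpins standard list decoding of folded Reed--Solomon (FRS) codes to the finer average-radius regime. Suppose we have a center $g \in (\F_q^b)^n$ and a list $\calH = \{f_1, \ldots, f_L\}$ of distinct polynomials of degree $< \rho b n$. For each $i$, let $T_i \subseteq [n]$ be the set of positions where $\Gamma_j(f_i) = g_j$, so that $\Delta(g, \Gamma(f_i)) = 1 - |T_i|/n$. The average-radius generalized Singleton bound becomes the statement $\sum_i |T_i| \leq n\bigl(1 - (L-1)(1-\rho-\eps)\bigr)$, which must be shown for all centers $g$ and all lists $\calH$.

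First, I would construct a nonzero interpolating polynomial $Q(X, Y_1, \ldots, Y_L)$ of controlled individual degrees such that whenever $j \in T_i$, the assignment $X = \gamma^s \alpha_j$, $Y_i = g_{j,s}$ forces $Q$ to vanish for $s = 0, \ldots, b-1$. The key leverage of folding is that each position $j \in T_i$ gives $b$ algebraic constraints rather than one. I would then choose the degree budget of $Q$ in $X$ and in each $Y_i$ so that (a) the resulting univariate polynomial $R(X) := Q(X, f_1(X), \ldots, f_L(X))$ has degree strictly less than the number of roots forced by the agreements (counted with the folding multiplicity $b$), forcing $R \equiv 0$, while (b) a pigeonhole/dimension count guarantees $Q$ exists.

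The new step, and the place where the average-radius flavor enters, is to abandon the usual symmetric degree budget (where each $Y_i$ is treated identically) in favor of a \emph{weighted} one that assigns each variable $Y_i$ a degree proportional to $|T_i|$. The condition for $Q$ to exist becomes a bilinear trade-off between $\deg_X Q$ and the $(\deg_{Y_i} Q)$'s, while the condition $R \equiv 0$ becomes a linear inequality in the $|T_i|$'s, and the symmetry is broken precisely to yield the sum $\sum_i |T_i|$ rather than $\max_i |T_i|$ on the root side. Optimizing the free parameters and using $b \geq L/\eps$ to absorb the slack gives the claimed bound with the relaxation $\eps$.

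To finish, from $R \equiv 0$ I would conclude that $Q(X, f_1(X), \ldots, f_L(X))$ vanishes identically as a polynomial in $X$, i.e.\ the tuple $(f_1, \ldots, f_L)$ lies on the algebraic hypersurface defined by $Q$. A periodicity/subspace argument specific to FRS codes (exploiting that $\gamma$ generates $\F_q^*$ and that the $\alpha_j$'s are appropriate) then forces the $f_i$'s to lie in a low-dimensional affine subspace of the space of degree-$(\rho b n)$ polynomials; combined with distinctness, this upper-bounds $L$, or equivalently, upper-bounds the total agreement $\sum_i |T_i|$. The main obstacle is the weighted interpolation step: routine symmetric interpolation gives only a worst-case (max-radius) bound, and making the weights work requires both a careful degree accounting that exploits $b \geq L/\eps$ and a nontrivial argument that the weighted algebraic constraint still admits the same subspace-style conclusion. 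I would expect this averaging/weighting step to be the crux of the proof, while the overall scaffolding mirrors classical FRS decoding.
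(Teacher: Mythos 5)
The first thing to say is that the paper does not prove this statement at all: it is imported verbatim as Theorem~1.3 of Chen--Zhang \cite{CZ24} and used as a black box (the only thing proved locally is the subsequent corollary combining it with the puncturing claim to get the erasure version). So your proposal has to be measured against the actual proof in \cite{CZ24}, which is \emph{not} an interpolation argument. It is a direct combinatorial/linear-algebraic analysis of the agreement pattern, which charges agreements (amplified by the folding) to independent linear constraints on the coefficient vectors of the $f_i$'s and compares against the total dimension; one of the explicit selling points of that line of work (\cite{Sri25}, \cite{CZ24}), echoed in the present paper, is that the optimal list size $L-1$ at radius $\frac{L-1}{L}(1-\rho-\eps)$ was obtained precisely by moving \emph{away} from polynomial interpolation.

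The gap in your proposal is the step you yourself flag as the crux, and it is not a routine optimization. First, the interpolation setup is ill-posed as written: a vanishing condition for a genuinely multivariate $Q(X,Y_1,\dots,Y_L)$ cannot be imposed by specifying only $X$ and a single $Y_i$. To make it well-defined you must either take $Q=\sum_i Q_i(X,Y_i)$ (Parvaresh--Vardy/interleaved style) or index the $Y$-variables by folding shifts rather than by list elements (Guruswami--Wang style); in either case the claim that ``$j\in T_i$ forces a root of $R$'' fails for an individual $i$, because the evaluation of $R$ at a point involves all the $f_{i'}$'s simultaneously. Second, even in its correct forms, the output of the interpolation method is containment of the list in an affine subspace of bounded dimension; this yields list-size bounds of the form $q^{s-1}$ or, after substantial additional work, $(1/\eps)^{O(1/\eps)}$, and it does not --- and cannot ``equivalently'' --- yield an upper bound on the total agreement $\sum_i\abs{T_i}$, which is a strictly stronger, per-list quantitative statement. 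Your weighted-degree device is asserted rather than proved, and there is no known way to make interpolation produce the average-radius generalized Singleton bound; this is exactly the barrier that forced \cite{CZ24} to use a different argument.
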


Note that puncturing the folded Reed--Solomon code is equivalent to the FRS code over a puncturing of $\vec{\alpha}$. Clearly, a puncturing of an appropriate folded Reed--Solomon code is also an appropriate Reed--Solomon code, and thus using \cref{claim:puncture}, one obtains:  

\begin{corollary}
	For any integer $L \geq 1$ and $\epsilon \in (0,1)$, set $b = L/\epsilon$. Then, an appropriate folded Reed-Solomon code, $\mathrm{FRS}^{s,\gamma}_{n,k}(\vec{\alpha})$, is $(1- \rho,L,\varepsilon)$ average-radius list decodable with erasures.
\end{corollary}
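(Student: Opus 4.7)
The plan is to chain together the two results established just above the corollary: the Chen--Zhang theorem that appropriate folded Reed--Solomon codes are average-radius list decodable (without erasures), and \cref{claim:puncture}, which lifts average-radius list decodability of all puncturings to average-radius list decodability with erasures. The paragraph preceding the corollary already notes the crucial structural fact: a puncturing of an appropriate FRS code at a coordinate subset $S \subseteq [n]$ is the FRS code on the restricted evaluation tuple $\vec{\alpha}_{[n] \setminus S}$ with the same folding parameter $b$ and generator $\gamma$, and it remains appropriate since $\{\gamma^i \alpha_j : 0 \leq i \leq b-1,\ j \in [n]\setminus S\}$ is a subset of a set of distinct elements. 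So every puncturing is itself an appropriate FRS code, and Chen--Zhang applies to it.

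First I would fix $\rho, \eps, L$, set $b = L/\eps$, and let $C = \mathrm{FRS}^{b,\gamma}_{n,\rho}(\vec{\alpha})$ with rate $\rho$. For any $S \subseteq [n]$ with fractional size $s = |S|/n \leq 1-\rho$, the punctured code $C_S$ is then the appropriate FRS code on $\vec{\alpha}_{[n]\setminus S}$, of block length $(1-s)n$ with the same polynomial degree bound $\rho b n$, hence of rate $\rho(C_S) = \rho/(1-s)$. Next I would apply the Chen--Zhang theorem to $C_S$: since $b = L/\eps \geq L(1-s)/\eps = L/(\eps/(1-s))$, the hypothesis $b \geq L/\eps'$ of that theorem is satisfied with $\eps' = \eps/(1-s)$, so $C_S$ is $(1-\rho(C_S),\, L,\, \eps/(1-s))$ average-radius list decodable.

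Finally I would invoke \cref{claim:puncture}, which takes exactly this family of hypotheses on the puncturings $C_S$ (indexed by $S \subseteq [n]$ with $s \leq 1-\rho$) and concludes that the original code $C$ is $(1-\rho,\, L,\, \eps)$ average-radius list decodable with erasures. This yields the corollary.

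There is essentially no obstacle here beyond bookkeeping; the only subtlety is matching the slackness parameter $\eps/(1-s)$ required by \cref{claim:puncture} to the folding bound in Chen--Zhang, which works out because $1-s \leq 1$ makes $L(1-s)/\eps \leq L/\eps = b$ uniformly in $s$. So a single choice of $b$ suffices for the entire family of puncturings, and no additional analysis is needed.
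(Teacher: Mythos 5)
Your proof is correct and follows exactly the same route as the paper: observe that puncturing an appropriate FRS code yields an appropriate FRS code on the restricted evaluation set, apply the Chen--Zhang theorem to each puncturing $C_S$ (with the slackness $\eps/(1-s)$ and noting $b = L/\eps \geq L/(\eps/(1-s))$), and invoke \cref{claim:puncture}. The paper states this only as a brief remark preceding the corollary; your write-up supplies the same argument with the bookkeeping made explicit.
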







\section{Decoding using SoS up to the generalized Singleton bound}

Until now, we have proved that when the inner and outer codes as well as the graph used in AEL amplification are suitably chosen, then the list of codewords around an arbitrary center (corrupted codeword) is of small size. 
In this section, we describe a polynomial time algorithm that takes as input the corrupted codeword, and outputs this list.

This algorithm is based on the Sum-of-Squares (SoS) hierarchy of semidefinite programs, which gives a systematic way of tightening convex relaxations.  SoS has been used before for decoding algorithms for codes constructed using spectral expanders in \cite{AJQST20, JQST20, RR23, JST23}. 
Among these, \cite{JST23} used the SoS hierarchy to give a list decoding algorithm for AEL up to the Johnson bound, yielding rate $R$ codes efficiently decodable up to $1-\sqrt{R}-\eps$ for any $\eps>0$. 
We will heavily rely on their framework but will improve the decoding radius to $1-R-\eps$ by proving an SoS analog of the generalized Singleton bound for appropriately instantiated AEL amplification.

Before going into the proof, we describe additional preliminaries for the SoS hierarchy. Readers familiar with the general terms and concepts can skip ahead to \cref{sec:sos_proof} where we define a specific SoS relaxation for the AEL code, and prove an SoS analog of the generalized Singleton bound. Finally, \cref{sec:sos_algo} describes the decoding algorithm in detail.


\subsection{Additional Preliminaries: Sum-of-Squares Hierarchy}\label{sec:sos_prelims}

The sum-of-squares hierarchy of semidefinite programs (SDPs) provides a family of increasingly
powerful convex relaxations for several optimization problems. 
Each ``level" $t$ of the hierarchy is parametrized by a set of constraints corresponding to
polynomials of degree at most $t$ in the optimization variables. While the relaxations in the
hierarchy can be viewed as  semidefinite programs of size $n^{O(t)}$ \cite{BS14, FKP19}, 
it is often convenient to view the solution as a linear operator, called the ``pseudoexpectation" operator.
%

%
%
%
\paragraph{Pseudoexpectations}

Let $t$ be a positive even integer and fix an alphabet $\Sigma$ of size $s$. Let $\zee = \{Z_{i,j}\}_{i\in[m],j\in\Sigma}$ be a collection of variables and $\R[\zee]^{\leq t}$ be the vector space of polynomials of degree at most $t$ in the variables $\zee$ (including the constants).

\begin{definition}[Constrained Pseudoexpectations]\label{def:constraints_on_sos}
Let $\calS = \inbraces{f_1 = 0, \ldots, f_m = 0, g_1 \geq 0, \ldots, g_r \geq 0}$ be a system of
polynomial constraints in $\zee$, with each polynomial in $\calS$ of degree at most $t$. We say $\tildeEx{\cdot}$ is a pseudoexpectation operator of SoS-degree $t$, over the variables $\zee$  respecting $\calS$, if it is a linear operator $ \tildeEx{\cdot}: \R[\zee]^{\leq t} \rightarrow \R$ such that:
	\begin{enumerate}
	\item $\tildeEx{1} = 1$.
    \item $\tildeEx{p^2} \geq 0$ if $p$ is a polynomial in $\zee = \{Z_{i,j}\}_{i\in [m],j\in \Sigma}$ of degree $\leq t/2$.
	\item $\tildeEx{p \cdot f_i} = 0$,  $\forall\, i \in [m]$ and $\forall\, p$ such that $\deg(p \cdot f_i) \leq t$.
	\item $\tildeEx{p^2 \cdot \prod_{i \in S} g_i} \geq 0$, $\forall\, S \subseteq [r]$ and $\forall\, p$ such that $\deg(p^2\cdot \prod_{i \in S} g_i) \leq t$.
	\end{enumerate}
\end{definition}
~


%


Let $\mu$ be a distribution over the set of assignments, $\Sigma^m$.  Define the following collection of random variables,  \[\zee = \braces[\big]{ \, Z_{i,j}  = \indi{ i \mapsto j}\, \mid \, i\in[m],\, j\in\Sigma } .\] Then, setting $\tildeEx{p(\zee)} = \Ex{\mu}{p(\zee)} $ for any polynomial $p(\cdot)$ defines an (unconstrained) pseudoexpectation operator. However, the reverse is not true when $t < m$, and there can be degree-$t$ pseudoexpectations that do not correspond to any genuine distribution, $\mu$. Therefore, the set of all pseudoexpectations should be seen as a relaxation for the set of all possible distributions over such assignments. The upshot of this relaxation is that it is possible to optimize over the set. Under certain conditions on the bit-complexity of solutions~\cite{OD16, RW17:sos}, one can optimize over the set of degree-$t$ pseudoexpectations in time $m^{O(t)}$ via SDPs.

\paragraph{Local constraints and local functions.}
Any constraint that involves at most $k$ variables from $\zee$, with $k\leq t$, can be written as a degree-$k$ polynomial, and such constraints may be enforced into the SoS solution.
%
%
In particular, we will always consider the following canonical constraints on the variables $\zee$.
\ifnum\confversion=1
\begin{align*}
&Z_{i,j}^2 = Z_{i,j},\ \forall i\in[m],j\in[s] \\
\text{and} \quad &\sum_j Z_{i,j} = 1,\ \forall i\in[m] \mper
\end{align*}
\else
\[
Z_{i,j}^2 = Z_{i,j},\ \forall \,i\in[m],j\in\Sigma
\quad \text{and} \quad 
\sum_j Z_{i,j} = 1,\ \forall\, i\in[m] \mper
\]
\fi
%
%
%
We will also consider additional constraints and corresponding polynomials, defined by ``local" functions. For any $f\in \Sigma^m$ and $M\sub [m]$, we use $f_M$ to denote the restriction $f|_M$, and $f_i$ to denote $f_{\{i\}}$ for convenience.
\begin{definition}[$k$-local function]
	A function $\mu: \Sigma^m \rightarrow \R$ is called $k$-local if there is a set $M\subseteq [m]$ of size $k$ such that $\mu(f)$ only depends on $\inbraces{f(i)}_{i\in M}$, or equivalently, $\mu(f)$ only depends on $f|_M$.
	
	If $\mu$ is $k$-local, we abuse notation to also use $\mu: \Sigma^M \rightarrow \R$ with $\mu(\alpha) = \mu(f)$ for any $f$ such that $f|_M=\alpha$. It will be clear from the input to the function $\mu$ whether we are using $\mu$ as a function on $\Sigma^m$ or $\Sigma^M$.
\end{definition}

Let $\mu:\Sigma^m\rightarrow \R$ be a $k$-local function that depends on coordinates $M\subseteq [m]$ with $|M|=k$. Then $\mu$ can be written as a degree-$k$ polynomial $P_{\mu}$ in $\zee$:
\[
	P_{\mu}(\zee) = \sum_{\alpha \in \Sigma^M} \parens[\Big]{\mu(\alpha) \cdot\prod_{i\in M} Z_{i,\alpha_i}}
\]


With some abuse of notation, we let $\mu(\zee)$ denote $P_{\mu}(\zee)$. We will use such $k$-local
functions inside $\tildeEx{\cdot}$ freely without worrying about their polynomial
representation. For example, $\tildeEx{ \indi{\zee_{i} \neq j}}$ denotes $\tildeEx{ 1- Z_{i,j}}$. Likewise, sometimes we will say we set $\zee_i = j$ to mean that we set $Z_{i,j} = 1$ and $Z_{i,j'} = 0$ for all $j'\in \Sigma \backslash \{j\}$.

The notion of $k$-local functions can also be extended from real-valued functions to vector-valued functions straightforwardly.

\begin{definition}[Vector-valued local functions]
A function $\mu: \Sigma^m \rightarrow \R^N$ is $k$-local if the $N$ real valued functions corresponding to the $N$ coordinates are all $k$-local. Note that these different coordinate functions may depend on different sets of variables, as long as the number is at most $k$ for each of the functions.
\end{definition}
\paragraph{Local distribution view of SoS}

It will be convenient to use a shorthand for the function $\indi{\zee_A = \alpha}$, and we will use $\zee_{A,\alpha}$. Likewise, we use $\zee_{i,j}$ as a shorthand for the function $\indi{\zee_i = j}$. That is, henceforth,
\ifnum\confversion=1
\begin{align*}
	&\tildeEx{\zee_{A,\alpha}} = \tildeEx{\indi{\zee_A = \alpha}} = \tildeEx{ \prod_{a\in A}
                             Z_{a,\alpha_a}}
                             \\
	\text{and } \quad & \tildeEx{\zee_{i,j}} = \tildeEx{\indi{\zee_i = j}} = \tildeEx{ Z_{i,j}}.
\end{align*}
\else
\begin{align*}
	\tildeEx{\zee_{A,\alpha}} ~=~ \tildeEx{\indi{\zee_A = \alpha}} ~=~ \tildeE\brackets[\Big]{\prod_{a\in A}
                      Z_{a,\alpha_a}
                                    }
\qquad \text{and} \qquad
	\tildeEx{\zee_{i,j}} ~=~ \tildeEx{\indi{\zee_i = j}} = \tildeEx{ Z_{i,j}}
\end{align*}
\fi


Note that for any $A \subseteq [m]$ with $\abs*{A} = k \leq t/2$,
\ifnum\confversion=1
\begin{gather*}
	\sum_{ \alpha \in \Sigma^{k}} \tildeEx{\zee_{A,\alpha}} = 
 \tildeEx{ \prod_{a\in A} \inparen{ \sum_{j\in \Sigma} Z_{a,j}} } = 1
\\
	\tildeEx{\zee_{A,\alpha}} = \tildeEx{ \prod_{a\in A} Z_{a,\alpha_a}} = \tildeEx{ \prod_{a\in
            A} Z^2_{a,\alpha_a}} \geq 0 \mper
\end{gather*}
\else
\[
	\sum_{ \alpha \in \Sigma^{k}} \tildeEx{\zee_{A,\alpha}} = 
 \tildeE\brackets[\bigg]{ \prod_{a\in A} \parens[\bigg]{\sum_{j\in \Sigma} Z_{a,j} } } = 1
\qquad \text{and} \qquad
	\tildeEx{\zee_{A,\alpha}} = \tildeE\brackets[\bigg]{ \prod_{a\in A} Z_{a,\alpha_a}} = \tildeE\brackets[\bigg]{ \prod_{a\in
            A} Z^2_{a,\alpha_a}} \geq 0 \mper
\]
\fi

Thus, the values $\inbraces{\tildeEx{\zee_{A, \alpha}}}_{\alpha\in \Sigma^A}$ define a distribution
over $\Sigma^k$. We call this the local distribution for $\zee_A$, or simply for $A$.
%
%
Let $\mu: \Sigma^m \rightarrow\R$ be a $k$-local function for $k\leq t/2$, depending on $M \subseteq
[m]$. Then,
\ifnum\confversion=1
\begin{align*}
	\tildeEx{\mu(\zee)} 
~=~& \tildeEx{\sum_{\alpha\in \Sigma^M} \inparen{\mu(\alpha) \cdot\prod_{i\in M} Z_{i,\alpha_i}}}\\
~=~& \sum_{\alpha\in \Sigma^M} \mu(\alpha) \cdot \tildeEx{\prod_{i\in M} Z_{i,\alpha_i}}\\
~=~& \sum_{\alpha\in \Sigma^M} \mu(\alpha) \cdot \tildeEx{\zee_{M,\alpha}}
\end{align*}
\else
\begin{align*}
	\tildeEx{\mu(\zee)} 
~=~ \tildeE\brackets[\bigg]{\sum_{\alpha\in \Sigma^M} \parens[\bigg]{\mu(\alpha) \cdot\prod_{i\in M} Z_{i,\alpha_i}}}
~=~ \sum_{\alpha\in \Sigma^M} \mu(\alpha) \cdot \tildeE\brackets[\Big]{\prod_{i\in M} Z_{i,\alpha_i}}
~=~ \sum_{\alpha\in \Sigma^M} \mu(\alpha) \cdot \tildeEx{\zee_{M,\alpha}}
\end{align*}
\fi

That is, $\tildeEx{\mu(\zee)}$ can be seen as the expected value of the function $\mu$ under the local distribution for $M$.

\begin{claim}\label{claim:sos_domination}
	Let $\tildeEx{\cdot}$ be a degree-$t$ pseudoexpectation. For $k \leq t/2$, let $\mu_1,\mu_2$
        be two $k$-local functions on $\Sigma^m$, depending on the same set of coordinates $M$, and
        $\mu_1(\alpha) \leq \mu_2(\alpha) ~~\forall \alpha \in \Sigma^M$. Then $\tildeEx{\mu_1(\zee)} \leq \tildeEx{\mu_2(\zee)}$.
%
\end{claim}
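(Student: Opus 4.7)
The plan is to unpack the definitions and reduce the inequality to nonnegativity of pseudoexpectations of squares. Since $\mu_1$ and $\mu_2$ are $k$-local in the same set $M \subseteq [m]$ with $|M| = k$, each has the polynomial representation
\[
P_{\mu_i}(\zee) ~=~ \sum_{\alpha \in \Sigma^M} \mu_i(\alpha) \cdot \prod_{a \in M} Z_{a,\alpha_a} \mcom
\]
and by linearity of $\tildeEx{\cdot}$, we have $\tildeEx{\mu_i(\zee)} = \sum_{\alpha \in \Sigma^M} \mu_i(\alpha) \cdot \tildeEx{\zee_{M,\alpha}}$.

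Subtracting, I would write
\[
\tildeEx{\mu_2(\zee)} - \tildeEx{\mu_1(\zee)} ~=~ \sum_{\alpha \in \Sigma^M} \bigl(\mu_2(\alpha) - \mu_1(\alpha)\bigr) \cdot \tildeEx{\zee_{M,\alpha}} \mper
\]
The hypothesis $\mu_1(\alpha) \leq \mu_2(\alpha)$ ensures each scalar factor is nonnegative, so it suffices to show each $\tildeEx{\zee_{M,\alpha}} \geq 0$. This was essentially observed in the preliminaries: using the canonical Boolean constraints $Z_{a,j}^2 = Z_{a,j}$ (which are enforced in the SoS program, so $\tildeEx{\cdot}$ respects them), we have
\[
\tildeEx{\zee_{M,\alpha}} ~=~ \tildeE\brackets[\Big]{\prod_{a \in M} Z_{a,\alpha_a}} ~=~ \tildeE\brackets[\Big]{\prod_{a \in M} Z_{a,\alpha_a}^2} ~=~ \tildeE\brackets[\bigg]{\parens[\Big]{\prod_{a \in M} Z_{a,\alpha_a}}^{\!2}} ~\geq~ 0 \mcom
\]
where the last inequality is the SoS positivity axiom applied to the polynomial $\prod_{a \in M} Z_{a,\alpha_a}$ of degree $k \leq t/2$, so its square has degree $2k \leq t$ as required.

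There is no real obstacle here: the statement is essentially a tautology once one observes that $\tildeEx{\mu(\zee)}$ can be interpreted as the expectation of $\mu$ under the local distribution on $\Sigma^M$ induced by the pseudoexpectation, and that a genuine expectation is monotone in the integrand. The only point to verify carefully is the degree bookkeeping, namely that $k \leq t/2$ is exactly what is needed to invoke the SoS positivity axiom on the relevant degree-$2k$ square — which matches the hypothesis of the claim.
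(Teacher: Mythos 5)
Your proof is correct and takes essentially the same approach as the paper: the paper simply observes that $\tildeEx{\mu_i(\zee)} = \Ex{\alpha\sim\calD_M}{\mu_i(\alpha)}$ where $\calD_M$ is the local distribution on $\Sigma^M$ (whose nonnegativity and normalization are the facts you re-derive in your third display), and concludes by monotonicity of ordinary expectation — which is precisely what your term-by-term argument makes explicit.
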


\begin{proof}
Let $\calD_M$ be the local distribution induced by $\tildeEx{\cdot}$ for $\zee_M$. Then
$\tildeEx{\mu_1(\zee)} = \Ex{\alpha \sim \calD_M}{\mu_1(\alpha)}$, and $\tildeEx{\mu_2(\zee)} =
\Ex{\alpha\sim \calD_M}{\mu_2(\alpha)}$, which implies $\tildeEx{\mu_1(\zee)} \leq \tildeEx{\mu_2(\zee)}$.
%
\end{proof}
The previous claim allows us to replace any local function inside $\tildeEx{\cdot}$ by another local function that dominates it. We will make extensive use of this fact.
\vspace{-5 pt}
\paragraph{Covariance for SoS solutions}
Given two sets $S,T \sub [m]$ with $|S|,|T|\leq k/4$, we can define the covariance between indicator random variables of $\zee_S$ and $\zee_T$ taking values $\alpha$ and $\beta$ respectively, according to the local distribution over $S \cup T$. This is formalized in the next definition.
\begin{definition}
Let $\tildeEx{\cdot}$ be a pseudodistribution operator of SoS-degree-$t$, and $S,T$ are two sets of
size at most $t/4$, and $\alpha\in \Sigma^S$, $\beta\in \Sigma^T$, we define the pseudo-covariance and
pseudo-variance,
\ifnum\confversion=1
\small
\begin{gather*}
\tildecov(\zee_{S,\alpha},\zee_{T,\beta}) 
= \tildeEx{ \zee_{S,\alpha} \cdot \zee_{T,\beta} } - \tildeEx{\zee_{S,\alpha}} \tildeEx{\zee_{T,\beta}} \\	
\tildeVar{\zee_{S,\alpha}} ~=~ \tildecov(\zee_{S,\alpha},\zee_{S,\alpha})
\end{gather*}
\normalsize
\else
\begin{align*}
\tildecov(\zee_{S,\alpha},\zee_{T,\beta}) 
~&=~ \tildeEx{ \zee_{S,\alpha} \cdot \zee_{T,\beta} } - \tildeEx{\zee_{S,\alpha}} \,\tildeEx{\zee_{T,\beta}}\\ 		
\tildeVar{\zee_{S,\alpha}} ~&=~ \tildecov(\zee_{S,\alpha},\zee_{S,\alpha})
\end{align*}
\fi
The above definition is extended to pseudo-covariance and pseudo-variance for pairs of sets $S,T$, 
as the sum of absolute value of pseudo-covariance for all pairs $\alpha,\beta$ :
\ifnum\confversion=1
\begin{gather*}
\tildecov(\zee_S,\zee_T) 
~=~ \sum_{\alpha\in \Sigma^S \atop \beta\in \Sigma^T} \abs*{ \tildecov(\zee_{S,\alpha},\zee_{T,\beta}) } \\
\tildeVar{\zee_S} ~=~ \sum_{\alpha\in \Sigma^S} \abs*{ \tildeVar{\zee_{S,\alpha} } }
\end{gather*}
\else
\begin{align*}
\tildecov(\zee_S,\zee_T) 
~&=~ \sum_{\alpha\in \Sigma^S, \beta\in \Sigma^T} \abs*{ \tildecov(\zee_{S,\alpha},\zee_{T,\beta}) }\\[3pt]
\tildeVar{\zee_S} ~&=~ \sum_{\alpha\in \Sigma^S} \abs*{ \tildeVar{\zee_{S,\alpha} } }
\end{align*}
\fi
\end{definition}

%
We will need the fact that $\tildeVar{\zee_S}$ is bounded above by 1, since,
\ifnum\confversion=1
\begin{align*}
\tildeVar{\zee_S} 
&~=~ \sum_{\alpha} \abs*{\tildeVar{\zee_{S,\alpha}}} \\
&~=~ \sum_{\alpha}\inparen{
          \tildeEx{\zee_{S,\alpha}^2} - \tildeEx{\zee_{S,\alpha}}^2} \\
&~\leq~ \sum_{\alpha} \tildeEx{\zee_{S,\alpha}^2} \\
&~=~ \sum_{\alpha} \tildeEx{\zee_{S,\alpha}} 
~=~ 1.
\end{align*}
\else
\[
\tildeVar{\zee_S} 
~=~ \sum_{\alpha} \abs*{\tildeVar{\zee_{S,\alpha}}} 
~=~ \sum_{\alpha}\inparen{
          \tildeEx{\zee_{S,\alpha}^2} - \tildeEx{\zee_{S,\alpha}}^2} 
~\leq~ \sum_{\alpha} \tildeEx{\zee_{S,\alpha}^2} 
~=~ \sum_{\alpha} \tildeEx{\zee_{S,\alpha}} 
~=~ 1.
\]
\fi

%
\vspace{-5 pt}
\paragraph{Conditioning SoS solutions.}
We will also make use of conditioned pseudoexpectation operators, which may be defined in a way
similar to usual conditioning for true expectation operators, as long as the event we condition on
is local. 
The conditioned SoS solution is of a smaller degree but continues to respect the constraints that the original solution respects.

\begin{definition}[Conditioned SoS Solution] Let $F \subseteq \Sigma^m$ be subset (to be thought of as an event) such that $\one_F:\Sigma^m \rightarrow \{0,1\}$ is a $k$-local function. Then for every $t>2k$, we can condition a pseudoexpectation operator of SoS-degree $t$ on $F$ to obtain a new conditioned pseudoexpectation operator $\condPE{\cdot}{F}$ of SoS-degree $t-2k$, as long as $\tildeEx{\one^2_F(\zee)}>0$. The conditioned SoS solution is given by
\[
	\condPE{ p(\zee)}{F(\zee) } \defeq \frac{\tildeEx{p(\zee) \cdot \one^2_{F}(\zee)}}{\tildeEx{\one^2_{F}(\zee)}}
\]
where $p$ is any polynomial of degree at most $t-2k$.
\end{definition}


We can also define pseudocovariances and pseudo-variances for the conditioned SoS solutions.
\begin{definition}[Pseudocovariance]
	Let $F\sub \Sigma^m$ be an event such that $\one_F$ is $k$-local, and let $\tildeEx{\cdot}$ be a pseudoexpectation operator of degree $t$, with $t>2k$. Let $S,T$ be two sets of size at most $\frac{t-2k}{2}$ each. Then the pseudocovariance between $\zee_{S,\alpha}$ and $\zee_{T,\beta}$ for the solution conditioned on event $F$ is defined as,
\ifnum\confversion=1
\begin{multline*}
\tildecov(\zee_{S,\alpha},\zee_{T,\beta} \vert F) = \\
\tildeEx{ \zee_{S,\alpha} \zee_{T,\beta} \vert F} - \tildeEx{\zee_{S,\alpha} \vert F} \tildeEx{\zee_{T,\beta} \vert F}
\end{multline*}
\else
\begin{align*}
\tildecov(\zee_{S,\alpha},\zee_{T,\beta} \vert F) 
~=~ \tildeEx{ \zee_{S,\alpha} \zee_{T,\beta} \vert F} - \tildeEx{\zee_{S,\alpha} \vert F} ~ \tildeEx{\zee_{T,\beta} \vert F}
\end{align*}
\fi
\end{definition}

We also define the pseudocovariance between $\zee_{S,\alpha}$ and $\zee_{T,\beta}$ after
conditioning on a random assignment for some $\zee_V$ with $V\sub [m]$. 
Note that here the random assignment for $\zee_V$ is chosen according to the local distribution for
the set $V$.

\begin{definition}[Pseudocovariance for conditioned pseudoexpectation operators]
\ifnum\confversion=1
\begin{multline*}
\tildecov(\zee_{S,\alpha},\zee_{T,\beta} \vert \zee_V) 
= \\ \sum_{\gamma \in \Sigma^V}   \tildecov(\zee_{S,\alpha},\zee_{T,\beta} \vert \zee_V = \gamma) \cdot \tildeEx{\zee_{V,\gamma}}
	\end{multline*}
\else
\begin{align*}
\tildecov(\zee_{S,\alpha},\zee_{T,\beta} \vert \zee_V) 
~=~ \sum_{\gamma \in \Sigma^V}   \tildecov(\zee_{S,\alpha},\zee_{T,\beta} \vert \zee_V = \gamma) \cdot \tildeEx{\zee_{V,\gamma}}
\end{align*}
\fi
\end{definition}

And we likewise define $\tildeVar{\zee_{S,\alpha} \vert \zee_V}$, $\tildecov(\zee_S, \zee_T \vert \zee_V)$ and $\tildeVar{\zee_S \vert \zee_V}$.


\subsection{Pseudocodewords achieve the generalized Singleton bound}\label{sec:sos_proof}

We will show that the proof of list decodability established for integral codewords can be extended to SoS relaxations of these codewords, under certain average-case low covariance conditions. 
First, we describe what an SoS relaxation of an AEL codeword looks like. This relaxation will be such that true codewords are always feasible solutions, but there may be other feasible solutions. These feasible solutions are therefore called \textit{pseudocodewords}.

The main goal will be to show that for some large constant $t$ depending on $k$ and $\eps$, but independent of $n$, the degree-$t$ SoS relaxation is tight enough to decode up to radius $\frac{k-1}{k}(1-R-\eps)$. We will actually work with $k$-tuples of pseudocodewords, and the show that this tuple---after some random conditioning---satisfies a generalized Singleton bound just like a list of true (distinct) codewords.

\paragraph{SoS relaxations for AEL codewords}
Let $G(L,R,E)$ be the bipartite $(n,d,\lambda)$-expander on which the AEL code is defined, and let $\calC_\inn\subseteq \Sigma_\inn^d$ be the inner code. 

Before going into the details of the SoS relaxation for AEL codes, let us set up some convenient notation. For sets $S\sub [k]$ and $F \sub E$, we use $\zee_{S,F}$ to index their Cartesian product $\zee_{S \times F}$. Further, since we will be often dealing with the case when $F=N(\li)$ or $F=N(\ri)$, we will use $\zee_{S,\li}$ as a shorthand for $\zee_{S,N(\li)}$, and similarly use $\zee_{S,\ri}$ as a shorthand for $\zee_{S,N(\ri)}$. For example,
\[
	\tildeVar{\zee_{[k],\li}} = \tildeVar{ \zee_{[k] \times N(\li)}}.
\]

\begin{definition}[$k$-tuple of pseudocodewords]\label{def:k_tuple}
A \emph{$k$-tuple of psueocodewords of degree-$t$} is a pseudoexpectation operator $\tildeEx{\cdot}$ of SoS-degree-$t$ defined on the variables $\{Z_{i,e}\}_{i\in [k], e\in E}$ over the alphabet $\Sigma_{\inn}$ that respects the following constraints:
\[
	\forall i\in [k], \forall \li \in L,~~~ \zee_{i,N(\li)} \in \calC_{\inn}
\] 
\end{definition}

It is easy to see that any $k$ strings $f_1,\cdots ,f_k$ in $\calC_{\inn}^L$ can be used to define a $k$-tuple of pseudocodeword, by simply setting $\zee_{i,e}$ to be $(f_i)_e$ for all $i \in [k], e\in E$. Note that these strings need not be distinct, as written. However, the set of $k$-tuples of pseudocodewords is more general that just integral strings, and in particular, is convex. However, we can still generalize notions like distance for these objects.

\begin{definition}[Distances and pseudocodewords]
	For an $i\in [k]$, we define the left and right distances between the $i^{th}$ component of a $k$-tuple of pseudocodeword $\tildeEx{\cdot}$ of SoS-degree $t\geq 2d$ and any string $g \in \Sigma_{\inn}^E$ as
	\begin{align*}
		\tildeEx{\Delta_L(g, \zee_i)} &\defeq \Ex{\li \in L}{\tildeEx{\indi{g_{\li} \neq \zee_{i, \li}}}} \\
		\tildeEx{\Delta_R(g,\zee_i)} &\defeq \Ex{\ri\in R}{\tildeEx{\indi{g_{\ri} \neq \zee_{i, \ri}}}}.
	\end{align*}
\end{definition}

The above can be seen as counting the fraction of errors between $\zee_i$ and $g$. We will also need another piece of notation which will make it easier to track the fraction of errors that are common to an entire set $S\sub [k]$.
\begin{align*}
	\tildeEx{\Delta_R(g,\zee_S)} \defeq \Ex{\ri \in R}{\tildeEx{\Pi_{i\in S} \indi{g_{\ri} \neq \zee_{i,\ri}}}}
\end{align*}

We will also need a notion of distance between two pseudocodewords within a $k$-tuple.
\begin{definition}[Distances between pseudocodewords in a $k$-tuple]
Let $\tildeEx{\cdot}$ be a $k$-tuple of pseudocodewords. Let $i,i' \in [k]$ be indices, then the distance between $i^{th}$ and $(i')^{th}$ pseudocodewords can be defined as
\begin{align*}
		\tildeEx{\Delta_L(\zee_i, \zee_{i'})} &\defeq \Ex{\li \in L}{\tildeEx{\indi{\zee_{i, \li} \neq \zee_{i', \li}}}} \\
		\tildeEx{\Delta_R(\zee_i,\zee_{i'})} &\defeq \Ex{\ri\in R}{\tildeEx{\indi{\zee_{i,\ri} \neq \zee_{i', \ri}}}}.
	\end{align*}
\end{definition}

\paragraph{$\eta$-goodness} Instead of working with arbitrary pseudocodewords, we will work with those  that have small pseudocovariances across a typical edge in $G$. This key property, called $\eta$-goodness, has been used in prior works and we extend its definition from \cite{JST23} to $k$-tuple of pseudocodewords as follows:

\begin{definition}[$\eta$-good pseudocodeword]
    A $k$-tuple of pseudocodewords $\tildeEx{\cdot}$ of degree at least $2kd$ is called $\eta$-good if
    \[
        \Ex{\li,\ri}{\tildeCov{\zee_{[k]\times N(\li)}}{\zee_{[k]\times N(\ri)}}} \leq \eta.
    \]
\end{definition}

The key upshot of having this property is~\cref{lem:avg_corr} which we use in our proof.  Moreover, one can obtain such $\eta$-good pseudocodewords by randomly conditioning an SoS solution. We relegate the proof of these details to \cref{sec:appendix} as they are an adaptation of earlier proofs.

%
%


The main result of this section is the following generalization of~\cref{lemma:common-error-bound} to $\eta$-good pseudocodewords.

\begin{theorem}\label{thm:sos_main}
	Let $k_0\geq 1$ be an integer and let $\eps > 0$. Let $\AELC$ be a
code obtained using the AEL construction using $(G,\calC_{\out}, \calC_{\inn})$, where $\calC_{\inn}$ is $(\delta_0, k_0, \eps/2)$ average-radius list decodable with erasures, and the graph $G$ is a $(n,d,\lambda)$-expander. 
	Let $\tildeEx{\cdot}$ be an $\eta$-good $k_0$-tuple of pseudocodewords that satisfies the following pairwise distance property on the left:
	\[
		\forall i,i'\in [k_0] \text{ with } i\neq i', \qquad \tildeEx{\Delta_L(\zee_i, \zee_{i'})} \geq \beta\mper
	\]
	Further, assume that $\lambda \leq \frac{\beta}{12k_0^{k_0}}\cdot \eps$ and $\eta \leq \frac{\beta}{12k_0^{k_0}}\cdot \eps$.	Then, for any $g \in (\Sigma_{\inn}^d)^R$,
	\begin{align*}
	\sum_{i\in [k_0]} \tildeEx{\Delta_R(g,\zee_i)} ~&\geq~ (k_0-1)(\delta_0-\eps) + \Ex{\ri \in R}{\tildeEx{ \Pi_{i\in [k_0]} \indi{g_{\ri} \neq \zee_{i,\ri}} }} \\
	~&=~ (k_0-1)(\delta_0 -\eps) + \tildeEx{\Delta_R(g,\zee_{[k_0]})}.
	\end{align*}
\end{theorem}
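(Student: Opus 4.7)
The plan is to adapt the inductive proof of \cref{thm:main_technical_avg} (and specifically its common-error consequence \cref{lemma:common-error-bound}) to the SoS setting, substituting the expander mixing lemma with the $\eta$-goodness estimate \cref{lem:avg_corr}, and integral codewords with pseudocodewords. I would induct on $k_0$, with the base case $k_0 = 1$ being trivial since both sides reduce to $\tildeEx{\Delta_R(g,\zee_1)}$.

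For the inductive step, the SoS analog of the partition argument (\cref{lem:type_arg}) is the crux. Since the SoS degree is taken sufficiently large, the local pseudodistribution on $(\zee_{1,\ell}, \ldots, \zee_{k_0,\ell}) \in \calC_\inn^{k_0}$ is a genuine distribution at each $\ell$, and each sample induces a partition of $[k_0]$ based on which local projections coincide. The pairwise left-distance hypothesis $\tildeEx{\Delta_L(\zee_i, \zee_{i'})} \geq \beta$ forces the expected pseudoprobability of the trivial (all-equal) pattern at $\ell$ to be at most $1-\beta$, and pigeonholing over the at most $k_0^{k_0}$ non-trivial partitions yields one particular $\tau^* = (\calH_1, \ldots, \calH_p)$ with $p \geq 2$ whose pseudoprobability averaged over $\ell$ is at least $\beta/k_0^{k_0}$. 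Conditioning on the event ``pattern at $\ell$ is $\tau^*$'' at typical $\ell$ gives a lower-degree pseudoexpectation on which I would (i) apply the local inequality for $\calC_\inn$ (\cref{claim:local-bound}) to the $p$ distinct projections, collecting $(p-1)(\delta_0-\eps/2)$ plus a local common-error term, and (ii) invoke the inductive hypothesis on each sub-tuple indexed by $\calH_j$, which has size $\leq k_0-1$ and inherits pairwise distance and $\eta$-goodness from the parent tuple, collecting $(|\calH_j|-1)(\delta_0-\eps)$ plus a common-error term for $\calH_j$. These sum to $(k_0-1)(\delta_0-\eps)$ plus a total common-error contribution, which I would show telescopes (up to $\eta$-error) to $\tildeEx{\Delta_R(g, \zee_{[k_0]})}$ via $\eta$-goodness in place of the expander-mixing step used in \cref{claim:sampling_erasure} and \cref{lemma:inductive}.

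The main obstacles I expect are (a) formalizing the partition-based conditioning without breaking SoS structure: conditioning on the partition event at a specific $\ell$ costs degree, but since the event is local in $k_0 d$ variables, this is manageable provided the SoS degree is a large enough constant depending on $k_0$ and $d$; and (b) careful bookkeeping of parameter loss across the recursion. The hypotheses $\lambda, \eta \leq \beta\eps/(12 k_0^{k_0})$ are calibrated precisely so that the $k_0^{k_0}$-factor pigeonhole loss on the partition's average pseudoprobability, combined with the $\eta$-error from $\eta$-goodness (replacing the $\lambda$-error from the expander mixing lemma), still leaves enough slack per inductive step to mimic the $\eps/6 + \eps/6 + \eps/2$ accounting of the integral proof. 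A secondary subtlety is verifying that the common-error contributions from distinct parts combine correctly with the local common-error term to yield the single global quantity $\tildeEx{\Delta_R(g,\zee_{[k_0]})}$; this should follow from an $\eta$-goodness estimate applied to the product indicator $\prod_{i\in [k_0]} \indi{g_\ri \neq \zee_{i,\ri}}$, which was essentially the form in which $\eta$-goodness was used in \cite{JST23}.
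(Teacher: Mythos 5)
Your overall plan matches the paper's: induction on $k_0$ with trivial base case, the SoS partition claim (the pseudodistribution at each $\ell$ assigns little mass to the trivial partition, so a nontrivial $\tau^*$ with average mass at least $\beta/k_0^{k_0}$ exists), invocation of the inductive hypothesis on the sub-tuples indexed by the parts $\calH_j$ (which do inherit pairwise distance and, by marginalization, $\eta$-goodness), the local average-radius inequality applied to the $p$ distinct projections, and $\eta$-goodness plus a pseudoexpectation version of expander mixing in place of the integral sampling bounds. The parameter accounting you sketch ($\eps/6 + \eps/6 + \eps/2$) is also the one the paper carries out.

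The place where your plan as written does not quite go through is the step ``conditioning on the event `pattern at $\ell$ is $\tau^*$' at typical $\ell$.'' The event is indexed by $\ell$, so conditioning produces a \emph{different} pseudoexpectation operator for each left vertex; you cannot then ``invoke the inductive hypothesis on each sub-tuple $\calH_j$'' against such a family of $\ell$-dependent operators, nor can you sum the resulting bounds cleanly over $\ell$. The paper avoids conditioning entirely at this stage. Instead: (a) the inductive hypothesis is applied to the \emph{original} operator $\tildeEx{\cdot}$ restricted to the sub-tuple $K_j$ (with the same center $g$), giving $\sum_{i\in K_j}\tildeEx{\Delta_R(g,\zee_i)} \geq (|K_j|-1)(\delta_0-\eps) + \tildeEx{\Delta_R(g,\zee_{K_j})}$; and (b) the local inner-code inequality is established as a \emph{pointwise} inequality between $kd$-local functions multiplied by the partition indicator $\Lambda_{\li,\tau^*}(\zee)$, so it lifts directly through $\tildeEx{\cdot}$ via domination (\cref{claim:sos_domination}). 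The comparison between the ``$\tau^*$-reweighted average'' quantities $\tildeEx{\Ex{\li}{\Lambda_{\li,\tau^*}(\zee)\cdot(\cdot)}} / \tildeEx{\Ex{\li}{\Lambda_{\li,\tau^*}(\zee)}}$ and the true right-side quantities is then handled in one shot by the SoS expander mixing lemma together with $\eta$-goodness (\cref{lemma:local_erasures_upper_bound}). So the mechanism is reweighting inside a single pseudoexpectation plus pointwise local inequalities, not a per-$\ell$ conditioned family; if you try to formalize the conditioning route you will find yourself forced back to this.

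A smaller caveat: when you say you would apply ``\cref{claim:local-bound} to the $p$ distinct projections,'' note that in the SoS setting distinctness of the local projections is not an a priori fact about a codeword but an \emph{event} (namely $\Lambda_{\li,\tau^*}(\zee)=1$). The paper's \cref{lemma:local_avg_singleton} handles this precisely by multiplying both sides of the local inequality by $\Lambda_{\li,\tau^*}(\zee)$, which zeroes out the assignments where the projections collide, making the pointwise inequality valid over all of $\Sigma_{\inn}^{[k]\times M_\li}$. Your proposal needs this ingredient; it is not a cosmetic point, since without the indicator the local inequality is simply false for assignments inducing a coarser partition.
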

%
Just like in the proof of our main theorem in \cref{sec:avg-singleton}, we need preparatory claims about the existence of a nice partition and (a variant of) $L^*$. We start by proving analytic generalizations of \cref{lem:type_arg} and \cref{lemma:inductive}. Let $\Tau$ denote the set of all partitions\footnote{Note that now we work with partitions of $[k]$ rather a list of codewords $\calH$.} of $[k]$. For an $\li \in L$ and $\tau \in \Tau$, define the local function, 
	\[
		\Lambda_{\li, \tau}(\zee) \defeq \indi{ (\zee_{1, \li}, \zee_{2, \li}, \cdots , \zee_{k, \li}) \text{ induces partition }\tau}.
	\]
	By definition, for every $\li \in L$,
	\[
		\sum_{\tau \in \Tau} \Lambda_{\li, \tau}(\zee) = 1.
	\]
	Let $\tau_1 \in \Tau$ be the trivial partition with only 1 part. In the integral proof, we said that there must be a non-trivial partition that is induced on a significant fraction of left vertices, and used it to define the set $L^*$. For tuples of pseudocodewords, each left vertex will be inducing a distribution over all possible partitions, and we will argue that there is a partition $\tau^*$ that receives a significant mass among these distributions on average over all $\li \in L$. The indicator of this partition $\Lambda_{\li,\tau^*}(\zee)$ will then play the role of the indicator of the set $L^*$ in integral proof.
	\begin{claim}[Generalization of \cref{lem:type_arg}]\label{claim:best_partition}
		There exists a $\tau^*\in \Tau \setminus \{\tau_1\}$ such that 
		\[
			\tildeEx{\Ex{\li \in L}{\Lambda_{\li, \tau^*}(\zee)}} \geq \frac{\beta}{k^k}.
		\]
	\end{claim}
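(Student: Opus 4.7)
The plan is to mimic the integral proof of \cref{lem:type_arg}: in the integral case, the pairwise distinctness of codewords forces a nontrivial partition to appear on a $\delta_\out$ fraction of vertices, and pigeonhole over the $\le k^k$ partitions does the rest. Here, the analytic substitute for ``$\delta_\out$'' is the pairwise left-distance lower bound $\beta$, and the analytic substitute for ``a vertex induces a nontrivial partition'' is the local function $1-\Lambda_{\li,\tau_1}(\zee)$.

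First I would observe that the family $\{\Lambda_{\li,\tau}\}_{\tau\in\Tau}$ partitions unity, so
\[
    1 - \Lambda_{\li,\tau_1}(\zee) \;=\; \sum_{\tau\in\Tau\setminus\{\tau_1\}} \Lambda_{\li,\tau}(\zee).
\]
Next, for any $i\ne i'$, I would point out the pointwise inequality (as functions of the $k$ variables $\zee_{1,\li},\dots,\zee_{k,\li}$)
\[
    \indi{\zee_{i,\li} \ne \zee_{i',\li}} \;\le\; 1 - \Lambda_{\li,\tau_1}(\zee),
\]
since whenever two coordinates disagree, the induced partition is not the trivial one. Since both sides are local functions supported on the same $k$ variables, \cref{claim:sos_domination} gives
\[
    \tildeEx{\indi{\zee_{i,\li}\ne \zee_{i',\li}}} \;\le\; \tildeEx{1-\Lambda_{\li,\tau_1}(\zee)}.
\]

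Averaging over $\li\in L$ and invoking the hypothesis $\tildeEx{\Delta_L(\zee_i,\zee_{i'})}\ge \beta$ for the chosen pair $i\ne i'$ yields
\[
    \beta \;\le\; \tildeEx{\Delta_L(\zee_i,\zee_{i'})} \;\le\; \tildeEx{\Ex{\li\in L}{1-\Lambda_{\li,\tau_1}(\zee)}} \;=\; \sum_{\tau\in\Tau\setminus\{\tau_1\}} \tildeEx{\Ex{\li\in L}{\Lambda_{\li,\tau}(\zee)}}.
\]
Since $|\Tau\setminus\{\tau_1\}| \le k^k$, the pigeonhole principle furnishes a partition $\tau^*\in\Tau\setminus\{\tau_1\}$ with $\tildeEx{\Ex{\li}{\Lambda_{\li,\tau^*}(\zee)}}\ge \beta/k^k$, as required.

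There is no serious obstacle; the one mild subtlety is ensuring the domination step is valid, which is fine because $\indi{\zee_{i,\li}\ne \zee_{i',\li}}$ can be viewed as a local function of the same $k$ variables that $1-\Lambda_{\li,\tau_1}$ depends on (just constant in the other $k-2$). Everything else is a direct transcription of the integral pigeonhole into the pseudoexpectation language, using linearity of $\tildeEx{\cdot}$ and the hypothesis on pairwise left distances.
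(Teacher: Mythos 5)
Your proof is correct and is essentially identical to the paper's: the paper uses the equivalent pointwise bound $\Lambda_{\li,\tau_1}(\zee) \leq \indi{\zee_{1,\li} = \zee_{2,\li}}$ (the complement of your inequality, with the specific pair $i=1,i'=2$), then applies the partition-of-unity identity, the hypothesis $\tildeEx{\Delta_L(\zee_1,\zee_2)}\geq\beta$, and pigeonhole over the at most $k^k$ partitions, exactly as you do.
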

	\begin{proof}
		If $(\zee_{1, \li}, \zee_{2, \li}, \cdots , \zee_{k, \li})$ induces the partition $\tau_1$, then $\zee_{1,\li} = \zee_{2,\li}$. That is,
		\[
			\Lambda_{\li,\tau_1}(\zee) \leq \indi{\zee_{1,\li} = \zee_{2,\li}}.
		\]
		We bound the contribution from the trivial partition as 
		\[
			\tildeEx{\Ex{\li}{\Lambda_{\li,\tau_1}(\zee)}} \leq \tildeEx{\Ex{\li}{\indi{\zee_{1,\li} = \zee_{2,\li}}}} = \tildeEx{1-\Delta_L(\zee_1,\zee_2)} \leq 1-\beta
		\]
		The above shows that the partition $\tau_1$ cannot be too common. Next, we use $\sum_{\tau\in \Tau} \Lambda_{\li,\tau}(\zee) =1$ to show that:
		\begin{align*}
			1 = \tildeEx{\Ex{\li}{\sum_{\tau \in \Tau}{\Lambda_{\li,\tau}(\zee)}}} = \tildeEx{\Ex{\li}{\Lambda_{\li,\tau_1}(\zee)}} + \sum_{\tau \in \Tau \setminus \{\tau_1\}} \tildeEx{\Ex{\li}{\Lambda_{\li,\tau}(\zee)}} \\
			\implies\sum_{\tau \in \Tau \setminus \{\tau_1\}} \tildeEx{\Ex{\li}{\Lambda_{\li,\tau}(\zee)}} = 1 - \tildeEx{\Ex{\li}{\Lambda_{\li,\tau_1}(\zee)}} \geq 1-(1-\beta) = \beta.
		\end{align*}
		By averaging, we can use $|\Tau| \leq k^k$ to conclude that there is a $\tau^* \in \Tau \setminus \{\tau_1\}$ such that,
		\[
			\tildeEx{\Ex{\li}{\Lambda_{\li,\tau^*}(\zee)}} ~\geq~ \frac{1}{|\Tau|} \sum_{\tau \in \Tau \setminus \{\tau_1\}} \tildeEx{\Ex{\li}{\Lambda_{\li,\tau}(\zee)}} ~\geq~ \frac{\beta}{k^k}.\qedhere
		\]
	\end{proof}
	
	\paragraph{Capturing common error locations}
	Suppose the partition $\tau^*$ is given by $[k] = (K_1, \cdots, K_p)$ for some $1<p<k$.  Henceforth, we will be working with this fixed partition. 	We define a function, $\dd_{S,\ri}(g,\zee)$, which is an indicator of whether $\ri \in R$ is a common error location for the pseudocodewords indexed by $S\subseteq [k]$. 
	\begin{align*}
		\dd_{S,\ri}(g, \zee) &= \Pi_{i\in S} \indi{g_{\ri} \neq \zee_{i,\ri}} \\
		\dd_{S,e}(g, \zee) &= \dd_{S,\ri}(g, \zee), \text{ where } e=(\li,\ri).
	\end{align*}
Of course, if a vertex $\ri $ is a common error location for all $k$ pseudocodewords, then it is also a common error location for a subset, implying
	\[
		\dd_{S,\ri}(g,\zee) ~\geq~ \dd_{[k],\ri}(g,\zee).
	\]
	We will also need to track the error locations common to $S$ but not to $[k]$, so we define two additional shorthands:
	\begin{align*}
		\fd_{S,\ri}(g, \zee) ~&=~ \dd_{S,\ri}(g,\zee) ~-~ \dd_{[k],\ri}(g,\zee) \\
		\fd_{S,e}(g, \zee) ~&=~ \fd_{S,r}(g, \zee), \text{ where } e=(\li,\ri).
	\end{align*}

\paragraph{Key Claims}		
		We will now prove SoS versions of the three main claims from the integral proof; ~\cref{lemma:inductive}, \cref{claim:local-bound}, and \cref{claim:sampling_erasure}. 



The first of these showed that errors observed on $L^*$ serve as a lower bound for global errors on the right. The analog of "averaging over $L^*$ can be carried out by reweighing according to the indicator function $\Lambda_{\li,\tau^*}$ for the $\tau^*$ partition.
	
\begin{lemma}[Generalization of \cref{claim:sampling_erasure}]\label{lemma:local_erasures_upper_bound} 
Assume that $\lambda \leq \frac{\beta}{12k^{k}}\cdot \eps$ and $\eta \leq \frac{\beta}{12k^{k}}\cdot \eps$.
For the functions $\fd, \dd$, and any set $S\subseteq [k]$ we have,
\begin{align*}
	\frac{\tildeEx{\Ex{\li \sim \ri}{\Lambda_{\li,\tau^*}(\zee) \cdot \dd_{S,\ri}(g,\zee)}}}{\tildeEx{\Ex{\li}{\Lambda_{\li,\tau^*}(\zee)}}} ~\leq~ \tildeEx{\Ex{\ri}{\dd_{S,\ri}(g,\zee)}} + \frac{\eps}{6}.\\
		\frac{\tildeEx{\Ex{\li \sim \ri}{\Lambda_{\li,\tau^*}(\zee) \cdot \fd_{S,\ri}(g,\zee)}}}{\tildeEx{\Ex{\li}{\Lambda_{\li,\tau^*}(\zee)}}} ~\leq~ \tildeEx{\Ex{\ri}{\fd_{S,\ri}(g,\zee)}} + \frac{\eps}{6}.
		\end{align*}
	\end{lemma}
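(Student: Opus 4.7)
The plan is to prove an SoS analog of the expander mixing lemma argument used in the integral proof of \cref{claim:sampling_erasure}, combining the spectral gap of $G$ with the $\eta$-goodness of $\tildeE$ which controls pseudocovariances across a random edge. Both inequalities have the same form, and both $\dd_{S,\ri}$ and $\fd_{S,\ri}$ are $\{0,1\}$-valued $k$-local functions of $\zee_{[k] \times N(\ri)}$---for $\fd$ this uses the fact that $S \subseteq [k]$ forces $\dd_{[k],\ri} \leq \dd_{S,\ri}$, so $\fd_{S,\ri} \in \{0,1\}$. A single argument therefore covers both, and I would state it for a generic $\{0,1\}$-valued $k$-local function $\psi_\ri$ of $\zee_{[k]\times N(\ri)}$, paired with the $\{0,1\}$-valued $k$-local function $\Lambda_{\li,\tau^*}$ of $\zee_{[k]\times N(\li)}$.

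The first step is to decompose the edge-local product using pseudocovariance: for each edge $(\li,\ri) \in E$,
\[
\tildeEx{\Lambda_{\li,\tau^*}(\zee) \cdot \psi_\ri(\zee)} ~=~ \tildeEx{\Lambda_{\li,\tau^*}(\zee)} \cdot \tildeEx{\psi_\ri(\zee)} ~+~ \tildecov(\Lambda_{\li,\tau^*},\psi_\ri) \mper
\]
After averaging over edges and splitting, the covariance contribution is handled by $\eta$-goodness. Writing $\Lambda_{\li,\tau^*} = \sum_{\alpha} c_\alpha \, \zee_{[k]\times N(\li),\alpha}$ and $\psi_\ri = \sum_{\beta} d_\beta \, \zee_{[k]\times N(\ri),\beta}$ with $c_\alpha, d_\beta \in \{0,1\}$, the triangle inequality yields $\abs{\tildecov(\Lambda_{\li,\tau^*},\psi_\ri)} \leq \tildecov(\zee_{[k]\times N(\li)}, \zee_{[k]\times N(\ri)})$, whose edge-average is at most $\eta$ by definition of $\eta$-goodness. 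For the product-of-marginals term, I would apply the expander mixing lemma (\cref{lem:eml}) to the $[0,1]$-valued scalar functions $p(\li) \defeq \tildeEx{\Lambda_{\li,\tau^*}}$ and $q(\ri) \defeq \tildeEx{\psi_\ri}$; since $\norm{p}_2,\norm{q}_2 \leq 1$, the EML bound on the discrepancy is at most $\lambda$.

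Combining these two estimates and using linearity of $\tildeE$,
\[
\tildeEx{\Ex{\li \sim \ri}{\Lambda_{\li,\tau^*} \cdot \psi_\ri}} ~\leq~ \tildeEx{\Ex{\li}{\Lambda_{\li,\tau^*}}} \cdot \tildeEx{\Ex{\ri}{\psi_\ri}} + \lambda + \eta \mper
\]
Dividing both sides by $\tildeEx{\Ex{\li}{\Lambda_{\li,\tau^*}}}$, which is bounded below by $\beta/k^k$ thanks to \cref{claim:best_partition}, and plugging in the hypotheses $\lambda,\eta \leq \beta\eps/(12 k^k)$, the additive error becomes $(\lambda+\eta)\cdot k^k/\beta \leq \eps/6$. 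Instantiating $\psi_\ri = \dd_{S,\ri}$ and then $\psi_\ri = \fd_{S,\ri}$ settles both stated inequalities.

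The main technical concern is the covariance step: formally verifying $\abs{\tildecov(\Lambda_{\li,\tau^*},\psi_\ri)} \leq \tildecov(\zee_{[k]\times N(\li)},\zee_{[k]\times N(\ri)})$ requires that the expansion coefficients $c_\alpha, d_\beta$ are $\{0,1\}$-valued so that the sum of absolute values of monomial covariances collapses exactly to the paper's definition of $\tildecov$ for variable tuples. This is precisely the kind of manipulation encapsulated by the $\eta$-goodness framework of \cite{JST23}---indeed the excerpt flags \cref{lem:avg_corr} as the ``key upshot'' of this property---so I would expect to invoke that lemma rather than rederive the bound from scratch.
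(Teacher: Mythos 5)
Your overall accounting is right --- the target bound $\tildeEx{\Ex{\li\sim\ri}{\Lambda_{\li,\tau^*}(\zee)\cdot\psi_\ri(\zee)}} \leq \tildeEx{\Ex{\li}{\Lambda_{\li,\tau^*}(\zee)}}\cdot\tildeEx{\Ex{\ri}{\psi_\ri(\zee)}} + \lambda + \eta$, followed by division by the denominator and an appeal to \cref{claim:best_partition}, is exactly what the paper does, and your observation that $\fd_{S,\ri}$ is $\{0,1\}$-valued (so a single argument covers both inequalities) is correct. But there is a genuine gap in how you split the error. You decompose each edge term as $\tildeEx{\Lambda_{\li,\tau^*}\psi_\ri} = \tildeEx{\Lambda_{\li,\tau^*}}\,\tildeEx{\psi_\ri} + \tildecov(\Lambda_{\li,\tau^*},\psi_\ri)$ and then want to bound the \emph{edge-average} of $\abs{\tildecov(\Lambda_{\li,\tau^*},\psi_\ri)}$ by $\eta$. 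However, $\eta$-goodness (and \cref{lem:avg_corr}, which you cite) controls $\Ex{\li,\ri}{\tildeCov{\zee_{[k]\times N(\li)}}{\zee_{[k]\times N(\ri)}}}$ over \emph{independent} uniform $\li$ and $\ri$, i.e.\ the product distribution on $L\times R$, not over a uniformly random edge. The two averages can differ by a factor of order $n/d$ (the covariance mass could concentrate on the $nd$ edges among the $n^2$ pairs), so this step does not follow from the definition. Symmetrically, your use of the classical \cref{lem:eml} on the scalar marginals $p(\li)=\tildeEx{\Lambda_{\li,\tau^*}}$ and $q(\ri)=\tildeEx{\psi_\ri}$ only transports the \emph{product-of-marginals} term between the edge and product distributions; the quantity that actually needs to be transported is the joint pseudo-moment $\tildeEx{\Lambda_{\li,\tau^*}\psi_\ri}$, and that is precisely the piece left uncontrolled.

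The fix is to reorder the two steps, which is what the paper does: first apply the SoS expander mixing lemma (\cref{lem:eml_sos}) to the $kd$-local $\{0,1\}$-valued functions $X_\li=\Lambda_{\li,\tau^*}$ and $Y_\ri=\psi_\ri$ to get $\tildeEx{\Ex{\li\sim\ri}{X_\li Y_\ri}} \leq \tildeEx{\Ex{\li,\ri}{X_\li Y_\ri}} + \lambda$ (the SoS Cauchy--Schwarz inside \cref{lem:eml_sos} is what lets one move the joint moment, rather than just the marginals, between the two vertex distributions), and only then apply \cref{lem:avg_corr} to decouple $\tildeEx{\Ex{\li,\ri}{X_\li Y_\ri}} \leq \tildeEx{\Ex{\li}{X_\li}}\cdot\tildeEx{\Ex{\ri}{Y_\ri}} + \eta$. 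With that substitution, your final division by $\tildeEx{\Ex{\li}{\Lambda_{\li,\tau^*}(\zee)}} \geq \beta/k^k$ and the bound $(\lambda+\eta)\cdot k^k/\beta \leq \eps/6$ go through exactly as you wrote them.
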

\begin{proof} The proof is based on an AEL-like argument and is identical for either case. The first step uses the expander mixing lemma for pseudocodewords (\cref{lem:eml_sos}), and the second utilizes the $\eta$-good property (\cref{lem:avg_corr}),
	\begin{align*}
		\tildeEx{\Ex{\li\sim \ri}{\Lambda_{\li,\tau^*}(\zee) \cdot \dd_{S,\ri}(g,\zee)}}	~&\leq~ \tildeEx{\Ex{\li , \ri}{\Lambda_{\li,\tau^*}(\zee) \cdot \dd_{S,\ri}(g,\zee)}} + \lambda \\
		~&\leq~ \tildeEx{\Ex{\li}{\Lambda_{\li,\tau^*}(\zee)}} \cdot \tildeEx{\Ex{\ri}{\dd_{S,\ri}(g,\zee)}} + \lambda + \eta.
	\end{align*}
	To obtain the final consequence we divide by $\tildeEx{\Ex{\li}{\Lambda_{\li,\tau^*}(\zee)}}$ and use~\cref{claim:best_partition}.
		\begin{align*}
		\frac{\tildeEx{\Ex{\li\sim \ri}{\Lambda_{\li,\tau^*}(\zee) \cdot \dd_{S,\ri}(g,\zee)}}}{\tildeEx{\Ex{\li}{\Lambda_{\li,\tau^*}(\zee)}}} ~&\leq~ \tildeEx{\Ex{\ri}{\dd_{S,\ri}(g,\zee)}} + \frac{\lambda + \eta}{\tildeEx{\Ex{\li}{\Lambda_{\li,\tau^*}(\zee)}}} \\
		~&\leq~  \tildeEx{\Ex{\ri}{\dd_{S,\ri}(g,\zee)}} + \frac{\eps}{6}. \qedhere
	\end{align*}
\end{proof}


	
	
%

The second lemma showed that the number of common error locations increases when only considering a subset of indices $K_j \sub [k]$, and this increase can be lower bounded in terms of the distance between $g_{\li}$ (actually, $\gl$) and the common local projections of $K_j$, averaged over $L^*$. The second term on the RHS in the lemma below is the analog of these errors between $\gl$ and common local projections $f_j$, averaged over $L^*$.
	\begin{lemma}[Generalization of  \cref{lemma:inductive}]\label{lemma:more_common_errors}
	For any part $K_j$ in the partition $\tau^*$, we have,
	\begin{align*}
		\tildeEx{\Delta_R(g,\zee_{K_j})}
		~\geq~ \tildeEx{\Delta_R(g,\zee_{[k]})}  + \frac{\tildeEx{\Ex{\li \sim \ri}{\Lambda_{\li,\tau^*}(\zee) \cdot \fd_{K_j,\ri}(g,\zee)}}}{\tildeEx{\Ex{\li}{\Lambda_{\li,\tau^*}(\zee)}}} - \frac{\eps}{6} .
	\end{align*}
	\end{lemma}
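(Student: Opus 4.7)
The plan is to recognize that this lemma is an almost immediate algebraic consequence of the second inequality in \cref{lemma:local_erasures_upper_bound}, so the real work has already been done in the proof of that lemma. Unlike its integral counterpart \cref{lemma:inductive}, which carried out induction on $k$ with erased centers, this SoS version (per the remark following the integral proof) avoids induction and instead only tracks common error locations, which makes the argument essentially a one-step manipulation.

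First, I would unfold definitions to compute $\tildeEx{\Delta_R(g,\zee_{K_j})} - \tildeEx{\Delta_R(g,\zee_{[k]})}$. Since $\tildeEx{\Delta_R(g,\zee_S)} = \Ex{\ri}{\tildeEx{\dd_{S,\ri}(g,\zee)}}$ for any $S \subseteq [k]$, and $\fd_{K_j,\ri} = \dd_{K_j,\ri} - \dd_{[k],\ri}$ by definition, this difference equals $\tildeEx{\Ex{\ri}{\fd_{K_j,\ri}(g,\zee)}}$. After this rewrite, the claimed inequality becomes precisely the second inequality of \cref{lemma:local_erasures_upper_bound} applied with $S = K_j$, and is therefore already established.

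The main obstacle, which is handled inside \cref{lemma:local_erasures_upper_bound} itself, is the spectral/covariance argument: one invokes the pseudocodeword version of the expander mixing lemma to convert the edge average $\tildeEx{\Ex{\li \sim \ri}{\Lambda_{\li,\tau^*}(\zee) \cdot \fd_{K_j,\ri}(g,\zee)}}$ into a product of vertex marginals at a $\lambda$-cost, applies $\eta$-goodness to decouple those marginals at an $\eta$-cost, and finally divides by the lower bound $\tildeEx{\Ex{\li}{\Lambda_{\li,\tau^*}(\zee)}} \geq \beta/k^k$ from \cref{claim:best_partition}. Under the hypotheses $\lambda,\eta \leq \beta \eps/(12 k^k)$, the combined slack is at most $\eps/6$, matching the error term in the statement.
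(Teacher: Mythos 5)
Your proof is correct and follows essentially the same route as the paper: both reduce the claim to the identity $\tildeEx{\Delta_R(g,\zee_{K_j})} - \tildeEx{\Delta_R(g,\zee_{[k]})} = \Ex{\ri}{\tildeEx{\fd_{K_j,\ri}(g,\zee)}}$ (the paper phrases this as an explicit decomposition $\dd_{K_j,\ri} = \dd_{[k],\ri}\cdot\dd_{K_j,\ri} + (1-\dd_{[k],\ri})\cdot\dd_{K_j,\ri}$, you phrase it as a direct subtraction using linearity and $\fd = \dd_{K_j} - \dd_{[k]}$, which is equivalent) and then apply the second inequality of \cref{lemma:local_erasures_upper_bound} with $S = K_j$. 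Your closing paragraph correctly attributes the spectral/covariance heavy lifting to \cref{lemma:local_erasures_upper_bound} rather than to this lemma.
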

	
	\begin{proof} By using the definition of the functions $\dd, \fd$ and some rearragement, we get,
	\begin{align*}
		\tildeEx{\Delta_R(g,\zee_{K_j})} ~&=~ \Ex{\ri}{\tildeEx{\dd_{K_j,\ri}(g,\zee)}} \\
		~&=~ \Ex{\ri}{\tildeEx{\dd_{[k],\ri}(g,\zee) \cdot \dd_{K_j,\ri}(g,\zee) + (1-\dd_{[k],\ri}(g,\zee)) \cdot \dd_{K_j,\ri}(g,\zee)}} \\
		~&=~ \Ex{\ri}{\tildeEx{\dd_{[k],\ri}(g,\zee)}} + \Ex{\ri}{\tildeEx{\fd_{K_j,\ri}(g,\zee)}}.
	\end{align*}
	
The proof finishes by replacing the second term by the bound from~\cref{lemma:local_erasures_upper_bound}.
%
\end{proof}

	Finally, we state the local inequality needed from the inner code. Since this inequality is only valid for vertices in $L^*$, we multiply the required equation by indicator $\Lambda_{\li,\tau^*}(\zee)$ so that it is trivial when the indicator is 0. Subject to this indicator being 1, the inequality says that the generalized Singleton bound with erasures holds for the inner code. Note that this bound holds for each vertex in $L^*$ unlike previous sampling lemmas which only work in an average sense over $L^*$.
	\begin{lemma}[Generalization of \cref{claim:local-bound}]\label{lemma:local_avg_singleton}
		For every $\li \in L$ and for every $j\in [p]$,
		\[
			\Lambda_{\li, \tau^*}(\zee) \parens[\bigg]{ \sum_{j = 1}^p \Ex{e\in N(\li)}{\fd_{K_j, e}(g,\zee)}} ~\geq~ \Lambda_{\li, \tau^*}(\zee) \cdot (p-1) \parens[\Big]{ \delta_0 - \Ex{e\in N(\li)}{\dd_{[k],e}(g,\zee)}  - \frac{\eps}{2}}.
		\]
	\end{lemma}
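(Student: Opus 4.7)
The plan is to establish the claim as a pointwise inequality between two local functions of $\zee_{[k_0]\times N(\li)}$; by \cref{claim:sos_domination}, such a pointwise inequality also yields the corresponding inequality under $\tildeEx{\cdot}$. I will split on the value of the indicator $\Lambda_{\li,\tau^*}(\alpha)$ for each integer assignment $\alpha \in \Sigma_\inn^{[k_0]\times N(\li)}$: when $\Lambda_{\li,\tau^*}(\alpha)=0$ both sides of the desired inequality vanish and the bound is trivial, so only the case $\Lambda_{\li,\tau^*}(\alpha)=1$ requires work.

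In the nontrivial case, the local projections $\alpha_{1,\li},\ldots,\alpha_{k_0,\li}$ induce the partition $\tau^*=(K_1,\ldots,K_p)$ by definition, so there exist $p$ distinct inner codewords $f_{1,\li},\ldots,f_{p,\li}\in \calC_\inn$ with $\alpha_{i,\li}=f_{j,\li}$ for every $i\in K_j$. Writing $g_\li := g|_{N(\li)}$ and $s := \Ex{e\in N(\li)}{\dd_{[k_0],e}(g,\alpha)}$, the quantity $s$ equals exactly the fraction of edges incident to $\li$ that are simultaneous error locations against \emph{every} $f_{j,\li}$. A direct check using $\fd_{K_j,e}=\dd_{K_j,e}-\dd_{[k_0],e}$ then gives
\[
\Ex{e\in N(\li)}{\fd_{K_j,e}(g,\alpha)} ~=~ \Delta(g_\li,f_{j,\li}) - s ~=~ \Delta(\erase{g}_\li,f_{j,\li}),
\]
where $\erase{g}_\li \in (\Sigma_\inn\cup\{\bot\})^d$ is obtained from $g_\li$ by replacing precisely these common-error positions with the erasure symbol $\bot$.

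To finish I will invoke the hypothesis that $\calC_\inn$ is $(\delta_0,k_0,\eps/2)$ average-radius list decodable with erasures, applied to the set $\{f_{1,\li},\ldots,f_{p,\li}\}$ of $p\leq k_0$ distinct codewords and the erased center $\erase{g}_\li$ (which has erasure fraction exactly $s$). This gives
\[
\sum_{j=1}^{p} \Delta(\erase{g}_\li,f_{j,\li}) ~\geq~ (p-1)\parens[\Big]{\delta_0 - s - \tfrac{\eps}{2}},
\]
which, after combining with the identities above, is precisely the inequality in the lemma restricted to the case $\Lambda_{\li,\tau^*}(\alpha)=1$. There is no genuine obstacle here beyond unwinding notation; the only real content is recognizing the set of common-error positions (of measure $s$) as the natural choice of erasures, so that the local-code assumption translates directly into the claimed local combinatorial bound. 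Note that this step is also the one place in the SoS proof where the erasure-version of the inner code property (rather than the plain average-radius list decodability) is used, mirroring its role in \cref{claim:local-bound}.
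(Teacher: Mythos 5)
Your proof is correct and follows essentially the same route as the paper's: reduce to a pointwise inequality over integral assignments $\alpha$, dispose of the case $\Lambda_{\li,\tau^*}(\alpha)=0$ trivially, and otherwise identify the common-error edges as erasures so that $\Ex{e\in N(\li)}{\fd_{K_j,e}(g,\alpha)} = \Delta(\erase{g}_\li, f_{j,\li})$ and the $(\delta_0,k_0,\eps/2)$ erasure list-decodability of $\calC_\inn$ applied to the $p$ distinct local codewords finishes the argument. The only minor imprecision is that the relevant local functions depend on $\zee_{[k]\times M_\li}$ with $M_\li = \bigcup_{\ri\sim\li} N(\ri)$ (since $\dd_{S,e}$ reads the full right neighborhood of the endpoint of $e$), not just on $\zee_{[k]\times N(\li)}$, but this does not affect the argument.
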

	\begin{proof}
	Let $M_{\li} \sub E$ be the union of edge neighborhoods over vertices in $R$ that are adjacent to $\li$. That is, 
	\[
		M_{\li} = \bigcup_{\ri 
		\sim \li}  N(\ri)
	\]
	Let us consider the following two local functions that depend on $[k] \times M_{\li}$.
	\begin{align*}
		\mu_1(\cdot) ~&=~ \Lambda_{\li, \tau^*}(\cdot) \;\parens[\bigg]{\; \sum_{j\in [p]} \Ex{e\in N(\li)}{\fd_{K_j, e}(g,\cdot)}\,} \\
		\mu_2(\cdot) ~&=~ \Lambda_{\li, \tau^*}(\cdot) (p-1) \inparen{ \delta_0 - \Ex{e\in N(\li)}{\dd_{[k],e}(g,\cdot)}  - \frac{\eps}{2}}
	\end{align*}
	We will prove this inequality pointwise by showing that for any $\alpha \in \Sigma_{\inn}^{[k] \times M_{\li}}$, $\mu_1(\alpha) \geq \mu_2(\alpha)$.
	
	If $\alpha$ is such that $\Lambda_{\li,\tau^*}(\alpha) = 0$, then the inequality is trivially true. Henceforth, we assume that $\Lambda_{\li,\tau^*}(\alpha) = 1$. This means that $(\alpha_{1,\li},\alpha_{2,\li},\cdots ,\alpha_{k,\li})$ induces the partition $\tau^*$. 
	By definition of $K_j$, for any $i,i'\in K_j$, we get that $\alpha_{i,\li} = \alpha_{i',\li}$. Let us denote this common codeword in $\calC_{\inn}$ as $\fjl$.
	
	Let $g_{\li} \in \Sigma_{\inn}^{d}$ be the local projection of $g$ to the edge neighborhood of $\li$. For every $e\in N(\li)$ that satisfies $\dd_{[k],e}(g,\alpha) = 1$, we replace the corresponding coordinate in $g_{\li}$ by an erasure to obtain $\gl\in \inparen{ \Sigma_{\inn} \cup \{\bot\} }^{N(\li)}$. 
	The fraction of erasures in $g_{\li}$ is
	\[
		s_{\li} = \Ex{e\in N(\li)}{\dd_{[k],e}(g,\alpha)}.
	\]
	Next, we calculate,
	\begin{align*}
		\Ex{e\in N(\li)}{\fd_{K_j, e}(g,\alpha)} ~&=~ \Ex{e\in N(\li)}{\dd_{K_j, e}(g,\alpha) \cdot \inparen{1-\dd_{[k],e}(g,\alpha)}} \\
		~&=~ \Ex{e\in N(\li)}{\dd_{K_j, e}(g,\alpha)} - \Ex{e\in N(\li)}{\dd_{[k],e}(g,\alpha)} \\
		~&=~ \Ex{e\in N(\li)}{\indi{g_{e} \neq f_{j,e}}} - s_{\li} \\
		~&=~ \Delta(g_{\li},\fjl) - s_{\li} \\
		~&=~ \Delta(\gl, \fjl).
	\end{align*}
	Applying the $(\delta_0,k,\eps/2)$-average radius list decodability with erasures of inner code to $\gl$ and the set of codewords $\{\fjl\}_{j\in [p]}$, we get,
	\[
		\sum_{j=1}^p \Delta(\gl,\fjl) ~\geq~ (p-1) \parens[\Big]{\delta_0 - s_{\li} - \frac{\eps}{2}}.
	\]
	Substituting $\Delta(\gl, \fjl) = \Ex{e\in N(\li)}{\fd_{K_j, e}(g,\alpha)}$ and $s_{\li} = \Ex{e\in N(\li)}{\dd_{[k],e}(g,\alpha)}$ gives
	\begin{align*}
		\sum_{j=1}^p \Ex{e\in N(\li)}{\fd_{K_j, e}(g,\alpha)} ~\geq~ (p-1) \cdot\parens[\bigg]{\delta_0 -  \Ex{e\in N(\li)}{\dd_{[k],e}(g,\alpha)} - \frac{\eps}{2}}.\qedhere
	\end{align*}
\end{proof}

\subsubsection{Proof of Main Theorem}	

We restate the main theorem and finish the proof using the above lemmas.

\begin{theorem}[Restatement of \cref{thm:sos_main}]
	Let $k_0\geq 1$ be an integer and let $\eps > 0$. Let $\AELC$ be a
code obtained using the AEL construction using $(G,\calC_{\out}, \calC_{\inn})$, where $\calC_{\inn}$ is $(\delta_0, k_0, \eps/2)$ average-radius list decodable with erasures, and the graph $G$ is a $(n,d,\lambda)$-expander. 
	Let $\tildeEx{\cdot}$ be an $\eta$-good $k_0$-tuple of pseudocodewords that satisfies the following pairwise distance property on the left:
	\[
		\forall i,i'\in [k_0] \text{ with } i\neq i', \qquad \tildeEx{\Delta_L(\zee_i, \zee_{i'})} \geq \beta\mper
	\]
	Further, assume that $\lambda \leq \frac{\beta}{12k_0^{k_0}}\cdot \eps$ and $\eta \leq \frac{\beta}{12k_0^{k_0}}\cdot \eps$.	Then, for any $g \in (\F_q^d)^R$,
	\begin{align*}
	\sum_{i\in [k_0]} \tildeEx{\Delta_R(g,\zee_i)} ~&\geq~ (k_0-1)(\delta_0-\eps) + \Ex{\ri \in R}{\tildeEx{ \Pi_{i\in[k_0]} \indi{g_{\ri} \neq \zee_{i,\ri}} }} \\
	~&=~ (k_0-1)(\delta_0 -\eps) + \tildeEx{\Delta_R(g,\zee_{[k_0]})}.
	\end{align*}
\end{theorem}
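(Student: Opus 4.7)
The plan is to prove the statement by induction on $k_0$, mirroring the integral argument in \cref{sec:avg-singleton} (specifically the induction used to derive \cref{lemma:common-error-bound} ``without going through erasures'') but now carried out inside the SoS calculus using the three key lemmas already established: the partition lemma \cref{claim:best_partition}, the sampling bound \cref{lemma:local_erasures_upper_bound}, the ``more common errors'' bound \cref{lemma:more_common_errors}, and the local generalized Singleton bound \cref{lemma:local_avg_singleton}. The base case $k_0=1$ is an equality, since the right-hand side reads $0 + \tildeEx{\Delta_R(g,\zee_{\{1\}})} = \tildeEx{\Delta_R(g,\zee_1)}$.

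For the inductive step, I would first invoke \cref{claim:best_partition} to pick a nontrivial partition $\tau^* = (K_1,\ldots,K_p)$ of $[k_0]$ with $p \geq 2$ and mass $\tildeEx{\Ex{\li}{\Lambda_{\li,\tau^*}(\zee)}} \geq \beta/k_0^{k_0}$. For each part $K_j$, I want to apply the inductive hypothesis to the $|K_j|$-tuple of pseudocodewords obtained by restricting $\tildeEx{\cdot}$ to the variables indexed by $K_j \times E$. This requires checking that $\eta$-goodness and the pairwise distance lower bound $\beta$ both survive restriction; the latter is immediate, and the former follows by a triangle-inequality/marginalization argument, since any covariance for the restricted indices is a sum of full covariances and absolute values only shrink on collapsing: $\tildeCov{\zee_{K_j\times N(\li)}}{\zee_{K_j\times N(\ri)}} \leq \tildeCov{\zee_{[k_0]\times N(\li)}}{\zee_{[k_0]\times N(\ri)}}$. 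Since $|K_j|\leq k_0-1$, the inductive hypothesis then yields
\[
\sum_{i\in K_j} \tildeEx{\Delta_R(g,\zee_i)} ~\geq~ (|K_j|-1)(\delta_0-\eps) + \tildeEx{\Delta_R(g,\zee_{K_j})}.
\]

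Summing over $j$ and using $\sum_j |K_j|=k_0$ gives $\sum_i \tildeEx{\Delta_R(g,\zee_i)} \geq (k_0-p)(\delta_0-\eps) + \sum_j \tildeEx{\Delta_R(g,\zee_{K_j})}$. Now I lower bound $\sum_j \tildeEx{\Delta_R(g,\zee_{K_j})}$ as follows. By \cref{lemma:more_common_errors} applied to each $K_j$, each term is at least $\tildeEx{\Delta_R(g,\zee_{[k_0]})} + A_j - \eps/6$, where $A_j$ is the $\Lambda_{\li,\tau^*}$-weighted average of $\fd_{K_j,\ri}(g,\zee)$ normalized by the partition mass. Summing the local SoS inequality \cref{lemma:local_avg_singleton} against $\Ex{\ri\sim\li}{\cdot}$ and normalizing by $\tildeEx{\Ex{\li}{\Lambda_{\li,\tau^*}(\zee)}}$ yields $\sum_j A_j \geq (p-1)(\delta_0 - B - \eps/2)$ where $B$ is the corresponding normalized average of $\dd_{[k_0],\ri}(g,\zee)$; and by \cref{lemma:local_erasures_upper_bound} applied with $S=[k_0]$, $B \leq \tildeEx{\Delta_R(g,\zee_{[k_0]})} + \eps/6$. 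Substituting back, the $\tildeEx{\Delta_R(g,\zee_{[k_0]})}$ terms telescope so that $\sum_j \tildeEx{\Delta_R(g,\zee_{K_j})} \geq \tildeEx{\Delta_R(g,\zee_{[k_0]})} + (p-1)(\delta_0 - 2\eps/3) - p\eps/6$.

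Combining with the opening bound and collecting $\eps$'s yields
\[
\sum_{i\in [k_0]} \tildeEx{\Delta_R(g,\zee_i)} ~\geq~ (k_0-1)(\delta_0-\eps) + \tildeEx{\Delta_R(g,\zee_{[k_0]})} + \tfrac{(p-1)\eps}{3} - \tfrac{p\eps}{6},
\]
and the residual $(p-1)\eps/3 - p\eps/6 = (p-2)\eps/6 \geq 0$ exactly because $p\geq 2$, completing the induction. The main obstacle is not any single step but the bookkeeping: I need to verify that the hypotheses of the theorem (and the slack parameters $\lambda,\eta$) propagate cleanly under the index-restriction used in the induction, and that the $\eps$-losses from the three auxiliary lemmas compose to an overall slack that is absorbed precisely by the nontriviality condition $p\geq 2$ obtained from \cref{claim:best_partition}. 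The latter is the critical use of the left-distance hypothesis $\tildeEx{\Delta_L(\zee_i,\zee_{i'})} \geq \beta$, which is what forces the optimal $\tau^*$ to be nontrivial and hence makes the induction actually descend.
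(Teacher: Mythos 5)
Your proposal is correct and follows essentially the same route as the paper's proof: induction on $k_0$ via the nontrivial partition from \cref{claim:best_partition}, the inductive hypothesis applied to each part, and the chain \cref{lemma:more_common_errors} $\to$ \cref{lemma:local_avg_singleton} $\to$ \cref{lemma:local_erasures_upper_bound}, with the final slack $(p-2)\eps/6 \geq 0$ absorbed by $p \geq 2$ exactly as in the paper. Your explicit check that $\eta$-goodness survives restriction to a sub-tuple (via marginalization of the covariances) is a point the paper leaves implicit, and your verification is valid.
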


\begin{proof}
The proof, as before, is by induction on $k_0$. The base case, $k_0 =1$ is trivial. So we assume the statement holds upto $k-1$, and the goal is to prove it for $k_0=k$. Fix a partition $\tau^* = (K_1,\cdots, K_p)$ as guaranteed by~\cref{claim:best_partition}.  For each part $K_j$, we may apply the induction hypothesis to the sub-tuple defined by it as the pairwise distance property is already assumed. Using this we get, 
	\begin{align*}
\sum_{i=1}^k \tildeEx{\Delta_R(g,\zee_i)} 
		~&=~ \sum_{j=1}^p \sum_{i\in K_j} \tildeEx{\Delta_R(g,\zee_i)} \\
		~&\geq~ \sum_{j=1}^p \inparen{(|K_j|-1)(\delta_0 -\eps) + \tildeEx{\Delta_R(g,\zee_{K_j})}}.
		\end{align*}

The first term is simply,   
$
\sum_{j=1}^p (|K_j|-1)(\delta_0 -\eps) ~=~ (k-p)(\delta_0 -\eps) . 
$ The goal is now to show that, 
\[\sum_{j=1}^p\tildeEx{\Delta_R(g,\zee_{K_j})}	 ~\geq~ (p-1)(\delta_0 -\eps) + \tildeEx{\Delta_R(g,\zee_{[k]})}.\]
 
 For a fixed $j \in [p]$, we apply \cref{lemma:more_common_errors} to obtain, 

		\[
		\tildeEx{\Delta_R(g,\zee_{K_j})}	~\geq~  \tildeEx{\Delta_R(g,\zee_{[k]})}  + \frac{\Ex{\li \sim \ri}{\;\tildeEx{\Lambda_{\li,\tau^*}(\zee) \cdot \fd_{K_j,\ri}(g,\zee)}\,} }{\Ex{\li}{\tildeEx{\Lambda_{\li,\tau^*}(\zee)}}} - \frac{\eps}{6}.
		\]

The term $c := \tildeEx{\Delta_R(g,\zee_{[k]})}  - \frac{\eps}{6}$ is independent of $j$.
Summing the RHS over $j \in [p]$, 

\begin{align}
\sum_{j=1}^p	\tildeEx{\Delta_R(g,\zee_{K_j})} - p\cdot c ~&\geq~ \sum_{j=1}^p \frac{\Ex{\li \sim \ri}{\tildeEx{\Lambda_{\li,\tau^*}(\zee) \cdot \fd_{K_j,\ri}(g,\zee)}} }{\Ex{\li}{\tildeEx{\Lambda_{\li,\tau^*}(\zee)}}}\\
~&=~   { \frac{\Ex{\li}{\tildeEx{\Lambda_{\li,\tau^*}(\zee) \cdot \sum_{j=1}^p \inparen{\Ex{e\in N(\li)}{\fd_{K_j,e}(g,\zee)}} }}}{ \Ex{\li}{\tildeEx{\Lambda_{\li,\tau^*}(\zee)}} } }\label{eq:main}
		\end{align}

We can now apply \cref{lemma:local_avg_singleton} to the RHS
\begin{align}		
	~&\geq~ { \frac{\Ex{\li}{\tildeEx{\Lambda_{\li, \tau^*}(\zee) \cdot (p-1) \inparen{ \delta_0 - \Ex{e\in N(\li)}{\dd_{[k],e}(g,\zee)}  - \frac{\eps}{2}} }}}{\Ex{\li}{\tildeEx{\Lambda_{\li,\tau^*}(\zee)}}}} \\
	~&=~	(p-1)(\delta_0 - \frac{\eps}{2}) -(p-1) \frac{\Ex{\li}{\tildeEx{\Lambda_{\li, \tau^*}(\zee) \cdot \inparen{\Ex{e\in N(\li)}{\dd_{[k],e}(g,\zee)} } }}}{\Ex{\li}{\tildeEx{\Lambda_{\li,\tau^*}(\zee)}}}.\label{eq:second}
		 \end{align}

To bound the term on the right, we use~\cref{lemma:local_erasures_upper_bound},			\begin{align*}
				\frac{\Ex{\li}{\tildeEx{\Lambda_{\li, \tau^*}(\zee) \cdot \inparen{\Ex{e\in N(\li)}{\dd_{[k],e}(g,\zee)} } }}}{\Ex{\li}{\tildeEx{\Lambda_{\li,\tau^*}(\zee)}}} ~&=~ \frac{\Ex{\li\sim \ri}{\tildeEx{\Lambda_{\li, \tau^*}(\zee) \cdot \dd_{[k],\ri}(g,\zee)  }}}{\Ex{\li}{\tildeEx{\Lambda_{\li,\tau^*}(\zee)}}}\\[1em]
\text{(\cref{lemma:local_erasures_upper_bound})}\;\;				~&\leq~  \tildeEx{\Ex{\ri}{\dd_{S,\ri}(g,\zee)}} + \frac{\eps}{6}\\
				~&=~	\tildeEx{\Delta_R(g,\zee_{[k]})} + \frac{\eps}{6}.  
				\end{align*}

Plugging this bound in~\cref{eq:second} and then back in~\cref{eq:main},  we get,		
\begin{align*}	
		\tildeEx{\Delta_R(g,\zee_{K_j})}  ~&\geq~ p\cdot c + (p-1)\parens[\Big]{\delta_0 - \frac{\eps}{2}} -  (p-1) \inparen{\tildeEx{\Delta_R(g,\zee_{[k]})} + \frac{\eps}{6}}\\
		~&=~ \tildeEx{\Delta_R(g,\zee_{[k]})}  +(p-1)\parens[\Big]{\delta_0 - \frac{\eps}{2}} - (2p-1)\cdot \frac{\eps}{6} \\
		~&\geq~ \tildeEx{\Delta_R(g,\zee_{[k]})}  + (p-1)(\delta_0 - \eps) \;. \qedhere
		\end{align*}
\end{proof}


\subsection{The final algorithm}\label{sec:sos_algo}

In this subsection, we describe the final decoding algorithm. We will mainly rely on the main result of the previous subsection: that for appropriately instantiated AEL codes and for $\eta$ small enough, $\eta$-good $k$-tuple of pseudocodewords satisfy the generalized Singleton bound.

\subsubsection{Decoding from Distributions}
Before describing our main algorithm, we argue that a simple idea based on randomized rounding can be used to extend the unique decoder of $\calC_{\out}$ to decode not only from integral strings close to a codeword, but an ensemble of distributions - one for each coordinate - that is close to a codeword in average sense. It is also standard to derandomize this process via threshold rounding, which is what we describe next. We will be needing this strengthening of the unique decoder of $\calC_{\out}$ as a subroutine in the main algorithm.

It will be helpful to index the codewords in $\calC_{\inn}$ by integers. Let $\abs*{\calC_{\inn}} = M$, and let its codewords be $\alpha_1,\alpha_2,\cdots ,\alpha_M$. Recall that for any $k$-tuple of pseudocodewords $\tildeEx{\cdot}$, for any $j\in [k]$ and $\li \in L$, the set of values
\[
	\inbraces{ \tildeEx{\indi{\zee_{j,\li} = \alpha_i}} }_{i\in [M]}
\]
correspond to a probability distribution over $\calC_{\inn}$. Therefore, the set of intervals
\[
	\inbraces{ \left[\sum_{i=1}^{m-1} \tildeEx{\indi{\zee_{j,\li} = \alpha_i}}, \sum_{i=1}^{m} \tildeEx{\indi{\zee_{j,\li} = \alpha_i}} \right) }_{m\in [M]}
\]
partitions the interval $[0,1)$ into at most $M$ parts.

\begin{lemma}\label{lem:decode_from_distrib}
	Suppose the code $\calC_{\out}$ is unique decodable up to radius $\delta_{\out}^{\dec}$, and let $h$ be a codeword in $\AELC$. Given a collection of distributions $\{\calD_{\li}\}_{\li \in L}$, with each distribution over $\calC_{\inn}$ such that
	\[
		\Ex{\li}{\Ex{f\sim \calD_{\li}}{{\indi{f\neq h_{\li}}}}} \leq \delta_{\out}^{\dec},
	\]
	the \cref{algo:unique-decoding} finds $h$.
\end{lemma}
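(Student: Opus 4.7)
The plan is to show that a random-threshold rounding of the distributions $\{\calD_{\li}\}$ produces, in expectation, a string in $\calC_{\inn}^L$ that is within unique-decoding radius of $h$ viewed as an element of $\calC_{\inn}^L$, and then derandomize by enumerating the polynomially many ``meaningful'' thresholds. The presumed algorithm \cref{algo:unique-decoding} will consist of: (i) for each candidate threshold $\tau \in [0,1)$, produce the string $f^{\tau} \in \calC_{\inn}^L$ whose $\li$-th coordinate is the unique $\alpha_m$ whose interval in the partition of $[0,1)$ induced by $\calD_{\li}$ contains $\tau$; (ii) feed $f^{\tau}$ into the unique decoder of $\calC_{\out}$; (iii) check whether the recovered codeword, re-encoded through AEL, matches a valid $h \in \AELC$ and return it if so.

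First, I would observe that by construction of the intervals, if $\tau$ is sampled uniformly from $[0,1)$, then for each $\li \in L$, $\Pr{}{f^{\tau}_{\li} = \alpha_m} = \tildeEx{\indi{\zee_{j,\li} = \alpha_m}}$, \ie the marginal at $\li$ is exactly $\calD_{\li}$. Hence by linearity of expectation,
\[
\Ex{\tau}{\Delta_L(f^{\tau}, h)} ~=~ \Ex{\li}{\Ex{f \sim \calD_{\li}}{\indi{f \neq h_{\li}}}} ~\leq~ \delta_{\out}^{\dec} \mper
\]
Therefore there exists at least one $\tau^* \in [0,1)$ for which $\Delta_L(f^{\tau^*}, h) \leq \delta_{\out}^{\dec}$. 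For such $\tau^*$, $f^{\tau^*}$ agrees with $h$ (viewed as a codeword of $\calC_{\out}$ after applying $\varphi^{-1}$ coordinatewise on $L$) on a $(1 - \delta_{\out}^{\dec})$-fraction of coordinates, so the unique decoder of $\calC_{\out}$ recovers $\varphi^{-1}(h|_L)$, from which $h$ is reconstructed.

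To derandomize, note that the string $f^{\tau}$ only changes when $\tau$ crosses one of the endpoints $\sum_{i=1}^{m} \tildeEx{\indi{\zee_{j,\li} = \alpha_i}}$ for some $\li \in L$ and $m \in [M]$. There are at most $n \cdot M = n \cdot \abs{\calC_{\inn}}$ such breakpoints, and since $\abs{\calC_{\inn}}$ is a constant (it depends only on $d$ and $\rho_{\inn}$), this yields $O(n)$ candidate values of $\tau$ to try. Enumerating one representative from each resulting interval and running the unique decoder on each is therefore polynomial time, and by the averaging argument above, at least one trial succeeds in recovering $h$.

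I do not expect any step to be a serious obstacle: the argument is a standard threshold-rounding/derandomization combined with correctness of the unique decoder. The only small subtlety is being careful that the breakpoints of the partition for different $\li$'s combine to at most $n \cdot M$ distinct thresholds, so that the enumeration in step (i) is efficient; this is a straightforward counting observation.
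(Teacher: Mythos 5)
Your proposal is correct and follows essentially the same route as the paper: a uniform random threshold reproduces each marginal $\calD_{\li}$ exactly, so the expected left distance to $h$ is at most $\delta_{\out}^{\dec}$, some threshold achieves this, and the unique decoder of $\calC_{\out}$ then recovers $h$; derandomization via the at most $M\cdot n$ interval endpoints matches the paper's threshold-rounding footnote. No gaps.
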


\begin{figure}[!ht]
\begin{algorithm}{\DECODE}{$\{\calD_{\li}\}_{\li\in L}$}{Codeword $h\in \AELC$ such that $\Ex{\li}{\Ex{f\sim \calD_{\li}}{\indi{f\neq h_{\li}}}} \leq \delta_{\out}^{\dec}$}
\label{algo:unique-decoding}
\begin{itemize}
%
	\item Let $w_{\li,j}$ be the weight on codeword $\alpha_i$ according to distribution $\calD_{\li}$.
	\item For every threshold $\theta \in [0,1]$:\footnote{As written, this involves trying uncountably many thresholds. However, we only need to try at most $M \cdot |L|$ thresholds since the algorithm only depends on which intervals $\theta$ belongs to. Since the number of endpoints of these intervals is at most $M\cdot |L|$ in total, it suffices to try only $M\cdot |L| = M\cdot n$ many distinct thresholds. This is a standard method called threshold rounding.}
			\begin{enumerate}[(i)]
				\item Construct an $f_{\theta} \in \calC_{\inn}^L$ by assigning 
				\[(f_{\theta})_{\li} = \alpha_m  \iff \theta \in \left[ \sum_{i=1}^{m-1} w_{\li, i}, \sum_{i=1}^{m} w_{\li,i} \right)\]
				for every $\li \in L$.
				\item Let $f_\theta^* \in \Sigma_{\out}^L$ defined as $(f_{\theta}^*)_{\li} = \phi^{-1}((f_{\theta})_{\li})$. 
				\item Use the unique decoder of $\calC_{\out}$ to find an $h^* \in \calC_{\out}$ whose distance from $f_{\theta}^*$ is at most $\delta_{\out}^{\dec}$, if such an $h^*$ exists. That is, $h^* \leftarrow \mathrm{Dec}_{\calC_{\out}}(f_{\theta}^*,\delta_{\out}^{\dec})$.
				\item Let $h\in \AELC$ be the codeword corresponding to $h^* \in \calC_{\out}$. Return $h$.
				\end{enumerate}
		\end{itemize}
%
\vspace{5pt}
\end{algorithm}
\end{figure}

\begin{proof}
	It suffices to show that there exists a threshold $\theta \in [0,1)$ for which the distance between $f_{\theta}^*$ constructed by \cref{algo:unique-decoding} and $h^*$ is at most $\delta_{\out}^{\dec}$. In fact, we will show that this is true for an average $\theta$.
	\begin{align*}
		\Ex{\theta \in [0,1)}{ \Delta(f_{\theta}^*,h^*)} &= \Ex{\theta \in [0,1)}{ \Delta_L(f_{\theta},h)} = \Ex{\theta \in [0,1)}{ \Ex{\li}{\indi{(f_{\theta})_{\li} \neq h_{\li}}}}  = \Ex{\theta \in [0,1)}{ \Ex{\li}{\sum_{i\in [M] : \alpha_i \neq h_{\li}}\indi{(f_{\theta})_{\li} = \alpha_i}}} 
	\end{align*}
	We can move the expectation over $\theta$ inside and use the fact that $(f_{\theta})_{\li}$ is $\alpha_i$ with probability exactly $w_{\li,i}$ to get
	\begin{align*}
		\Ex{\li}{\sum_{i\in [M] : \alpha_i \neq h_{\li}} \Ex{\theta \in [0,1)}{\indi{(f_{\theta})_{\li} = \alpha_i}}} =  \Ex{\li}{\sum_{i\in [M] : \alpha_i \neq h_{\li}} w_{\li,i}} = \Ex{\li}{\Ex{f\sim \calD_{\li}}{{\indi{f\neq h_{\li}}}}} \leq \delta_{\out}^{\dec}. \qedhere
	\end{align*}
\end{proof}

\subsubsection{Decoding Algorithm}
\begin{table}[h]
\hrule
\vline
\begin{minipage}[t]{0.99\linewidth}
\vspace{-5 pt}
{\small
\begin{align*}
    &\mbox{find}\quad ~~ \tildeEx{\cdot} ~\text{on}~ \zee_{[p],E} \text{ and alphabet }\Sigma_{\inn}
    \\
&\mbox{subject to}\quad \quad ~\\
	&\qquad \text{(i)}~~ \tildeEx{\cdot} \text{is a }p\text{-tuple of pseudocodewords of SoS-degree } t \\
    &\qquad \text{(ii)}~~ \forall i \in [p],~ \text{the constraint }\Delta_R(g,\zee_i) < \frac{k-1}{k}(1-\rho-\eps) \text{ is respected by }\tildeEx{\cdot}\label{cons:agreement-ld}    \\
&\qquad \text{(iii)}~ \forall F \sub E \text{ such that } |F|\leq \frac{t-2pd}{2}, \sigma \in \Sigma_{\inn}^{[p]\times F} \text{ and }\forall~i,i' \in [p] \text{ with } i\neq i', \\
&\qquad \qquad \qquad \qquad \tildeEx{\inparen{\Delta_L(\zee_i,\zee_{i'}) - \delta_{\out}^{\dec}}\cdot  \indi{\zee_{[p],F} = \sigma}^2 } \geq 0
\end{align*}}
\vspace{-10 pt}
\end{minipage}
\hfill\vline
\hrule
\caption{$\mathrm{SDP}(p, t)$}
\label{tab:SDP_for_feasibility}
\end{table}

\begin{theorem}\label{thm:sos_technical}
Let $k\geq 1$ be an integer and let $\eps > 0$. Let $\AELC$ be a
code obtained using the AEL construction using  $(G, \calC_{\out}, \calC_{\inn})$, where $\calC_{\inn}$ is $(\delta_0, k, \eps/2)$ average-radius list decodable with erasures, and $G$ is a $(n,d,\lambda)$-expander. 
Suppose that $\calC_{\out}$ is unique decodable from radius $\delta_{\out}^{\dec}$ in time $\calT(n)$.

If $\lambda \leq \frac{\delta_{\out}^{\dec}}{12{k}^{k}} \cdot \eps$, then there is a deterministic algorithm that takes as input $g\in (\Sigma_{\inn}^d)^R$, runs in time $\calT(n) + n^{O\inparen{ \frac{d \cdot k^{3k}|\Sigma_{\inn}|^{3kd}}{(\delta_{\out}^{\dec})^2\cdot \eps^2}}}$, and outputs the list $\calL(g,\frac{k-1}{k}(\delta_0-\eps))$.
\end{theorem}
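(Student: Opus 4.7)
The plan is to follow the proofs-to-algorithms paradigm: set up a Sum-of-Squares relaxation that searches for a convex proxy of the entire list near $g$, exploit \cref{thm:sos_main} (the SoS analog of the generalized Singleton bound) to bound the size of this proxy, and finally round to recover the true codewords. Concretely, the algorithm iterates $p = 1, 2, \ldots, k$ and attempts to solve $\mathrm{SDP}(p, t)$ from Table~\ref{tab:SDP_for_feasibility} for each $p$, with $t$ a sufficiently large constant to be chosen below. Let $p^*$ denote the largest $p$ for which the SDP is feasible, and let $\tildeEx{\cdot}$ be a corresponding feasible solution. Then, for every triple $(i, F, \sigma)$ with $i \in [p^*]$, $F \subseteq E$ of size at most $(t - 2p^*d)/2$, and $\sigma \in \Sigma_\inn^{[p^*] \times F}$, the algorithm conditions $\tildeEx{\cdot}$ on $\zee_{[p^*], F} = \sigma$, reads off the induced local distributions over $\calC_\inn$ at each $\li \in L$ from the $\zee_{i,\li}$ marginals, and feeds them to the \DECODE\ subroutine (\cref{algo:unique-decoding}), which invokes the unique decoder of $\calC_\out$. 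The output is the union of all codewords produced across these calls.

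Correctness consists of two pieces. First, I would show $p^* \leq k-1$ by contradiction: if $\mathrm{SDP}(k, t)$ were feasible, a bounded random conditioning on variables of $\zee_{[k], E}$ (following the framework of \cite{JST23}; see \cref{sec:appendix}) produces an $\eta$-good $k$-tuple of pseudocodewords with $\eta \leq (\delta_\out^\dec/(12 k^k)) \cdot \eps$. Such conditioning preserves the radius constraint (ii) in expectation, so a Markov-plus-union-bound argument yields a specific conditioning where all $k$ values $\tildeEx{\Delta_R(g, \zee_i)}$ stay below $\frac{k-1}{k}(\delta_0 - \eps)$ up to arbitrarily small slack; the sum-of-squares form of constraint (iii) directly implies that all pairwise left-distances remain at least $\delta_\out^\dec$ after any such conditioning on a small set. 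Applying \cref{thm:sos_main} with $\beta = \delta_\out^\dec$ then gives $\sum_{i \in [k]} \tildeEx{\Delta_R(g, \zee_i)} \geq (k-1)(\delta_0 - \eps)$, contradicting the bound obtained by summing the conditioned version of constraint (ii) over $i \in [k]$.

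Second, I would establish the covering property: every $h \in \AELC$ with $\Delta_R(g, h) \leq \frac{k-1}{k}(\delta_0 - \eps)$ is returned by some call to \DECODE. Consider augmenting $\tildeEx{\cdot}$ by appending $\zee_{p^*+1} := h$ as a constant; constraint (i) holds because $h \in \AELC$, and constraint (ii) holds by hypothesis on $h$. If constraint (iii) also held for every pair involving $\zee_{p^*+1}$ and every admissible $(F, \sigma)$, the augmented tuple would be a valid solution to $\mathrm{SDP}(p^*+1, t)$, contradicting the maximality of $p^*$. Hence there exist $i \in [p^*]$, $F$, and $\sigma$ such that, after factoring through the constant $h$-indicator, $\tildeEx{\inparen{\Delta_L(\zee_i, h) - \delta_\out^\dec} \cdot \indi{\zee_{[p^*], F} = \sigma}^2} < 0$; conditioning on this event gives $\condPE{\Delta_L(\zee_i, h)}{\zee_{[p^*], F} = \sigma} < \delta_\out^\dec$, so the local distributions extracted by the algorithm satisfy the hypothesis of \cref{lem:decode_from_distrib} for the codeword $h$, which is consequently recovered by the unique decoder of $\calC_\out$.

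Finally, I would choose $t$ polynomial in $k^k$, $d$, $|\Sigma_\inn|^{kd}$, $1/\delta_\out^\dec$, and $1/\eps$, large enough to support both the $\eta$-goodness conditioning and the subsequent enumeration over $(F, \sigma)$. The SDP is solvable in time $n^{O(t)}$, there are at most $n^{O(t)} \cdot |\Sigma_\inn|^{O(k t)}$ triples $(i, F, \sigma)$, and each \DECODE\ call invokes the unique decoder of $\calC_\out$; the overall runtime absorbs into $\calT(n) + n^{O(t)}$ once the parameters are substituted. The main obstacle I anticipate is this parameter balancing: the $\eta$-goodness conditioning consumes SoS degree proportional to $\log(|\Sigma_\inn|^{kd})/\eta^2$ and must leave enough degree budget for the covering-argument conditioning, while ensuring the slack introduced in constraint (ii) during the $\eta$-conditioning stays strictly smaller than the gap required to derive a contradiction from \cref{thm:sos_main}.
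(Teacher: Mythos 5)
Your proposal is correct and follows essentially the same route as the paper: the same $\mathrm{SDP}(p,t)$ relaxation with maximal feasible $p^*$, the same infeasibility argument for $p=k$ via random conditioning to an $\eta$-good tuple followed by \cref{thm:sos_main} with $\beta=\delta_{\out}^{\dec}$, the same covering argument by appending $h$ as $\zee_{p^*+1}$ and extracting a violated type-(iii) constraint, and the same rounding via \cref{lem:decode_from_distrib}. The one place your mechanism differs is where you invoke ``Markov-plus-union-bound'' to argue constraint (ii) survives the $\eta$-goodness conditioning (Markov would cost a factor-$k$ slack and break the contradiction); the correct and simpler observation, which the paper uses and which you already apply to constraint (iii), is that a constraint \emph{respected} in the SoS sense is preserved exactly under conditioning on any local event of nonzero pseudoprobability.
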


\begin{figure}[!ht]
\begin{algorithm}{List Decoding algorithm up to $\frac{k-1}{k}(\delta_0-\eps)$}{$k$, $g \in (\Sigma_{\inn}^d)^R$}{List of codewords $\calL\inparen{g, \frac{k-1}{k}\inparen{\delta_0-\eps}}$}\label{algo:sos-decoding}
\begin{enumerate}
	\item Initialize $\calL = \{ \}$, $t = 2kd\cdot \inparen{1+\frac{144k^{2k}|\Sigma_{\inn}|^{3kd}}{(\delta_{\out}^{\dec})^2\cdot \eps^2}}$.
	\item Let $p^*$ be the largest $p\in [k]$ such that $\mathrm{SDP}(p,t)$ is feasible.
	\item For every $F \sub E$ with $|F|\leq \frac{t-2kd}{2}$ and $\sigma\in \Sigma_{\inn}^F$ with $\pcod{p^*}{\indi{\zee_{[p^*],F} = \sigma}} > 0$:\label{step3}
		\begin{itemize}
		\item For $j=1$ to $p^*$:
		\begin{itemize}
			\item Construct an ensemble of distributions $\calD = \{\calD_{\li}\}_{\li\in L}$, with each $\calD_{\li}$ over $\calC_{\inn}$ by using the local distribution induced by the conditioned $\pcod{p^*}{ ~\cdot \given \zee_{[p^*],F} = \sigma}$ over the set $\{j\} \times N(\li)$, or equivalently, over variables $\zee_{j,\li}$.
			\item $h\leftarrow \text{\DECODE} \inparen{\calD}$.
			\item If $\Delta_R(g,h) < \frac{k-1}{k} (\delta_0-\eps)$, add $h$ to $\calL$.
		\end{itemize}
		\end{itemize}
	\item Return $\calL$.
\end{enumerate}
\vspace{5pt}
\end{algorithm}
\end{figure}

\begin{proof}
	\cref{algo:sos-decoding} describes this algorithm. In the rest of the proof, we argue the correctness of this algorithm.
	
	We first start with a lemma that readily follows from \cref{thm:sos_main}.
	
	\begin{lemma}\label{lem:pseudocodeword_list_size}
	If $\lambda \leq  \frac{\delta_{\out}^{\dec}}{12k^k} \cdot \eps$, and $t \geq 2kd\cdot \inparen{1+\frac{144k^{2k}|\Sigma_{\inn}|^{3kd}}{(\delta_{\out}^{\dec})^2\cdot \eps^2}}$, then the $\mathrm{SDP}(k,t)$ is infeasible.
	\end{lemma}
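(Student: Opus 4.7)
The plan is to assume $\mathrm{SDP}(k,t)$ is feasible with a solution $\tildeEx{\cdot}$ and derive a contradiction from \cref{thm:sos_main}. Constraint~(ii) of the SDP forces $\tildeEx{\Delta_R(g,\zee_i)} < \tfrac{k-1}{k}(\delta_0-\eps)$ for each $i\in[k]$ (using $\delta_0 = 1-\rho$), so $\sum_{i\in[k]} \tildeEx{\Delta_R(g,\zee_i)} < (k-1)(\delta_0-\eps)$. If I can cast $\tildeEx{\cdot}$, possibly after a suitable conditioning, into the hypotheses of \cref{thm:sos_main} with $k_0=k$, that theorem yields the reverse inequality $\sum_i \tildeEx{\Delta_R(g,\zee_i)} \geq (k-1)(\delta_0-\eps)$, which is the desired contradiction.

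To invoke \cref{thm:sos_main} I would take the pairwise-distance parameter $\beta = \delta_\out^\dec$, at which point the hypothesis $\lambda \leq \tfrac{\delta_\out^\dec}{12k^k}\eps$ of this lemma matches the spectral condition $\lambda \leq \tfrac{\beta}{12k^k}\eps$ of \cref{thm:sos_main} exactly. The left-distance hypothesis is supplied by constraint~(iii) of the SDP, which is written precisely so that $\tildeEx{(\Delta_L(\zee_i,\zee_{i'})-\delta_\out^\dec)\cdot \indi{\zee_{[k],F}=\sigma}^2} \geq 0$ for every admissible small edge set $F$ and assignment $\sigma$; this guarantees that the pairwise-distance hypothesis is preserved under any low-degree conditioning I subsequently perform.

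For $\eta$-goodness I would invoke the SoS conditioning framework (adapted to codes in \cite{JST23} and recalled in \cref{sec:appendix}): iteratively condition on variables $\zee_{[k],F_j}$ for carefully chosen small edge subsets $F_j$ that drive down a pseudovariance-based potential on sets of the form $\zee_{[k]\times N(\ell)}$. A standard potential-decrease analysis shows that with a total edge budget of order $d\cdot k^{2k}|\Sigma_\inn|^{3kd}/((\delta_\out^\dec)^2\eps^2)$ the conditioned pseudoexpectation achieves the target $\Ex{\ell\sim r}{\tildecov(\zee_{[k]\times N(\ell)},\zee_{[k]\times N(r)})} \leq \eta$. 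The stated choice $t = 2kd\cdot\inparen{1+\tfrac{144\, k^{2k}|\Sigma_\inn|^{3kd}}{(\delta_\out^\dec)^2\eps^2}}$ is calibrated so that after paying $2k$ units of SoS degree per conditioned edge, the residual degree is still at least $2kd$, keeping the conditioned object a $k$-tuple of pseudocodewords in the sense of \cref{def:k_tuple}. Applying \cref{thm:sos_main} to the $\eta$-good conditioned solution and then averaging the resulting inequality back through the identity $\sum_\sigma \tildeEx{\indi{\zee_{[k],F}=\sigma}}\cdot\tildeEx{\cdot\mid F,\sigma} = \tildeEx{\cdot}$ yields the desired lower bound on $\sum_i \tildeEx{\Delta_R(g,\zee_i)}$ for the original $\tildeEx{\cdot}$.

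The principal obstacle I expect is that the contradiction carries no quantitative slack: the upper bound from (ii) and the lower bound from \cref{thm:sos_main} meet at exactly $(k-1)(\delta_0-\eps)$, so the application of \cref{thm:sos_main} must survive averaging across conditioning branches without losing any constant factors. The cleanest way to handle this is not to demand branch-wise $\eta$-goodness for every $\sigma$, but instead to prove an average-$\eta$-good variant of \cref{thm:sos_main}: the only step in its proof that touches $\eta$-goodness is the single appeal to \cref{lem:avg_corr} inside \cref{lemma:local_erasures_upper_bound}, so replacing that bound by its expectation over the conditioning distribution (which is exactly the quantity controlled by the conditioning framework) gives the averaged inequality needed here, following the same template used in \cite{JST23}.
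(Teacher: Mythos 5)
Your proposal follows the same strategy the paper uses: assume $\mathrm{SDP}(k,t)$ is feasible, set $\eta=\tfrac{\delta_\out^\dec}{12k^k}\eps$, invoke \cref{lem:condition_for_eta_good} to get an $\eta$-good conditioned operator $\pcod{S}{\cdot}$ (and your degree bookkeeping $t-2k|S|\ge 2kd$ matches), observe that constraint~(iii) is built precisely so that the pairwise $\Delta_L\ge\delta_\out^\dec$ hypothesis survives conditioning, and then apply \cref{thm:sos_main} with $\beta=\delta_\out^\dec$ to contradict the summed constraint~(ii). This is exactly the paper's argument.

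Where you diverge is in your worry about ``averaging across conditioning branches'' and your proposed remedy of an average-$\eta$-good variant of \cref{thm:sos_main}. The paper simply treats $\pcod{S}{\cdot}=\tildeEx{\cdot\given\zee_{[k],S}}$ as a single $\eta$-good object and applies the theorem to it directly, glossing over the fact that \cref{lem:condition_for_eta_good} only controls the pseudocovariance averaged over the conditioning assignment $\sigma$, whereas \cref{thm:sos_main} as stated is for a single pseudoexpectation. So your caution is warranted. However, your framing (``the contradiction carries no quantitative slack, so the application must survive averaging without losing constants'') slightly misidentifies the difficulty: since constraint~(ii) and constraint~(iii) hold \emph{branchwise} for every fixed $\sigma$ (conditioning on $\zee_{[k],S}=\sigma$ preserves ``respected'' constraints, and (iii) was written to be conditioning-robust), it suffices to exhibit a \emph{single} $\sigma$ with $\tildeEx{\indi{\zee_{[k],S}=\sigma}}>0$ for which $\tildeEx{\cdot\given\zee_{[k],S}=\sigma}$ is $O(\eta)$-good; Markov on the averaged bound from \cref{lem:condition_for_eta_good} supplies one (at the cost of a constant-factor adjustment to $\eta$ and hence to $t$). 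That route avoids having to re-prove an averaged \cref{thm:sos_main}, which is less routine than you suggest: the appeal to $\eta$-goodness in \cref{lemma:local_erasures_upper_bound} is followed by a division by $\tildeEx{\Ex{\li}{\Lambda_{\li,\tau^*}(\zee)}}$ and a choice of $\tau^*$ in \cref{claim:best_partition}, both of which would vary with $\sigma$ and do not commute cleanly with an outer expectation over $\sigma$. In short, the proposal is correct in approach and correctly spots an imprecision in the paper, but the ``find one good branch'' fix is cleaner than the ``average the theorem'' fix.
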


	In other words, the lemma says that $p^* < k$, and that $\mathrm{SDP}(p^*+1,t)$ is infeasible. Let $\pcod{p^*}{\cdot}$ be the $p^*$-tuple of pseudocodewords found by $\mathrm{SDP}(p^*,t)$.
	\begin{lemma}\label{lem:covering}
	For any $h \in \calL\inparen{g,\frac{k-1}{k}(1-\rho-\eps)}$,
there exists an $i\in [p^*]$, a set $F\sub E$ with $|F|\leq \frac{t-2kd}{2}$, and a $\sigma \in \Sigma_{\inn}^{[p^*]\times F}$ with $\pcod{p^*}{\indi{\zee_{[p^*],F} = \sigma}} > 0$, such that 
	\[
	\pcod{p^*}{\Delta_L(h, \zee_i) \given \zee_{[p^*],F} = \sigma} \leq \delta_{\out}^{\dec}.
	\] 
	\end{lemma}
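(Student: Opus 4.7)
The plan is to argue by contradiction: if the covering conclusion fails for some $h \in \calL(g, \frac{k-1}{k}(1-\rho-\eps))$, I will produce a feasible solution to $\mathrm{SDP}(p^*+1, t)$, contradicting the maximality of $p^*$ given by \cref{lem:pseudocodeword_list_size}. The core observation is that a true codeword is automatically an integral pseudocodeword, so the only possible obstruction to extending $\pcod{p^*}{\cdot}$ by $h$ lies in the pairwise left-distance constraint (iii) of $\mathrm{SDP}(p^*+1, t)$, and those constraints encode exactly the conditional bound on $\Delta_L(h, \zee_i)$ that the lemma asserts.

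Concretely, assuming toward a contradiction that for every $i \in [p^*]$, every admissible $F$, and every $\sigma$ with positive normalizer we have the strict inequality $\pcod{p^*}{\Delta_L(h, \zee_i) \given \zee_{[p^*],F}=\sigma} > \delta_{\out}^{\dec}$, I would define $\pcod{p^*+1}{\cdot}$ on variables $\{Z_{i,e}\}_{i\in[p^*+1], e\in E}$ by deterministically substituting $h$ for the new block: for every polynomial $P$ of degree at most $t$, set $\pcod{p^*+1}{P(\zee_{[p^*+1]})} \defeq \pcod{p^*}{P(\zee_{[p^*]}, h)}$. Since $h$ is integral, this substitution is a linear map that preserves SoS-degree and positivity, so $\pcod{p^*+1}{\cdot}$ is a valid pseudoexpectation operator of SoS-degree $t$. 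Constraint (i) is immediate because $h \in \AELC$ forces $h_\ell \in \calC_\inn$ for every $\ell \in L$; constraint (ii) for the new coordinate is precisely the hypothesis $\Delta_R(g, h) < \frac{k-1}{k}(1-\rho-\eps)$; and the first $p^*$ coordinates inherit both constraints verbatim from $\pcod{p^*}{\cdot}$.

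The substantive check is constraint (iii) for pairs $(i, p^*+1)$ with $i \in [p^*]$. Writing $\sigma = (\sigma_{[p^*]}, \sigma_{p^*+1})$, the indicator factors as $\indi{\zee_{[p^*+1],F}=\sigma}^2 = \indi{\zee_{[p^*],F}=\sigma_{[p^*]}}^2 \cdot \indi{h_F=\sigma_{p^*+1}}^2$, which is identically zero unless $\sigma_{p^*+1} = h_F$; in the surviving case, the required inequality reduces to $\pcod{p^*}{(\Delta_L(\zee_i, h) - \delta_{\out}^{\dec}) \cdot \indi{\zee_{[p^*], F} = \sigma_{[p^*]}}^2} \geq 0$, which after dividing by the nonnegative normalizer (the trivial case being when it vanishes) is exactly the contrapositive hypothesis. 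Therefore $\pcod{p^*+1}{\cdot}$ is feasible for $\mathrm{SDP}(p^*+1, t)$, contradicting the definition of $p^*$ and proving the lemma. The main piece of bookkeeping to verify along the way is that the range of $F$ required by constraint (iii) of $\mathrm{SDP}(p^*+1, t)$ is compatible with the range $|F|\leq (t-2kd)/2$ used in the lemma hypothesis; this follows from $p^*+1 \leq k$ together with the generous slack in $t$ from \cref{lem:pseudocodeword_list_size}, and is the only minor technical point. The real content of the argument is simply that integral codewords can always be spliced in as additional pseudocodewords.
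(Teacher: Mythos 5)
Your proposal is correct and is essentially the paper's own argument run in the contrapositive: both splice the integral codeword $h$ into the $(p^*+1)$-th block of $\pcod{p^*}{\cdot}$, check constraints (i) and (ii), and factor the indicator $\indi{\zee_{[p^*+1],F}=\sigma}$ into $\indi{\zee_{[p^*],F}=\sigma_{[p^*]}}\cdot\indi{h_F=\sigma_{p^*+1}}$ to reduce constraint (iii) to the conditional distance bound. The only points you compress are that pairs $i,i'\in[p^*]$ in constraint (iii) also need this factorization (they are not literally inherited verbatim, though they do reduce to feasibility of $\pcod{p^*}{\cdot}$), and the $|F|\le(t-2kd)/2$ versus $|F|\le(t-2(p^*+1)d)/2$ bookkeeping you flag is glossed over in the paper as well.
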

	This lemma proves the correctness of the algorithm, since then $h$ will be added to $\calL$ in \hyperref[step3]{Step~\ref*{step3}}, by \cref{lem:decode_from_distrib}.
\end{proof}

\begin{proof}[Proof of \cref{lem:covering}]
	To show this, we will construct a $(p^*+1)$-tuple of pseudocodewords, say $\pcod{p^*+1}{\cdot}$, and use its infeasibility for $\mathrm{SDP}(p^*+1,t)$.
	
	Recall that $\pcod{p^*}{\cdot}$ is an SoS relaxation over the variables $\zee_{[p^*] \times E }$,
while $\pcod{p^*+1}{\cdot}$ will be an SoS relaxation over the variables $\zee_{[p^*+1]\times E }$.
	
	To describe the new $(p^*+1)$-tuple of pseudocodewords, we explicitly specify the corresponding pseudoexpectation operator $\pcod{p^*+1}{\cdot}$.
	Let $P$ be an arbitrary polynomial of degree $t$ over $\zee_{[p^*+1], E}$. We define a new polynomial $P_h$ over $\zee_{[p^*], E}$ by assigning the variables $\zee_{p^*+1,e}$ to be $h_e$ for every $e\in E$.
	Then, we define $\pcod{p^*+1}{\cdot}$ using the following
	\[
		\pcod{p^*+1}{P(\zee_{[p^*+1],E})} = \pcod{p^*}{P_h(\zee_{[p^*],E})}
	\]
	This is well-defined since the degree of $P_h$ cannot be more than $t$.
	
	By optimality of $p^*$, $\pcod{p^*+1}{\cdot}$ must be infeasible for $\mathrm{SDP}(p^*+1,t)$. It can be verified that $\pcod{p^*+1}{\cdot}$ is a valid $(p^*+1)$-tuple of pseudocodewords. Further, the constraints 
	\[
		\Delta_R(g,\zee_i) < \frac{k-1}{k}(\delta_0 - \eps)
	\]
	are respected for all $i\in [p^*]$. This is because 
	\[
		\pcod{p^*+1}{P(\zee)^2 \cdot \inparen{\Delta_R(g,\zee_i) - \frac{k-1}{k}(\delta_0 - \eps)}} = \pcod{p^*}{P_h(\zee)^2 \cdot \inparen{\Delta_R(g,\zee_i) - \frac{k-1}{k}(\delta_0 - \eps)}} < 0
	\]
	Further, for $i=p^*+1$,
	\begin{align*}
		\pcod{p^*+1}{P(\zee)^2 \cdot \inparen{\Delta_R(g,\zee_i) - \frac{k-1}{k}(\delta_0 - \eps)}} &= \pcod{p^*}{P_h(\zee)^2 \cdot \inparen{\Delta_R(g,h) - \frac{k-1}{k}(\delta_0 - \eps)}} \\
		&= \inparen{\Delta_R(g,h) - \frac{k-1}{k}(\delta_0 - \eps)} \cdot \pcod{p^*}{P_h(\zee)^2} \\
		&< 0
	\end{align*}
	Therefore, the infeasibility must be due to constraints of type (iii) in the SDP. In other words, there exist
	\begin{enumerate}
	\item a set $F\sub E$ with $|F|\leq \frac{t-2(p^*+1)d}{2}$,
	\item a $\sigma \in \Sigma_{\inn}^{[p^*+1]\times F}$ 
	, and 
	\item a pair $i,i'\in [p^*+1]$ with $i\neq i'$
	\end{enumerate}
	such that
	\[
		\pcod{p^*+1}{ \inparen{\Delta_L(\zee_i,\zee_{i'}) - \delta_{\out}^{\dec}} \cdot \indi{\zee_{[p^*+1],F} = \sigma}^2} < 0
	\]
	Case I. $i,i' \in [p^*]$:
	
	In this case,
	\begin{align*}
		0~>&~ \pcod{p^*+1}{ \inparen{\Delta_L(\zee_i,\zee_{i'}) - \delta_{\out}^{\dec}} \cdot \indi{\zee_{[p^*+1],F} = \sigma}} \\
		=~~ & \pcod{p^*+1}{ \inparen{\Delta_L(\zee_i,\zee_{i'}) - \delta_{\out}^{\dec}} \cdot \indi{\zee_{[p^*],F} = \sigma_1} \cdot \indi{\zee_{p^*+1,F} = \sigma_2}} \\
		=~~ & \pcod{p^*}{ \inparen{\Delta_L(\zee_i,\zee_{i'}) - \delta_{\out}^{\dec}} \cdot \indi{\zee_{[p^*],F} = \sigma_1} \cdot \indi{h_F = \sigma_2}} \\
		=~~ & \indi{h_F = \sigma_2} \cdot \pcod{p^*}{ \inparen{\Delta_L(\zee_i,\zee_{i'}) - \delta_{\out}^{\dec}} \cdot \indi{\zee_{[p^*],F} = \sigma_1} }
	\end{align*}
Because of the strict inequality with $0$, it must be the case that $\indi{h_F = \sigma_2} =1$, and the resulting equation contradicts the feasibility of $\pcod{p^*}{\cdot}$. So this case cannot happen.

\medskip
\noindent Case II. Without loss of generality, $i'=p^*+1$.
	
%
In this case,
	\begin{align*}
		0 &~>~  \pcod{p^*+1}{ \inparen{\Delta_L(\zee_i,\zee_{p^*+1}) - \delta_{\out}^{\dec}} \cdot \indi{\zee_{[p^*+1],F} = \sigma}} \\
		&=  \pcod{p^*+1}{ \inparen{\Delta_L(\zee_i,\zee_{p^*+1}) - \delta_{\out}^{\dec}} \cdot \indi{\zee_{[p^*],F} = \sigma_1} \cdot \indi{\zee_{p^*+1,F} = \sigma_2}} \\
		&=\pcod{p^*}{ \inparen{\Delta_L(\zee_i,h) - \delta_{\out}^{\dec}} \cdot \indi{\zee_{[p^*],F} = \sigma_1} \cdot \indi{h_F = \sigma_2}}\\
		&=\indi{h_F = \sigma_2} \cdot \pcod{p^*}{ \inparen{\Delta_L(\zee_i,h) - \delta_{\out}^{\dec}} \cdot \indi{\zee_{[p^*],F} = \sigma_1}}\\
	\end{align*}
	Again, as before we must have $\indi{h_F = \sigma_2} =1$, so that
	\begin{align}\label{eq:without_h}
		\pcod{p^*}{ \inparen{\Delta_L(\zee_i,h) - \delta_{\out}^{\dec}} \cdot \indi{\zee_{[p^*],F} = \sigma_1}} < 0
	\end{align}
	Finally, we may also assume $\pcod{p^*}{\indi{\zee_{[p^*+1],F} =\sigma}} >0$. If not, then
	\[
		\pcod{p^*}{ \Delta_L(\zee_i,h) \cdot \indi{\zee_{[p^*],F} = \sigma_1}} < 0
	\]
	which is impossible since $\pcod{p^*}{P(\zee)} \geq 0$ for all polynomials that are sum of squares of polynomials. It is easy to verify that $\Delta_L(\zee_i,h) \cdot \indi{\zee_{[p^*],F} = \sigma_1}$ is a sum of squares of polynomials.
	
	Therefore, dividing \cref{eq:without_h} by $\pcod{p^*}{\indi{\zee_{[p^*+1],F} =\sigma}}$, we get that the set $F$ and the assignment $\sigma_1 \in \Sigma_{\inn}^{[p^*] \times F}$ have the property that
	\[
		\pcod{p^*}{\Delta_L(\zee_i,h) \given \zee_{[p^*],F} = \sigma_1} \leq \delta_{\out}^{\dec}
	\]
	as needed.
\end{proof}

\begin{proof}[Proof of \cref{lem:pseudocodeword_list_size}]
	Suppose $\mathrm{SDP}(k,t)$ is feasible with $t = 2kd\cdot \inparen{1+\frac{144k^{2k}|\Sigma_{\inn}|^{3kd}}{(\delta_{\out}^{\dec})^2\cdot \eps^2}}$, so that there exists a $k$-tuple of pseudocodewords $\tildeEx{\cdot}$ that satisfies all the constraints in the SDP. 
	Let $\eta = \frac{\delta_{\out}^{\dec}}{12k^k} \cdot \eps$.
	
	 Then $t\geq 2kd\cdot \inparen{1+\frac{|\Sigma_{\inn}|^{3kd}}{\eta^2}}$, so that \cref{lem:condition_for_eta_good} says that there exists a set $S \sub E$ of size at most $d\cdot \frac{|\Sigma_{\inn}|^{3kd}}{\eta^2}$ such that \[\pcod{S}{\cdot} \defeq \tildeEx{~\cdot \given \zee_{[k],S}} \] is $\eta$-good.
	
	Further, since the constraint $\Delta_R(g,\zee_i) < \frac{k-1}{k}(1-\rho-\eps)$ is \emph{respected} by $\tildeEx{\cdot}$, we can also conclude that for all $i\in [k]$,
	\begin{align}\label{eqn:zee_i_in_ball}
		\pcod{S}{\Delta_R(g,\zee_i)} < \frac{k-1}{k}(1-\rho-\eps)
	\end{align}
	Moreover, for any $i,i'\in [k]$ with $i\neq i'$,
	\[
		\pcod{S}{\inparen{\Delta_L(\zee_i,\zee_{i'}) - \delta_{\out}^{\dec}}} = \Ex{\sigma \sim \calD_{[k]\times S}}{\frac{\tildeEx{ \inparen{\Delta_L(\zee_i,\zee_{i'}) - \delta_{\out}^{\dec}} \cdot \indi{\zee_{[k], S} = \sigma}^2}}{\tildeEx{\indi{\zee_{[k], S}}^2}}} \geq 0
	\]
	where $\calD_{[k]\times S}$ is the local distribution on $\zee_{[k]\times S}$ according to $\tildeEx{\cdot}$. This means that
	\[
		\pcod{S}{\Delta_L(\zee_i,\zee_{i'})} \geq \delta_{\out}^{\dec}
	\]
	Since $\lambda \leq \frac{\delta_{\out}^{\dec}}{12k^k} \cdot \eps$ and $\eta \leq \frac{\delta_{\out}^{\dec}}{12k^k} \cdot \eps$, we can apply \cref{thm:sos_main} to $\pcod{S}{\cdot}$ with $\beta = \delta_{\out}^{\dec}$ and get
	\[
		\sum_{i\in [k]} \pcod{S}{\Delta_R(g,\zee_i)} \geq \frac{k-1}{k}(1-\rho-\eps) + \pcod{S}{\Delta_R(g,\zee_{[k]})} \geq \frac{k-1}{k}(1-\rho-\eps)
	\]
	which is contradicted by \cref{eqn:zee_i_in_ball}.
\end{proof}

Finally, we instantiate \cref{thm:sos_technical} with unique decodable outer codes to obtain codes that can be efficiently decoded up to the list decoding capacity.
\begin{corollary}\label{cor:algo-main}
For every $\rho, \eps \in (0,1)$ and $k \in \N$, there exist explicit inner codes $\calC_{\inn}$ and an infinite family explicit codes $\AELC \subseteq (\F_q^d)^n$ obtained via the AEL construction that satisfy: 
\begin{enumerate}
\item $\rho(\AELC) \geq \rho$
\item For any $g \in  (\F_q^d)^n$ and any $\calH \subseteq \AELC$ with $\abs{\calH} \leq k$ that
\[
\sum_{h \in \calH} \Delta(g,h) ~\geq~ (\abs{\calH}-1) \cdot (1 - \rho - \eps) \mper
\]
\item The alphabet size $q^d$ of the code $\AELC$ can be taken to be $2^{O(k^{3k}/\eps^9)}$.
\item $\AELC$ can be decoded from radius $\frac{k-1}{k}(1-\rho-\eps)$ in time $n^{2^{O(k^{4k}/\eps^{10})}}$ with a list of size at most $k-1$.
\end{enumerate}
\end{corollary}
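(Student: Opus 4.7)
The plan is essentially to re-use the construction from \cref{cor:ael_instantiation}, which already yields an explicit family meeting items (1)--(3), and to additionally verify that the outer code $\calC_{\out}$ chosen there is unique decodable in near-linear time, so that \cref{thm:sos_technical} applies and gives item (4). I would instantiate the AEL construction with the same inner code $\calC_{\inn}$ as in \cref{cor:ael_instantiation} (either a random linear code from \cref{cor:random-code} or the explicit folded Reed--Solomon code of Chen--Zhang), the same family of expanders $G$ with degree $d=O(k^{2k}/\eps^{8})$ and $\lambda\le \eps\cdot\delta_{\out}/(12k^{k})$, and with outer code $\calC_{\out}$ taken to be an expander-based Tanner code of rate $1-\eps/4$ and constant distance $\delta_{\out}=\Theta(\eps^{3})$. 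The key additional property I need is that such Tanner codes admit a linear-time unique decoder up to a constant radius $\delta_{\out}^{\dec}=\Theta(\delta_{\out})$, which is a classical result of Sipser--Spielman (and is also cited in \cite{GRS23}).

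Items (1)--(3) follow verbatim from the proof of \cref{cor:ael_instantiation}, since none of the parameters have been weakened: the rate bound $\rho(\AELC)\ge \rho_{\out}\cdot\rho_{\inn}\ge \rho$, the average-radius bound via \cref{thm:main_technical_avg}, and the alphabet size $q^{d}=2^{O(k^{3k}/\eps^{9})}$ are all inherited directly. The only thing that changes is that by insisting on Tanner codes we also have a constant unique-decoding radius $\delta_{\out}^{\dec}$ for $\calC_{\out}$ with decoding time $\calT(n)=O(n)$.

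For item (4), I would apply \cref{thm:sos_technical} with the parameters above. The hypothesis $\lambda\le \eps\cdot\delta_{\out}^{\dec}/(12k^{k})$ is ensured by the choice of the expander family (absorbing the $\delta_{\out}/\delta_{\out}^{\dec}$ constant into $\eps$ if needed), the inner code is $(1-\rho,k,\eps/2)$ average-radius list decodable with erasures by \cref{cor:random-code} (or its folded Reed--Solomon analogue), and $\delta_{0}=1-\rho$. The theorem then outputs $\calL(g,\tfrac{k-1}{k}(1-\rho-\eps))$ in time $\calT(n)+n^{O(dk^{3k}|\Sigma_{\inn}|^{3kd}/((\delta_{\out}^{\dec})^{2}\eps^{2}))}$. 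Substituting $d=O(k^{2k}/\eps^{8})$, $|\Sigma_{\inn}|=q=2^{O(k+1/\eps)}$, and $\delta_{\out}^{\dec}=\Theta(\eps^{3})$ gives an exponent of $2^{O(k^{4k}/\eps^{10})}$, matching the stated bound. The list-size bound $k-1$ is immediate from item (2) applied at radius $\tfrac{k-1}{k}(1-\rho-\eps)$, since any ball of that radius containing $k$ codewords would contradict the average-distance inequality.

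There is no real mathematical obstacle here beyond bookkeeping: the heavy lifting is in \cref{thm:main_technical_avg} and \cref{thm:sos_technical}, and in the existence of explicit linear-time unique decodable Tanner codes. The main step requiring care is the parameter arithmetic in verifying that a single choice of $(d,\lambda,\rho_{\out},\delta_{\out}^{\dec})$ simultaneously satisfies the hypotheses of \cref{thm:main_technical_avg} (needed for items (2)--(3)) and \cref{thm:sos_technical} (needed for item (4)), and that the alphabet and time complexities work out as claimed; since both theorems require $\lambda$ to be a sufficiently small constant multiple of $\eps/k^{k}$ times a constant-distance parameter, choosing $\lambda$ to be the minimum of the two thresholds suffices.
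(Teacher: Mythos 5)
Your proposal is correct and follows essentially the same route as the paper: instantiate the AEL construction exactly as in \cref{cor:ael_instantiation} (so items (1)--(3) are inherited), observe that the expander-based outer code admits a linear-time unique decoder up to a constant radius $\delta_{\out}^{\dec} = \Theta(\eps^3)$, and invoke \cref{thm:sos_technical} with the same parameter arithmetic to get item (4). The only cosmetic difference is the reference for linear-time decoding of the outer code (the paper cites Z\'emor rather than Sipser--Spielman), and the paper simply chooses $\lambda \le \eps^4/(2^{21}k^k)$ up front so that both hypotheses on $\lambda$ hold simultaneously, rather than taking a minimum of two thresholds.
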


\begin{proof}
	The first three properties can be ensured by instantiating the AEL amplification procedure as in \cref{cor:ael_instantiation}. We briefly recall the choices in that instantiation.
	
	\begin{itemize}
		\item We picked a graph $G$ with $d = O(k^{2k}/\eps^8)$ and $\lambda \leq \eps^4/(2^{21} k^k)$. 
		\item We picked $\calC_{\inn} \subseteq \Sigma_{\inn}^d$ to be a code with rate $\rho_{\inn} = \rho + \eps/4$, which is $(1 - \rho, k, \eps/2)$ average-radius list decodable with erasures. The alphabet size $|\Sigma_{\inn}|$ for $\calC_{\inn}$ was $2^{O(k + 1/\eps)}$.
		\item Finally, we picked $\calC_{\out} \subseteq (\Sigma_{\inn}^{\rho_{\inn} \cdot d})^n$ be an outer code with rate $\rho_{\out} = 1 - \eps/4$ and distance (say) $\delta_{\out} = \eps^3/2^{15}$. We note that this code can also be decoded from a radius $\delta_{\out}^{\dec} = \frac{\eps^3}{2^{17}}$ \cite{Zemor01, GRS23} in linear time.
	\end{itemize}

The choice of $\lambda$ above is sufficient to ensure $\lambda \leq \frac{\delta_{\out}^{\dec}}{12k^k}\cdot \eps$, and so we can use \cref{thm:sos_technical} to conclude that there is a deterministic algorithm, that takes as input $g\in (\Sigma_{\inn}^d)^R$, runs in time $n^{O\inparen{ \frac{d \cdot k^{3k}|\Sigma_{\inn}|^{3kd}}{(\delta_{\out}^{\dec})^2\cdot \eps^2}}}$, and outputs the list $\calL\inparen{g,\frac{k-1}{k}\inparen{1-\rho-\eps}}$. With the above choice of parameters, the exponent of $n$ in the runtime becomes $2^{O(\frac{k^{4k}}{\eps^{10}})}$.
%
%
\end{proof}

\section*{Acknowledgements}
We are grateful to the STOC 2025 reviewers, for a careful reading of the paper, and
for helpful comments to improve the presentation.


\bibliographystyle{alphaurl}
\bibliography{macros,madhur}

\appendix
\section{Auxiliary SoS claims}\label{sec:appendix}

This appendix is adapted from \cite{JST23} to prove properties about $\eta$-good pseudocodewords and how one can obtain them via random conditioning. The key change is that we need these notions for $k$-tuples of pseudocodewords. Finally, we prove a generalization of the expander mixing lemma for such $k$-tuples of pseudowords.


\subsection{Low average correlation}

\begin{lemma}\label{lem:avg_corr}
    Let $\tildeEx{\cdot}$ be an $\eta$-good $k$-tuple of pseudocodewords.
    
    Let $\{X_{\li}\}_{\li \in L}$ be a collection of $kd$-local functions on $\Sigma_{\inn}^{[k]\times E}$, such that $X_{\li}(f)$ only depends $f_{[k],\li}$. Likewise, let $\{Y_{\ri}\}_{\ri \in R}$ be a collection of $kd$-local functions on $\Sigma_{\inn}^{[k]\times E}$, such that $Y_{\ri}(f)$ only depends $f_{[k],\ri}$.
    
    Then,
    \begin{align*}
         \tildeEx{\Ex{\li,\ri}{X_{\li}(\zee)\cdot Y_{\ri}(\zee) }} - \tildeEx{\Ex{\li}{X_{\li}(\zee)}} \cdot \tildeEx{\Ex{\ri}{Y_{\ri}(\zee)}} \leq \eta \cdot \inparen{ \max_{\li} \norm{X_{\li}}_{\infty}} \cdot \inparen{\max_{\ri} \norm{Y_{\ri}}}_{\infty}
    \end{align*}
\end{lemma}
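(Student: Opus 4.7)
The plan is to expand $X_\ell$ and $Y_r$ in the natural indicator basis over their local neighborhoods, and then reduce the claim directly to the definition of $\eta$-goodness. Since $X_\ell$ is $kd$-local with dependence only on $\zee_{[k]\times N(\ell)}$, I would write $X_\ell(\zee) = \sum_{\alpha} X_\ell(\alpha) \cdot \zee_{[k]\times N(\ell),\alpha}$, where $\alpha$ ranges over $\Sigma_\inn^{[k]\times N(\ell)}$, and analogously $Y_r(\zee) = \sum_{\beta} Y_r(\beta)\cdot \zee_{[k]\times N(r),\beta}$. The products $X_\ell \cdot Y_r$ have total degree at most $2kd$, which is within the SoS-degree of our $k$-tuple of pseudocodewords, so all applications of $\tildeEx{\cdot}$ will be valid.

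By linearity of $\tildeEx{\cdot}$ and the definition of pseudo-covariance for pairs of monomials, expanding the bilinear form would yield
\[
\tildeEx{X_\ell(\zee)\, Y_r(\zee)} - \tildeEx{X_\ell(\zee)}\cdot \tildeEx{Y_r(\zee)} ~=~ \sum_{\alpha,\beta} X_\ell(\alpha)\, Y_r(\beta)\cdot \tildecov\bigl(\zee_{[k]\times N(\ell),\alpha},\, \zee_{[k]\times N(r),\beta}\bigr).
\]
I would then apply the triangle inequality and bound $|X_\ell(\alpha)| \leq \|X_\ell\|_\infty$ and $|Y_r(\beta)| \leq \|Y_r\|_\infty$ uniformly in $\alpha,\beta$ to obtain the pointwise estimate
\[
\bigl|\tildeEx{X_\ell Y_r} - \tildeEx{X_\ell}\tildeEx{Y_r}\bigr| ~\leq~ \|X_\ell\|_\infty\cdot \|Y_r\|_\infty\cdot \tildecov\bigl(\zee_{[k]\times N(\ell)},\, \zee_{[k]\times N(r)}\bigr),
\]
where the last factor is exactly the set-level pseudo-covariance from \cref{sec:sos_prelims}, which is defined as a sum of \emph{absolute values} of individual pairwise covariances and therefore absorbs any sign cancellations arising from the triangle inequality.

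Finally, since $\ell$ and $r$ are drawn independently in the lemma, I would rewrite the left-hand side with a single outer expectation: by linearity of $\tildeEx{\cdot}$ and independence of the two samples,
\[
\tildeEx{\Ex{\ell, r}{X_\ell Y_r}} - \tildeEx{\Ex{\ell}{X_\ell}}\cdot \tildeEx{\Ex{r}{Y_r}} ~=~ \Ex{\ell, r}\bigl[\tildeEx{X_\ell Y_r} - \tildeEx{X_\ell}\,\tildeEx{Y_r}\bigr].
\]
Applying the pointwise bound, pulling $\max_\ell \|X_\ell\|_\infty$ and $\max_r \|Y_r\|_\infty$ outside the expectation, and invoking the definition of $\eta$-goodness on the remaining averaged pseudo-covariance would complete the argument. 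I do not foresee a serious obstacle here; the main care required is just using the absolute-value convention in $\tildecov(\zee_S,\zee_T)$ to pass cleanly from the signed inner sum to the set-level quantity that the $\eta$-goodness hypothesis controls.
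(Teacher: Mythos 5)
Your proof is correct and follows essentially the same route as the paper's: expand $X_\li$ and $Y_\ri$ in the indicator basis over $\Sigma_\inn^{[k]\times N(\li)}$ and $\Sigma_\inn^{[k]\times N(\ri)}$, apply the triangle inequality with sup-norm bounds on the coefficients to reduce to the set-level pseudocovariance $\tildeCov{\zee_{[k],N(\li)}}{\zee_{[k],N(\ri)}}$, and then average over $\li,\ri$ and invoke $\eta$-goodness. The only cosmetic difference is that the paper moves the expectation over $(\li,\ri)$ outside $\tildeEx{\cdot}$ as the first step and performs the remaining manipulations inside that expectation, whereas you do the pointwise bound first and then average; the two orderings are interchangeable.
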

\begin{proof}
    \begin{align*}
    	& \tildeEx{\Ex{\li,\ri}{X_{\li}(\zee)\cdot Y_{\ri}(\zee) }} - \tildeEx{\Ex{\li}{X_{\li}(\zee)}} \cdot \tildeEx{\Ex{\ri}{Y_{\ri}(\zee)}}  \\
    	=~ & \Ex{\li,\ri}{\tildeEx{X_{\li}(\zee)\cdot Y_{\ri}(\zee) } - \tildeEx{X_{\li}(\zee)} \cdot \tildeEx{Y_{\ri}(\zee)} } \\
    	=~ & \Ex{\li,\ri}{ \sum_{\substack{\alpha \in \Sigma_{\inn}^{[k]\times N(\li)} \\ \beta \in \Sigma_{\inn}^{[k]\times N(\ri)}}} X_{\li}(\alpha) \cdot Y_{\ri}(\beta) \tildeEx{Z_{[k],N(\li),\alpha} \cdot Z_{[k],N(\ri),\beta} } - \sum_{\substack{\alpha \in \Sigma_{\inn}^{[k]\times N(\li)} \\ \beta \in \Sigma_{\inn}^{[k]\times N(\ri)}}} X_{\li}(\alpha) \cdot Y_{\ri}(\beta) \tildeEx{Z_{[k],N(\li),\alpha}} \cdot \tildeEx{Z_{[k],N(\ri),\beta} } } \\
    	\leq~ & \Ex{\li,\ri}{ \norm{X_{\li}}_{\infty} \cdot \norm{Y_{\ri}}_{\infty} \sum_{\substack{\alpha \in \Sigma_{\inn}^{[k]\times N(\li)} \\ \beta \in \Sigma_{\inn}^{[k]\times N(\ri)}}} \abs*{ \tildeEx{Z_{[k],N(\li),\alpha} \cdot Z_{[k],N(\ri),\beta} } -  \tildeEx{Z_{[k],N(\li),\alpha}} \cdot \tildeEx{Z_{[k],N(\ri),\beta} } } } \\
    	=~ & \Ex{\li,\ri}{ \norm{X_{\li}}_{\infty} \cdot \norm{Y_{\ri}}_{\infty} \cdot  \tildeCov{\zee_{[k],N(\li)}}{\zee_{[k],N(\ri)}} } \\
    	\leq~ & \inparen{\max_{\li} \norm{X_{\li}}_{\infty}} \cdot \inparen{\max_{\ri} \norm{Y_{\ri}}_{\infty}} \cdot \Ex{\li,\ri}{\tildeCov{\zee_{[k],N(\li)}}{\zee_{[k],N(\ri)}}} \\
    	\leq~ & \eta \cdot \inparen{\max_{\li} \norm{X_{\li}}_{\infty}} \cdot \inparen{\max_{\ri} \norm{Y_{\ri}}_{\infty}}.\qedhere
    \end{align*}
\end{proof}

\subsection{Conditioning reduces variance}

We next show how to obtain the $\eta$-good property by randomly conditioning the $k$-tuple of pseudocodewords. This is a straightforward generalization of the argument in \cite{JST23}, which can be seen as the $k=1$ case.

Just as in \cite{JST23}, we start with a lemma from \cite{BRS11} that quantifies the decrease in variance of the local distribution corresponding to a set $S$ when conditioning on another small set $T$, in terms of covariances between the two sets $S$ and $T$.

\begin{lemma}[{\cite[Lemma 5.2]{BRS11}}]
	Let $\tildeEx{\cdot}$ be a $k$-tuple of pseudocodewords of degree $t$. Let $S,T$ be subsets of $[k]\times E$ of size at most $t/2$ each. Then,
	\[
		\tildeVar{\zee_S \given \zee_T} \leq \tildeVar{\zee_S} - \frac{1}{|\Sigma_{\inn}|^{|T|}} \cdot \sum_{\substack{\alpha \in \Sigma_{\inn}^S \\ \beta \in \Sigma_{\inn}^T}} \frac{\inparen{\tildeCov{\zee_{S,\alpha}}{\zee_{T,\beta}}}^2}{\tildeVar{\zee_{T,\beta}}}
	\]
\end{lemma}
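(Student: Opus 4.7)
The plan is to prove this lemma via the pseudoexpectation analogue of the law of total variance, carried out pointwise in $\alpha$ and then summed. Nothing here is special to the pseudo-setting: all the relations we will use follow from the SoS axioms applied to the $\leq t$-degree local functions in $\zee_S \cup \zee_T$ (which is fine since $|S|, |T| \leq t/2$), together with the standard identities $\sum_\beta \zee_{T,\beta} = 1$ and $\zee_{T,\beta}^2 = \zee_{T,\beta}$ (likewise for $\zee_{S,\alpha}$) that are built into the $k$-tuple SoS constraints.

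The first step is to establish, for each fixed $\alpha \in \Sigma_\inn^S$, the pointwise identity
\[
\tildeVar{\zee_{S,\alpha}} \;-\; \tildeVar{\zee_{S,\alpha} \given \zee_T}
~=~ \sum_{\beta \,:\, \tildeEx{\zee_{T,\beta}} > 0} \frac{\tildeCov{\zee_{S,\alpha}}{\zee_{T,\beta}}^2}{\tildeEx{\zee_{T,\beta}}} \mper
\]
Unpacking definitions, $\tildeVar{\zee_{S,\alpha}\given \zee_T}$ is $\sum_\beta \tildeEx{\zee_{T,\beta}} \cdot \tildeVar{\zee_{S,\alpha}\given \zee_T = \beta}$, and each conditional variance simplifies via $\zee_{S,\alpha}^2 = \zee_{S,\alpha}$ and $\tildeEx{\zee_{S,\alpha}\given \zee_T=\beta} = \tildeEx{\zee_{S,\alpha}\zee_{T,\beta}}/\tildeEx{\zee_{T,\beta}}$. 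Setting $\mu = \tildeEx{\zee_{S,\alpha}}$, $p_\beta = \tildeEx{\zee_{T,\beta}}$, $q_\beta = \tildeEx{\zee_{S,\alpha}\zee_{T,\beta}}$, the identity reduces to the elementary algebraic fact
\[
\sum_\beta \frac{q_\beta^2}{p_\beta} - \mu^2 ~=~ \sum_\beta \frac{(q_\beta - \mu p_\beta)^2}{p_\beta} \mcom
\]
which uses only $\sum_\beta p_\beta = 1$ and $\sum_\beta q_\beta = \mu$. Summing the pointwise identity over $\alpha \in \Sigma_\inn^S$ then gives
\[
\tildeVar{\zee_S} ~-~ \tildeVar{\zee_S \given \zee_T} ~=~ \sum_{\alpha,\beta} \frac{\tildeCov{\zee_{S,\alpha}}{\zee_{T,\beta}}^2}{\tildeEx{\zee_{T,\beta}}} \mper
\]

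The second step is to convert the denominator $\tildeEx{\zee_{T,\beta}}$ into $\tildeVar{\zee_{T,\beta}}$ at the price of the factor $1/|\Sigma_\inn|^{|T|}$. Here I would use the relation $\tildeVar{\zee_{T,\beta}} = \tildeEx{\zee_{T,\beta}}(1 - \tildeEx{\zee_{T,\beta}})$ to rewrite every summand as
\[
\frac{\tildeCov{\zee_{S,\alpha}}{\zee_{T,\beta}}^2}{\tildeEx{\zee_{T,\beta}}} ~=~ \bigl(1 - \tildeEx{\zee_{T,\beta}}\bigr) \cdot \frac{\tildeCov{\zee_{S,\alpha}}{\zee_{T,\beta}}^2}{\tildeVar{\zee_{T,\beta}}} \mcom
\]
and then perform an averaging argument over $\beta$, using that the weights $1-\tildeEx{\zee_{T,\beta}}$ sum to $|\Sigma_\inn|^{|T|} - 1$ over the $|\Sigma_\inn|^{|T|}$ values of $\beta$.

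The main obstacle is precisely in this last step: there is no pointwise inequality $1/\tildeEx{\zee_{T,\beta}} \geq \frac{1}{|\Sigma_\inn|^{|T|}\tildeVar{\zee_{T,\beta}}}$ when $\tildeEx{\zee_{T,\beta}}$ is very close to $1$. However, in that regime the pseudo-Cauchy-Schwarz bound $\tildeCov{\zee_{S,\alpha}}{\zee_{T,\beta}}^2 \leq \tildeVar{\zee_{S,\alpha}}\tildeVar{\zee_{T,\beta}}$ (which also holds for pseudoexpectations by writing $(\zee_{S,\alpha} - \mu)(\zee_{T,\beta} - p_\beta) \pm t \cdot (\zee_{T,\beta} - p_\beta)^2$ as an SoS polynomial and optimizing $t$) forces the covariance to be small as well, so that the contribution from ``concentrated'' $\beta$ is negligible and the global bound with the $1/|\Sigma_\inn|^{|T|}$ slack follows. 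Once this averaging is in place, the lemma is proved; the rest of the section then plugs it into the standard ``conditioning reduces variance'' amortization to argue that a random conditioning on $\zee_{[k],S}$ for a small set $S$ yields an $\eta$-good $k$-tuple, as used in \cref{lem:condition_for_eta_good}.
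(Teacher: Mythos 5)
This lemma is not proved in the paper at all: it is imported verbatim from \cite{BRS11} (their Lemma 5.2), so there is no in-paper argument to compare against. Your attempt is therefore a from-scratch proof, and its first step is correct: the law of total variance computation, using $\zee_{S,\alpha}^2 = \zee_{S,\alpha}$ and $\sum_\beta \zee_{T,\beta} = 1$, does give the exact identity $\tildeVar{\zee_S} - \tildeVar{\zee_S \given \zee_T} = \sum_{\alpha,\beta} \tildeCov{\zee_{S,\alpha}}{\zee_{T,\beta}}^2 / \tildeEx{\zee_{T,\beta}}$ (terms with $\tildeEx{\zee_{T,\beta}} = 0$ vanish, since pseudo-Cauchy--Schwarz forces $\tildeEx{\zee_{S,\alpha}\zee_{T,\beta}} = 0$ there).

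The gap is in your second step, and your own patch does not close it. Writing $p_\beta = \tildeEx{\zee_{T,\beta}}$ and $c_{\alpha\beta} = \tildeCov{\zee_{S,\alpha}}{\zee_{T,\beta}}$, the identity $1/p_\beta = (1-p_\beta)/\tildeVar{\zee_{T,\beta}}$ is exact, so what you must show is $\sum_\beta (1-p_\beta)\, c_{\alpha\beta}^2/\tildeVar{\zee_{T,\beta}} \geq \frac{1}{m}\sum_\beta c_{\alpha\beta}^2/\tildeVar{\zee_{T,\beta}}$ with $m = \abs{\Sigma_{\inn}}^{\abs{T}}$. Knowing only that the weights $1-p_\beta$ sum to $m-1$ is useless here: all of the mass $c_{\alpha\beta}^2/\tildeVar{\zee_{T,\beta}}$ could in principle sit on the one index $\beta_0$ whose weight $1-p_{\beta_0}$ is below $1/m$, and an ``averaging over $\beta$'' gives nothing. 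Your fallback --- that pseudo-Cauchy--Schwarz makes the concentrated term ``negligible'' --- is not an argument for an exact inequality, and the bound it yields, $c_{\alpha\beta_0}^2/\tildeVar{\zee_{T,\beta_0}} \leq \tildeVar{\zee_{S,\alpha}}$, points in the wrong direction (it upper-bounds the right-hand side by something larger than the quantity you are trying to exceed). The missing ingredient is the constraint $\sum_\beta c_{\alpha\beta} = 0$ (from $\sum_\beta \zee_{T,\beta} = 1$): since $\sum_\beta p_\beta = 1$, at most one $\beta_0$ satisfies $1 - p_{\beta_0} < 1/m$, and Cauchy--Schwarz applied to $c_{\alpha\beta_0} = -\sum_{\beta \neq \beta_0} c_{\alpha\beta}$ gives $c_{\alpha\beta_0}^2 \leq (1-p_{\beta_0})\sum_{\beta\neq\beta_0} c_{\alpha\beta}^2/p_\beta$, which bounds the deficit of the single bad term by the surplus $\inparen{1 - \frac{1}{m(1-p_\beta)}} \geq \frac{m-2}{m-1}$ carried by the remaining terms (the case $m=2$ holds with equality by direct computation). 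Without some use of $\sum_\beta c_{\alpha\beta} = 0$, the step as you wrote it would fail.
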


Now we use this lemma to track the decrease in average (pseudo-)variance $\Ex{\li \in L}{\tildeVar{\zee_{[k],N(\li)}}}$ when conditioning on $\zee_{[k],N(\ri)}$ for a random $\ri\in R$.

\begin{lemma}\label{lem:avg_conditioning}
	\begin{align*}
		\Ex{\ri\in R}{\Ex{\li\in L}{\tildeVar{\zee_{[k],N(\li)} \given \zee_{[k],N(\ri)}}}} \leq \Ex{\li\in L}{\tildeVar{\zee_{[k],N(\li)}}} - \frac{1}{|\Sigma_{\inn}|^{3kd}} \cdot \inparen{\Ex{\li,\ri}{\tildeCov{\zee_{[k],N(\li)}}{\zee_{[k],N(\ri)}}}}^2
	\end{align*}
\end{lemma}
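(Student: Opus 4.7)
\begin{proofsketch}
The plan is to apply the Barak-Raghavendra-Steurer variance-decrease lemma (the unnumbered lemma quoted just above) pointwise in $\ell$ and $r$, then average and convert the resulting sum-of-squared-covariances into a single squared total covariance via two applications of Cauchy-Schwarz (plus Jensen for the outer expectation).

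\textbf{Step 1 (apply BRS pointwise).} For each pair $(\ell, r)$, set $S = [k]\times N(\ell)$ and $T = [k]\times N(r)$, both of size $kd$. The cited lemma gives
\[
\tildeVar{\zee_{[k],N(\ell)} \given \zee_{[k],N(r)}}
~\leq~ \tildeVar{\zee_{[k],N(\ell)}}
- \frac{1}{|\Sigma_{\inn}|^{kd}} \cdot \sum_{\alpha, \beta} \frac{\bigl(\tildecov(\zee_{S,\alpha},\zee_{T,\beta})\bigr)^2}{\tildeVar{\zee_{T,\beta}}} \mper
\]

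\textbf{Step 2 (collapse the inner sum by Cauchy-Schwarz).} Using the form $\sum_i x_i^2/y_i \geq (\sum_i x_i)^2/\sum_i y_i$ applied with $x_{\alpha,\beta} = |\tildecov(\zee_{S,\alpha},\zee_{T,\beta})|$ and $y_{\alpha,\beta} = \tildeVar{\zee_{T,\beta}}$, and observing that $\sum_{\alpha,\beta} y_{\alpha,\beta} = |\Sigma_{\inn}|^{kd} \sum_\beta \tildeVar{\zee_{T,\beta}} \leq |\Sigma_{\inn}|^{kd}$ (the pseudo-variance over a local distribution is at most $1$, as already noted in the SoS preliminaries), I would conclude
\[
\sum_{\alpha,\beta} \frac{\bigl(\tildecov(\zee_{S,\alpha},\zee_{T,\beta})\bigr)^2}{\tildeVar{\zee_{T,\beta}}}
~\geq~ \frac{\bigl(\tildecov(\zee_{[k],N(\ell)},\zee_{[k],N(r)})\bigr)^2}{|\Sigma_{\inn}|^{kd}} \mcom
\]
where the numerator is the definition of $\tildecov(\zee_S, \zee_T)^2$ as the squared sum of absolute covariances.

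\textbf{Step 3 (average over $\ell, r$ and apply Jensen).} Taking $\Ex{\ell,r}{\cdot}$ of the BRS bound combined with Step 2 yields
\[
\Ex{r}{\Ex{\ell}{\tildeVar{\zee_{[k],N(\ell)} \given \zee_{[k],N(r)}}}}
~\leq~ \Ex{\ell}{\tildeVar{\zee_{[k],N(\ell)}}}
- \frac{1}{|\Sigma_{\inn}|^{2kd}} \cdot \Ex{\ell,r}{\bigl(\tildecov(\zee_{[k],N(\ell)},\zee_{[k],N(r)})\bigr)^2} \mper
\]
Jensen's inequality gives $\Ex{\ell,r}{\tildecov(\cdots)^2} \geq (\Ex{\ell,r}{\tildecov(\cdots)})^2$, and the factor $|\Sigma_{\inn}|^{-2kd}$ dominates $|\Sigma_{\inn}|^{-3kd}$, which recovers the stated bound. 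I do not foresee a real obstacle here; the only mildly delicate point is keeping the BRS lemma, Cauchy-Schwarz, and Jensen correctly composed (and noting that conditioning on $\zee_{T}$ in the BRS inequality is, by definition, an average over $\beta$ with weights $\tildeEx{\zee_{T,\beta}}$, matching our definition of conditioning). Everything else is a direct adaptation of the $k=1$ case from \cite{JST23}.
\end{proofsketch}
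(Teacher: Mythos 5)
Your proof is correct and follows essentially the same route as the paper's: apply the BRS variance-decrement lemma pointwise in $(\li,\ri)$, collapse the double sum over $(\alpha,\beta)$ by Cauchy--Schwarz into the squared total covariance $\tildeCov{\zee_{[k],N(\li)}}{\zee_{[k],N(\ri)}}^2$, and finish with Jensen. The only (harmless) deviation is in the middle step: your Engel-form Cauchy--Schwarz combined with $\sum_\beta \tildeVar{\zee_{T,\beta}} \leq 1$ actually yields the sharper constant $|\Sigma_{\inn}|^{-2kd}$, whereas the paper first drops the denominators using $\tildeVar{\zee_{T,\beta}} \leq 1$ pointwise and then applies Cauchy--Schwarz over all $|\Sigma_{\inn}|^{2kd}$ terms to get $|\Sigma_{\inn}|^{-3kd}$; either implies the stated bound.
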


\begin{proof}
	\begin{align*}
		&~~~\Ex{\ri\in R}{\Ex{\li\in L}{\tildeVar{\zee_{[k],N(\li)} \given \zee_{[k],N(\ri)}}}} \\
		~&\leq~ \Ex{\ri\in R}{\Ex{\li\in L}{\tildeVar{\zee_{[k],N(\li)}} - \frac{1}{|\Sigma_{\inn}|^{kd}} \sum_{\substack{\alpha \in \Sigma_{\inn}^{[k]\times N(\li)} \\ \beta \in \Sigma_{\inn}^{[k]\times N(\ri)}}} \frac{ \inparen{ \tildeCov{\zee_{[k],N(\li),\alpha}}{\zee_{[k],N(\ri),\beta}}}^2}{\tildeVar{\zee_{[k],N(\ri),\beta} }}}} \\
		~&\leq~ \Ex{\li\in L}{\tildeVar{\zee_{[k],N(\li)}}} - \frac{1}{|\Sigma_{\inn}|^{kd}} \cdot \Ex{\substack{\li\in L \\ \ri\in R}}{ \sum_{\substack{\alpha \in \Sigma_{\inn}^{[k]\times N(\li)} \\ \beta \in \Sigma_{\inn}^{[k]\times N(\ri)}}} \inparen{ \tildeCov{\zee_{[k],N(\li),\alpha}}{\zee_{[k],N(\ri),\beta}}}^2} \\
		~&\leq~ \Ex{\li\in L}{\tildeVar{\zee_{[k],N(\li)}}} - \frac{1}{|\Sigma_{\inn}|^{3kd}} \cdot \Ex{\substack{\li\in L \\ \ri\in R}}{ \inparen{ \sum_{\substack{\alpha \in \Sigma_{\inn}^{[k]\times N(\li)} \\ \beta \in \Sigma_{\inn}^{[k]\times N(\ri)}}}  \abs*{\tildeCov{\zee_{[k],N(\li),\alpha}}{\zee_{[k],N(\ri),\beta}}}}^2} \\
		~&=~ \Ex{\li\in L}{\tildeVar{\zee_{[k],N(\li)}}} - \frac{1}{|\Sigma_{\inn}|^{3kd}} \cdot \Ex{\substack{\li\in L \\ \ri\in R}}{ \inparen{ \tildeCov{\zee_{[k],N(\li)}}{\zee_{[k],N(\ri)}}}^2} \\
		~&=~ \Ex{\li\in L}{\tildeVar{\zee_{[k],N(\li)}}} - \frac{1}{|\Sigma_{\inn}|^{3kd}} \cdot \inparen{\Ex{\substack{\li\in L \\ \ri\in R}}{  \tildeCov{\zee_{[k],N(\li)}}{\zee_{[k],N(\ri)}}}}^2
	\end{align*}
\end{proof}

Now that we can use average correlation to quantify the decrease in variance, we show that for a $k$-tuple of pseudocodewords with high-enough degree, there always exists a constant-sized subset of $R$ such that randomly conditioning on this set gives an $\eta$-good $k$-tuple of pseudocodewords. 

\begin{lemma}\label{lem:condition_for_eta_good}
	Let $\tildeEx{\cdot}$ be a $k$-tuple of pseudocodewords of degree $t \geq 2kd\inparen{1+\frac{|\Sigma_{\inn}|^{3kd}}{\eta^2}}$. Then there exists a set $S \subseteq E$ of size at most $d\cdot \frac{|\Sigma_{\inn}|^{3kd}}{\eta^2}$ such that the conditioned pseudocodeword $\tildeEx{\cdot \given \zee_{[k],S}}$ is $\eta$-good.
\end{lemma}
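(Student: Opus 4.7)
The plan is to establish the lemma via an iterative conditioning argument, using \cref{lem:avg_conditioning} as the potential-function step. Define the potential $\Phi(\tildeEx{\cdot}) \defeq \Ex{\li \in L}{\tildeVar{\zee_{[k],N(\li)}}}$. By the general bound $\tildeVar{\zee_S} \leq 1$ noted in the preliminaries, we have $0 \leq \Phi(\tildeEx{\cdot}) \leq 1$ for any $k$-tuple of pseudocodewords of sufficiently large degree.

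I would proceed by building $S$ one right-vertex neighborhood at a time. Starting with $S_0 = \emptyset$ and $\tildeEx{\cdot}_{0} = \tildeEx{\cdot}$, at step $i$, if the current pseudocodeword $\tildeEx{\cdot}_{i}$ is already $\eta$-good, we stop and output $S = S_i$. Otherwise, by the definition of $\eta$-goodness,
\[
\Ex{\li, \ri}{\tildeCov{\zee_{[k],N(\li)}}{\zee_{[k],N(\ri)}}} > \eta,
\]
so by \cref{lem:avg_conditioning}, there exists some $\ri_i \in R$ such that conditioning on $\zee_{[k],N(\ri_i)}$ (averaged over its local distribution) reduces the potential by at least $\eta^2 / \abs{\Sigma_{\inn}}^{3kd}$. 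Concretely, the averaging over $\ri$ in \cref{lem:avg_conditioning} gives the required drop in expectation, so there is a particular $\ri_i$ achieving it; set $S_{i+1} = S_i \cup N(\ri_i)$ and let $\tildeEx{\cdot}_{i+1}$ be the corresponding conditioned pseudoexpectation.

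Since the potential lies in $[0,1]$ and strictly decreases by $\eta^2/\abs{\Sigma_{\inn}}^{3kd}$ per step, the process terminates in at most $T \defeq \abs{\Sigma_{\inn}}^{3kd}/\eta^2$ steps, giving $|S| \leq d \cdot T = d \cdot \abs{\Sigma_{\inn}}^{3kd}/\eta^2$. It remains to verify the degree bookkeeping: the indicator $\indi{\zee_{[k], N(\ri_i)} = \beta}$ is a polynomial of degree $kd$ in the $\zee$ variables, so conditioning on it reduces the SoS degree by $2kd$ (per \cref{def:constraints_on_sos} and the conditioning definition). After $T$ steps we have consumed $2kd \cdot T$ units of degree, and to apply the final $\eta$-good property meaningfully (as well as to ensure all intermediate conditionings are valid per the hypothesis of \cref{lem:avg_conditioning} and the pseudo-variance definitions) we need degree at least $2kd$ remaining; hence the total required degree is $2kd(1 + T) = 2kd(1 + \abs{\Sigma_{\inn}}^{3kd}/\eta^2)$, matching the hypothesis.

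The main (minor) obstacles are verifying that the conditioning step is well-defined throughout---we only condition on assignments $\beta$ with $\tildeEx{\zee_{[k],N(\ri_i)} = \beta} > 0$, and the averaging in \cref{lem:avg_conditioning} is automatically restricted to such $\beta$---and carefully tracking degrees so that $\tildecov$ and $\tildevar$ are defined for the conditioned operators at every stage. Both are straightforward given the degree budget $2kd(1+T)$ chosen above, so the argument goes through as an essentially mechanical potential-decrease argument.
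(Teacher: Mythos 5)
Your proposal is correct and follows essentially the same approach as the paper: both use $\Ex{\li}{\tildeVar{\zee_{[k],N(\li)}}}\in[0,1]$ as a potential and invoke \cref{lem:avg_conditioning} to extract a drop of $\eta^2/\abs{\Sigma_{\inn}}^{3kd}$ per conditioning step, yielding the same bound on $\abs{S}$ and the same degree budget. The only (cosmetic) difference is that you run a greedy loop that conditions until $\eta$-goodness holds, whereas the paper defines the expected potential $\Psi(c)$ after $c$ random conditionings and pigeonholes over $c\in\{0,\dots,C\}$ to find a step with small drop, then applies the contrapositive of \cref{lem:avg_conditioning}.
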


\begin{proof}
	For an integer $C>0$ to be chosen later, consider the following sequence of pseudocodewords, obtained by sequentially conditioning on $\zee_{[k],\ri}$ for a random $\ri$. For $c\in [C+1]$, define
	\[
		\Psi(c) = \Ex{\ri_1,\ri_2, \cdots \ri_c}{ \Ex{\li}{\tildeVar{ \zee_{[k],\li} \given \zee_{[k],\ri_1},\zee_{[k],\ri_2},\cdots ,\zee_{[k],r_c}}}}
	\]
	We have that
	\[
		1 \geq \Psi(0) \geq \Psi(1) \geq \cdots \Psi(C+1) \geq 0
	\]
	Therefore, there must exist a $c^*$, with $0\leq c^* \leq C$, such that
	\[
		\Psi(c^*) - \Psi(c^*+1) \leq \frac{1}{C+1} < \frac{1}{C} = \frac{\eta^2}{|\Sigma_{\inn}|^{3kd}}
	\]
	For an $R'\sub R$, let us denote $\pcod{R'}{\cdot} = \tildeEx{~\cdot~ \given \zee_{[k],N(R')}}$, and its associated variance and covariance operators be $\widetilde{\operatorname{Var}}^{(R')}$ and $\widetilde{\operatorname{Cov}}^{(R')}$. Then,
	\[
		\Psi(c^*) - \Psi(c^*+1) = \Ex{R' \sim R^{c^*}}{\Ex{\li}{\widetilde{\operatorname{Var}}^{(R')}\left[\zee_{[k],\li}\right]}} - \Ex{R'\sim R^{c^*}}{ \Ex{\ri_{c^*+1}}{\Ex{\li}{\widetilde{\operatorname{Var}}^{(R')}\left[\zee_{[k],\li} \given \zee_{[k], \ri_{c^*+1} }\right]}}}
	\]
	The set $R'$ above is sampled by sampling $c^*$ times from $R$ uniformly at random and with replacement. Therefore, there exists a set $R'$ of size at most $c^*$ such that
	\[
		\Ex{\li}{\widetilde{\operatorname{Var}}^{(R')}\left[\zee_{[k],\li}\right]} -  \Ex{\ri_{c^*+1}}{\Ex{\li}{\widetilde{\operatorname{Var}}^{(R')}\left[\zee_{[k],\li} \given \zee_{[k], \ri_{c^*+1} }\right]}} < \frac{\eta^2}{|\Sigma_{\inn}|^{3kd}}
	\]
	Applying the contrapositive of \cref{lem:avg_conditioning} to $\pcod{R'}{\cdot}$, we get that
		\begin{align*}
		&\Ex{\li,\ri}{\widetilde{ \operatorname{Cov}}^{(R')} \left[ \zee_{[k],\li}, \zee_{[k],\ri} \right]}\leq \eta \\
		\implies &\Ex{\li,\ri}{\widetilde{\operatorname{Cov}}\left[\zee_{[k],\li}, \zee_{[k],\ri} \given \zee_{[k],N(R')} \right]} \leq \eta
	\end{align*}
	Therefore, the set $S = N(R')$ proves the lemma.
\end{proof}

\subsection{Expander Mixing Lemma}

\begin{lemma}[EML for pseudoexpectations]\label{lem:eml_sos}
	Let $\tildeEx{\cdot}$ be a $k$-tuple of pseudocodewords of degree at least $2k d$. Let $G = (L,R,E)$ be a bipartite $\lambda$-expander. Recall that $\li \sim \ri$ if and only if $(\li,\ri)$ is an edge in $G$.
	
	Let $\{X_{\li}\}_{\li \in L}$ be a collection of $kd$-local functions on $\Sigma_{\inn}^{[k]\times E}$, such that $X_{\li}(f)$ only depends $f_{[k],\li}$. Likewise, let $\{Y_{\ri}\}_{\ri \in R}$ be a collection of $kd$-local functions on $\Sigma_{\inn}^{[k]\times E}$, such that $Y_{\ri}(f)$ only depends $f_{[k],\ri}$. 
	Then,
	\begin{align*}
		\abs*{\tildeEx{\Ex{\li\sim \ri}{X_{\li}(\zee) \cdot Y_{\ri}(\zee)} - \Ex{\li,\ri}{X_{\li}(\zee) \cdot Y_{\ri}(\zee)}}} \leq \lambda \cdot \sqrt{\tildeEx{\Ex{\li}{X_{\li}(\zee)^2}}} \cdot \sqrt{\tildeEx{\Ex{\ri}{Y_{\ri}(\zee)^2}}}
	\end{align*}
\end{lemma}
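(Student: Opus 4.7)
The plan is to mimic the spectral-gap proof of the classical expander mixing lemma, replacing ordinary Cauchy--Schwarz by its pseudoexpectation analogue. Viewing $X_\li$ and $Y_\ri$ as formal polynomials in $\zee$ of degree at most $kd$ (by $kd$-locality) and treating $X = (X_\li)_{\li \in L}$ and $Y = (Y_\ri)_{\ri \in R}$ as polynomial-valued vectors, a direct computation gives
\[
\Ex{\li \sim \ri}{X_\li(\zee)\, Y_\ri(\zee)} - \Ex{\li, \ri}{X_\li(\zee)\, Y_\ri(\zee)} ~=~ \frac{1}{nd} \cdot \inangle{X,\, M Y},
\]
where $M \defeq A_G - (d/n) \cdot \mathbf{1}\mathbf{1}^\T$. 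Since $(d/n)\mathbf{1}\mathbf{1}^\T$ is exactly the top singular component of $A_G$ (as $G$ is balanced and $d$-regular), the expansion hypothesis $\sigma_2(A_G) \leq \lambda d$ yields $\norm{M}_{\mathrm{op}} \leq \lambda d$.

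Next I would take an SVD $M = \sum_i \sigma_i u_i v_i^\T$ with $|\sigma_i| \leq \lambda d$ and orthonormal families $\{u_i\} \sub \R^L$, $\{v_i\} \sub \R^R$, and introduce the polynomials
\[
a_i(\zee) ~\defeq~ \sum_{\li \in L} u_i(\li) \cdot X_\li(\zee), \qquad b_i(\zee) ~\defeq~ \sum_{\ri \in R} v_i(\ri) \cdot Y_\ri(\zee),
\]
each of degree at most $kd$. This rewrites the quantity inside $\tildeEx{\cdot}$ as $\tfrac{1}{nd} \sum_i \sigma_i \cdot a_i b_i$, and after pulling out $|\sigma_i| \leq \lambda d$ it suffices to control $\tfrac{\lambda}{n} \sum_i |\tildeEx{a_i b_i}|$.

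The core step is pseudoexpectation Cauchy--Schwarz applied termwise: since $\deg(a_i), \deg(b_i) \leq kd$ and the SoS degree of $\tildeEx{\cdot}$ is at least $2kd$, we obtain $|\tildeEx{a_i b_i}| \leq \sqrt{\tildeEx{a_i^2}} \cdot \sqrt{\tildeEx{b_i^2}}$ for each $i$. A second, ordinary Cauchy--Schwarz on the sum over $i$, combined with $\sum_i \tildeEx{a_i^2} \leq \tildeEx{\sum_\li X_\li^2} = n \cdot \tildeEx{\Ex{\li}{X_\li^2}}$ (by linearity of $\tildeEx{\cdot}$ and the fact that $\{u_i\}$ can be completed to an orthonormal basis of $\R^L$), and the analogous bound for the $b_i$, collapses the expression to the claimed right-hand side after the factors of $n$ cancel.

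The argument is essentially routine spectral analysis lifted to pseudoexpectations; the only nontrivial item is the degree bookkeeping that legitimizes pseudo-Cauchy--Schwarz, which is guaranteed by the $kd$-locality of $X_\li$ and $Y_\ri$ together with the SoS degree assumption of at least $2kd$. I do not foresee any substantive obstacle beyond normalization bookkeeping.
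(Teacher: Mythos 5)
Your proposal is correct and follows essentially the same route as the paper: both arguments subtract the top singular component of the (bi)adjacency matrix, expand along the remaining singular vectors, and control the resulting sum via Cauchy--Schwarz for pseudoexpectations together with Parseval/orthonormality, with the degree hypothesis $t \geq 2kd$ legitimizing the pseudo-Cauchy--Schwarz step exactly as you note. The only cosmetic difference is that you apply pseudo-Cauchy--Schwarz termwise in the singular index and then an ordinary Cauchy--Schwarz over that index, whereas the paper distributes $\sqrt{\sigma_i}$ between the two factors and applies a single SoS Cauchy--Schwarz to the whole sum; the two bookkeeping choices yield the identical bound.
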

\begin{proof}
    Let $A_G$ be the $L \times R$ normalized biadjacency matrix of $G$, so that
    \[
    	A_G(\li,\ri) = \begin{cases}
    		\frac{1}{d}, \qquad (\li,\ri) \in E \\
    		0, \qquad (\li,\ri)\not\in E
    	\end{cases}
    \]
    The singular value decomposition for $A_G$ can be written as
    \[
    	A_G = \sum_{i=1}^n \sigma_i u_i v_i^T
    \]
    with $\sigma_1 \geq \cdots \geq \sigma_n \geq 0$, and $\{u_i\}_{i\in [n]}$, $\{v_i\}_{i \in [n]}$ being two orthonormal bases of $\R^n$. Moreover, since the graph is regular, $\sigma_1 =1$ and $u_1 = v_1 = \frac{1}{\sqrt{n}} \one$.
    
    Let $X(\zee)$ be the vector-valued local function defined as the $n$-dimensional vector with coordinates corresponding to $\li \in L$, and the respective entry being $X_{\li}(\zee)$. Likewise, let $Y(\zee)$ be the vector-valued local function so that coordinates correspond to $\ri\in R$, and the entries being $\{Y_{\ri}(\zee)\}_{\ri\in R}$. Then,
    \begin{align*}
    	&\abs*{\tildeEx{\Ex{\li\sim \ri}{X_{\li}(\zee) \cdot Y_{\ri}(\zee)} - \Ex{\li,\ri}{X_{\li}(\zee) \cdot Y_{\ri}(\zee)}}} \\
    	=~ &\abs*{ \tildeEx{\frac{1}{n} X(\zee)^T A_G Y(\zee)} - \tildeEx{\frac{1}{n} X(\zee)^T (\frac{1}{\sqrt{n}} \one) (\frac{1}{\sqrt{n}} \one)^T Y(\zee)}} \\
    	=~ &\frac{1}{n} \abs*{\tildeEx{X(\zee)^T \inparen{A_G - (\frac{1}{\sqrt{n}} \one) (\frac{1}{\sqrt{n}} \one)^T}Y(\zee)}} \\
    	=~ &\frac{1}{n} \abs*{\tildeEx{X(\zee)^T \inparen{\sum_
    	{i=2}^n \sigma_i u_iv_i^T}Y(\zee)}} \\
    	\leq~ &\frac{1}{n} \abs*{\tildeEx{\sum_{i=2}^n \sigma_i ~ \inparen{X(\zee)^T u_i}\cdot \inparen{Y(\zee)^T v_i}}} \\
    	=~ &\frac{1}{n} \abs*{\tildeEx{\sum_{i=2}^n \inparen{\sqrt{\sigma_i} \cdot X(\zee)^T u_i}\cdot \inparen{\sqrt{\sigma_i}\cdot Y(\zee)^T v_i}}} \\
    	\leq~ &\frac{1}{n} \inparen{\tildeEx{\sum_{i=2}^n \inparen{\sqrt{\sigma_i} \cdot X(\zee)^T u_i}^2}}^{1/2} \cdot \inparen{\tildeEx{\sum_{i=2}^n \inparen{\sqrt{\sigma_i}\cdot Y(\zee)^T v_i}^2}}^{1/2} && (\text{{SoS Cauchy--Schwarz}})\\
    	\leq~ &\frac{1}{n}\cdot \lambda \cdot \inparen{\tildeEx{\sum_{i=2}^n \inparen{ X(\zee)^T u_i}^2}}^{1/2} \cdot \inparen{\tildeEx{\sum_{i=2}^n \inparen{ Y(\zee)^T v_i}^2}}^{1/2} \\
    	=~ &\frac{1}{n} \lambda \cdot \inparen{ \tildeEx{ X(\zee)^T X(\zee) - (X(\zee)^T u_1)^2}}^{1/2} \cdot \inparen{ \tildeEx{ Y(\zee)^T Y(\zee) - (Y(\zee)^T v_1)^2}}^{1/2} && (\text{{Using orthonormality}})\\
    	\leq~ &\frac{1}{n} \lambda \cdot \inparen{ \tildeEx{ X(\zee)^T X(\zee)}}^{1/2} \cdot \inparen{ \tildeEx{ Y(\zee)^T Y(\zee)}}^{1/2} \\
    	 =~ & \lambda \cdot \inparen{ \tildeEx{\Ex{\li}{ X_{\li}(\zee)^2}}}^{1/2} \cdot \inparen{ \tildeEx{ \Ex{\ri}{Y_{\ri}(\zee)^2}}}^{1/2}. \qedhere
    \end{align*}
\end{proof}

\end{document}